\let\oldvec\vec
\documentclass{svproc}
\let\vec\oldvec
\usepackage[T1]{fontenc}
\usepackage[latin9]{inputenc}
\usepackage{units}
\usepackage{amsmath}

\usepackage{amsthm}
\usepackage{amssymb}
\usepackage{graphicx}

\usepackage{url}

\usepackage[
    unicode=true,
    bookmarks=true,
    bookmarksnumbered=true,
    bookmarksopen=true,
    bookmarksopenlevel=1,
    breaklinks=false,
    pdfborder={0 0 0},
    pdfborderstyle={},
    backref=false,
    colorlinks=false]
{hyperref}

\hypersetup{
    pdftitle={Entropy on Spin Factors},
    pdfauthor={Peter Harremo{\"e}s},
    pdfpagelayout=OneColumn, 
    pdfnewwindow=true, 
    pdfstartview=XYZ, 
    plainpages=false
}

\makeatletter

\newtheorem{thm}{Theorem}
\newtheorem{defn}[thm]{Definition}
\newtheorem{lem}[thm]{Lemma}
\newtheorem{prop}[thm]{Proposition}
\newtheorem{cor}[thm]{Corollary}


\usepackage{epsfig}

\usepackage{pgf}
\usepackage{tikz}
\usepackage{mathrsfs}
\usetikzlibrary{arrows}

\makeatother

\begin{document}
\mainmatter

\title{Entropy of Spin Factors}
\author{Peter Harremo{\"e}s} 

\institute{
    Niels Brock, Copenhagen Business College, Copenhagen,\\
   WWW home page: \texttt{http:peter.harremoes.dk},\\
     \email{harremoes@ieee.org},\\
     ORCID: 0000-0002-0441-6690
}

\maketitle

\begin{abstract}
Recently it has been demonstrated that the Shannon entropy or the
von Neuman entropy are the only entropy functions that generate a
local Bregman divergences as long as the state space has rank 3 or
higher. In this paper we will study the properties of Bregman divergences
for convex bodies of rank 2. The two most important convex bodies of
rank 2 can be identified with the bit and the qubit. We demonstrate
that if a convex body of rank 2 has a Bregman divergence that satisfies
sufficiency then the convex body is spectral and if the Bregman divergence
is monotone then the convex body has the shape of a ball. A ball can
be represented as the state space of a spin factor, which is the most
simple type of Jordan algebra. We also study the existence of recovery
maps for Bregman divergences on spin factors. In general the convex
bodies of rank 2 appear as faces of state spaces of higher rank. Therefore
our results give strong restrictions on which convex bodies could
be the state space of a physical system with a well-behaved entropy
function.
\end{abstract}

\section{Introduction}

Although quantum physics has been around for more than a century the
foundation of the theory is still somewhat obscure. Quantum theory
operates at distances and energy levels that are very far from everyday
experience and much of our intuition does not carry over to the quantum
world. Nevertheless, the mathematical models of quantum physics
have an impressive predictive power. These years many scientists try
to contribute to the development of quantum computers and it becomes
more important to pinpoint the nature of the quantum resources that
may speed up the processing of a quantum computer compared with a
classic computer. There is also an interest in extending quantum physics
to be able to describe gravity on the quantum level and maybe the
foundation of quantum theory has to be modified in order to be able
to describe gravity. Therefore the foundation of quantum theory is
not only of philosophical interest but it is also important for application
of the existing theory and for extending the theory.

A computer has to consist of some components and the smallest component
must be a memory cell. In a classical computer each memory cell can
store one bit. In a quantum computer the memory cells can store one
qubit. In this paper we will focus on such minimal memory cells and
demonstrate that under certain assumptions any such memory cell can
be represented as a so-called spin factor. We formalize the memory
cell by requiring that the state space has rank 2. In some recent
papers it was proved that a local Bregman divergence on a state
space of rank at least 3 is proportional to information divergence
and the state space must be spectral \cite{Harremoes2017,Harremoes2017a}.
Further, on a state space of rank at least 3 locality of a Bregman divergence
is equivalent to the conditions called sufficiency and monotonicity.
If the rank of the state space is 2 the situation is quite different.
First of all the condition called locality reduce almost to a triviality.
Therefore it is of interest to study sufficiency and monotonicity
on state spaces of rank 2. 

The paper is organized as follows. In the first part we study convex
bodies and use mathematical terminology without reference to physics.
The convex bodies may or may not correspond to state spaces of physical
systems. I Section 2 some basic terminology regarding convex sets
is established and the rank of a set is defined. In Section 3 regret
and Bregman divergences are defined, but for a detailed motivation
we refer to \cite{Harremoes2017}. In Section 4 spectral sets are
defined and it is proved that a spectral set of rank 2 has central symmetry. In Section 5 sufficiency of a regret function is defined
and it is proved that a convex body of rank 2 with a regret function
that satisfies sufficiency is spectral. 

Spin factors are introduced in Section 6. Spin factors appear as sections
of state spaces of physical systems described by density matrices
on complex Hilbert spaces. Therefore we will borrow some terminology
from physics. In Section 7 monotonicity of a Bregman divergence is
introduced. It is proved that a convex body with a sufficient Bregman
divergence that is monotone under dilations can be represented as a spin factor. For general spin factors we have not obtained a simple characterization of the monotone Bregman, but some partial results are presented in Section 8. In Section 9 it is
proved that equality in the inequality for a monotone Bregman divergence
implies the existence of a recovery map.

In this paper we focus on finite dimensional convex bodies.
Many of the results can easily be generalized to bounded convex set
in separable Hilbert spaces, but that woulds require that topological considerations
are taken into account.

\section{Convex Bodies of Rank 2}
In this paper we will work within a category where the objects are
\emph{convex bodies}, i.e. finite dimensional convex compact sets.
The morphisms will be \emph{affinities}, i.e. affine maps between
convex bodies. The convex bodies are candidates for state spaces of
physical systems, so a point in a convex bodies might be interpreted
as a state that may represent our knowledge of the physical system.
A convex combination $\sum p_{i}\cdot\sigma_{i}$ is interpreted as
a state where the system is prepared in state $\sigma_{i}$ with probability
$p_{i}.$ In classical physics the state space is a simplex and in
the standard formalism of quantum physics the state space is isomorphic
to the density matrices on a complex Hilbert space.

A bijective affinity will be called an \emph{isomorphism}. Let $\mathcal{K}$
and $\mathcal{L}$ denote convex bodies. An affinity $S:\mathcal{K}\to\mathcal{L}$
is called a \emph{section} if there exists an affinity $R:\mathcal{L}\to\mathcal{K}$
such that $R\circ S=id_{K},$ and such an affinity $R$ is called
a \emph{retraction}. Often we will identify a section $S:\mathcal{K}\to\mathcal{L}$
with the set $S\left(\mathcal{K}\right)$ as a subset of $\mathcal{L}.$
Note that the affinity $S\circ R:\mathcal{L}\to\mathcal{L}$ is \emph{idempotent}
and that any idempotent affinity determines a section/retraction pair.
We say that $\sigma_{0}$ and $\sigma_{1}$ are \emph{mutually singular}
if there exists a section $S:\left[0,1\right]\to K$ such that $S\left(0\right)=\sigma_{0}$
and $S\left(1\right)=\sigma_{1}.$ Such a section is illustrated on
Figure \ref{fig:section}. A retraction $R:K\to\left[0,1\right]$
is a special case of a \emph{test} \cite[p. 15]{Holevo1982} (or an
\emph{effect} as it is often called in generalized probabilistic
theories \cite{Alfsen2003}). We say that $\sigma_{0},\sigma_{1}\in\mathcal{K}$
are \emph{orthogonal} if $\sigma_{0}$ and $\sigma_{1}$ belong to
a face $\mathcal{F}$ of $\mathcal{K}$ such that $\sigma_{0}$ and
$\sigma_{1}$ are mutually singular in $\mathcal{F}.$ 

\begin{figure}[tbh]
\begin{centering}
\begin{tikzpicture}[line cap=round,scale=3, line join=round,>=triangle 45,x=1.0cm,y=1.0cm] 
\clip(-1.21875,-1.23) rectangle (1.24375,0.65125); 
 
\draw [rotate around={21.136844503046866:(0.,0.)},line width=1.2pt,color=black,fill=gray!30!white,fill opacity=0.5] (0.,0.) ellipse (1.0564005217811827cm and 0.4733053322966527cm); 
\draw [line width=1.2pt,color=blue] (-0.9982529445650812,-0.3290184740952256)-- (0.9991413137792894,0.32045854181739175); 
\draw (-1.20,-0.2175) node[anchor=north west] {$\sigma_0$}; 
\draw (1.075,0.42) node[anchor=north west] {$\sigma_1$}; 
\draw [fill=black] (-0.9982529445650812,-0.3290184740952256) circle (1pt); 
\draw [fill=black] (0.9991413137792894,0.32045854181739175) circle (1pt); 

\draw [line width=1.2pt,color=blue] (-1.,-1.)-- (1.,-1.);
\draw[color=blue] (-1.0125,-1.1268749999999976) node {0}; 
\draw[color=blue] (0.98125,-1.1081249999999976) node {1}; 

\draw [->,line width=1.2pt,dash pattern=on 6pt off 6pt] (-0.9982529445650812,-0.32901847409522555) --  (-1.,-1.) ; 
\draw [->,line width=1.2pt,dash pattern=on 6pt off 6pt] (0.9991413137792894,0.32045854181739175) -- (1.,-1.); 

\end{tikzpicture}
\par\end{centering}
\caption{\label{fig:section}A retraction with orthogonal points $\sigma_{0}$
and $\sigma_{1}$. The corresponding section is obtained by reversing
the arrows. }
\end{figure}
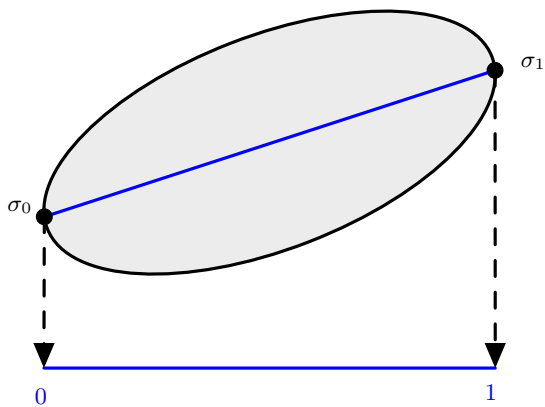

The following result was stated in \cite{Harremoes2017b} without
a detailed proof.
\begin{thm}
\label{thm:SectionGennemPunkt}If $\sigma$ is a point in a convex
body $\mathcal{K}$ then $\sigma$ can be written as a convex combination
$\sigma=\left(1-t\right)\cdot\sigma_{0}+t\cdot\sigma_{1}$ where $\sigma_{0}$
and $\sigma_{1}$ orthogonal.
\end{thm}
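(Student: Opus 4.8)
The plan is to translate the required orthogonality into elementary convex geometry and then to reduce the statement to the fact that every point of a convex body lies on an \emph{affine diameter}, i.e.\ a chord whose two endpoints carry two distinct parallel supporting hyperplanes. First I would record this translation. Since $\sigma$ may be an interior point of $\mathcal{K}$ --- in which case $\mathcal{K}$ is the only face containing it --- it suffices to write $\sigma$ as a convex combination of two points that are mutually singular in $\mathcal{K}$ itself. Two distinct points $\sigma_{0},\sigma_{1}\in\mathcal{K}$ are mutually singular in $\mathcal{K}$ exactly when there is an affinity (a test) $R\colon\mathcal{K}\to[0,1]$ with $R(\sigma_{0})=0$ and $R(\sigma_{1})=1$: given such an $R$, the map $t\mapsto(1-t)\sigma_{0}+t\sigma_{1}$ is a section with retraction $R$, and conversely a retraction of any section joining $\sigma_{0}$ to $\sigma_{1}$ must take these values. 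Extending $R$ affinely to the affine hull of $\mathcal{K}$, the level sets $\{R=0\}$ and $\{R=1\}$ are distinct parallel hyperplanes supporting $\mathcal{K}$ at $\sigma_{0}$ and $\sigma_{1}$, with $\mathcal{K}$ in the slab between them; conversely any such pair of parallel supporting hyperplanes yields, after an affine rescaling, a test with these values. Hence $\sigma_{0}$ and $\sigma_{1}$ are orthogonal as soon as $[\sigma_{0},\sigma_{1}]$ is an affine diameter, and the theorem reduces to the assertion that every point of $\mathcal{K}$ lies on an affine diameter.

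I would then work inside the affine hull of $\mathcal{K}$, so that $\mathcal{K}$ is full-dimensional (the zero-dimensional case being vacuous), and distinguish two cases. If $\sigma$ is a boundary point, choose a supporting hyperplane of $\mathcal{K}$ at $\sigma$ with outward normal $n$, put $\sigma_{1}=\sigma$ (which then maximises $\langle\,\cdot\,,n\rangle$ over $\mathcal{K}$), and let $\sigma_{0}$ be any point of $\mathcal{K}$ minimising $\langle\,\cdot\,,n\rangle$. The hyperplanes normal to $n$ through $\sigma_{0}$ and through $\sigma_{1}$ are parallel and distinct, so $[\sigma_{0},\sigma_{1}]$ is an affine diameter and $\sigma=(1-t)\sigma_{0}+t\sigma_{1}$ holds with $t=1$. (This boundary case can alternatively be obtained by induction on $\dim\mathcal{K}$, applying the theorem inside the proper face generated by $\sigma$.)

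The interior case is the substance of the argument, and there I would use a ``shrink and reflect'' construction. For $\mu\ge 0$ set $\mathcal{L}_{\mu}=(1+\mu)\sigma-\mu\mathcal{K}$, which is $\mathcal{K}$ contracted towards $\sigma$ by the factor $\mu$ and then reflected through $\sigma$. Because $\sigma$ is interior, $\mathcal{L}_{\mu}\subseteq\mathcal{K}$ for all sufficiently small $\mu>0$, and the set of $\mu$ with $\mathcal{L}_{\mu}\subseteq\mathcal{K}$ is closed and bounded, hence has a largest element $\mu^{*}>0$; by maximality and continuity, $\mathcal{L}_{\mu^{*}}$ touches $\partial\mathcal{K}$ at some point $p$. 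Pick a supporting hyperplane of $\mathcal{K}$ at $p$ with normal $n$; since $\mathcal{L}_{\mu^{*}}\subseteq\mathcal{K}$, the same hyperplane supports $\mathcal{L}_{\mu^{*}}$ at $p$, so $p$ maximises $\langle\,\cdot\,,n\rangle$ over $\mathcal{L}_{\mu^{*}}$. The affinity $x\mapsto(1+\mu^{*})\sigma-\mu^{*}x$ carries $\mathcal{K}$ onto $\mathcal{L}_{\mu^{*}}$ and reverses the sign of $\langle\,\cdot\,,n\rangle$ up to a positive factor and an additive constant, so its preimage $q:=\bigl((1+\mu^{*})\sigma-p\bigr)/\mu^{*}\in\mathcal{K}$ minimises $\langle\,\cdot\,,n\rangle$ over $\mathcal{K}$. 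Therefore $\mathcal{K}$ lies between the distinct parallel hyperplanes $\{\langle\,\cdot\,,n\rangle=\langle q,n\rangle\}$ and $\{\langle\,\cdot\,,n\rangle=\langle p,n\rangle\}$, touching them at $q$ and $p$ respectively, so $[q,p]$ is an affine diameter; rearranging the definition of $q$ gives $\sigma=\tfrac{1}{1+\mu^{*}}p+\tfrac{\mu^{*}}{1+\mu^{*}}q$, which exhibits $\sigma$ as a convex combination of the orthogonal points $\sigma_{0}=q$ and $\sigma_{1}=p$.

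The step I expect to be the main obstacle is the interior case: one must verify that the extremal contracted copy $\mathcal{L}_{\mu^{*}}$ genuinely meets $\partial\mathcal{K}$, and --- more delicately --- that pulling the common supporting hyperplane back through the reflection lands on the \emph{opposite} extreme face of $\mathcal{K}$, so that the two parallel supporting hyperplanes are distinct (which forces $p\neq q$ and makes $[q,p]$ a bona fide affine diameter through $\sigma$). The equivalence between orthogonality and the affine-diameter property, and the boundary case, are routine.
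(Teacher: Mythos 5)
Your proof is correct, but it takes a genuinely different route from the paper's. You first translate mutual singularity of $\sigma_{0},\sigma_{1}$ into the existence of a test $R:\mathcal{K}\to[0,1]$ with $R\left(\sigma_{0}\right)=0$ and $R\left(\sigma_{1}\right)=1$, i.e.\ into a pair of distinct parallel supporting hyperplanes, so the theorem becomes the classical assertion that every point of a convex body lies on an affine diameter; you then prove that assertion by the standard shrink-and-reflect argument, taking the largest $\mu$ with $\left(1+\mu\right)\sigma-\mu\mathcal{K}\subseteq\mathcal{K}$ and reading the two supporting hyperplanes off a contact point of the reflected copy with $\partial\mathcal{K}$. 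The paper instead stays entirely inside the section/retraction language: it parametrizes the chords through an interior point $\sigma$ by a boundary endpoint $\sigma_{0}$, minimizes the mixing weight $t_{\sigma_{0}}$ by continuity and compactness, and uses a retraction vanishing at the minimizing $\sigma_{0}$ to force $R\left(\sigma_{1}\right)=1$. Both arguments in fact single out the same extremal chord (your $\mu^{*}$ is, up to relabelling the endpoints, the reciprocal of the extremal odds $t_{\sigma_0}/\left(1-t_{\sigma_0}\right)$), but they certify orthogonality differently: the paper via an inequality between mixing weights obtained from a retraction, you via a supporting hyperplane at the touching point. What your version buys is that the convex-geometric content is made explicit and the two points the paper leaves implicit are handled cleanly, namely the reduction of the boundary case to a face and the reason the two parallel supporting hyperplanes are distinct (non-constancy of $\left\langle \cdot,n\right\rangle$ on a full-dimensional body); the two worries you flag at the end are in fact already resolved by the compactness and sign-reversal computations in your own text.
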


\begin{proof}
Without loss of generality we may assume that $\sigma$ is an algebraically
interior point of $\mathcal{K}.$ For any $\sigma_0$ on the boundary of
$\mathcal{K}$ there exists a $\sigma_1$ on the boundary of $\mathcal{K}$
and $t_{\sigma_0}\in]0,1[$ such that $\left(1-t_{\sigma_0}\right)\cdot\sigma_0 +t_{\sigma_0}\cdot\sigma_1 =\sigma.$
Let $R$ denote a retraction $R:\mathcal{K}\to\left[0,1\right]$ such that $R\left(\sigma_{0}\right)=0.$ 
Let $S$
denote a section corresponding to $R$ such that $S\left(0\right)=\sigma_{0}.$ Let $\pi_1$ denote the point $S(1)$.
\begin{figure}[tbh]
\begin{centering}
\begin{tikzpicture}[line cap=round,line join=round,>=triangle 45,x=5cm,y=5cm]
\clip(-0.1,-0.13522696734534642) rectangle (1.2247072731364657,1.4027796191742954);
\draw [rotate around={-52.01812173396323:(0.5,0.75)},line width=1.2pt,color=black,fill=gray!30!white,fill opacity=0.5] (0.5,0.75) ellipse (0.6403882032022076 and 0.3903882032022076);
\draw [->,line width=1.2pt,dash pattern=on 6pt off 6pt,color=black] (0.8392847490027451,0.9476257279716631)-- (0.8392847490027451,0);
\draw [line width=1.2pt,color=blue] (1,0)-- (0,0);
\draw [line width=1.2pt,color=blue] (0,1)-- (1,0.5);
\draw [line width=1.2pt,color=black] (0,1)-- (0.8392847490027451,0.9476257279716631);
\draw [line width=1.2pt,color=red] (1,0.5)-- (0.1278654304850012,1.2700071927827106);
\draw [->,line width=1.2pt,dash pattern=on 6pt off 6pt,color=black] (0,1)-- (0,0);
\draw [->,line width=1.2pt,dash pattern=on 6pt off 6pt,color=red] (1,0.5)-- (1,0);
\draw (0.88,1) node[anchor=north west] {$\sigma_1$};
\draw (0.4406254839303717,1.1) node[anchor=north west] {$\sigma$};
\draw (-0.12,1.05) node[anchor=north west] {$\sigma_0$};
\draw [color=black](0,-0.04) node[anchor=north] {0};
\draw [color=red](1,-0.04) node[anchor=north] {1};
\draw [color=red](1.039,0.52) node[anchor=north west] {$\pi_1$};
\draw [color=red](0.01,1.34) node[anchor=north west] {$\pi_0$};
\draw [->,line width=1.2pt,dash pattern=on 6pt off 6pt,color=red] (0.1278654304850012,1.2700071927827106)-- (0.1278654304850012,0);

\draw [color=red](0.008375266803935352,-0.04) node[anchor=north west] {$R\left(\pi_0\right)$};
\draw [color=black](0.7019860803324031,-0.04) node[anchor=north west] {$R\left(\sigma_1\right)$}; 
\draw [fill=red] (1,0.5) circle (2pt);
\draw [fill=black] (0,1) circle (2pt);
\draw [fill=red] (0.1278654304850012,1.2700071927827106) circle (2pt);
\draw [fill=black] (0.8392847490027451,0.9476257279716631) circle (2pt);
\draw  [fill=black] (0.46666832240342976,0.970878282140112) circle (2pt);
\end{tikzpicture}
\par\end{centering}
\caption{\label{fig:ulighed}Illustration to the proof of Theorem \ref{thm:SectionGennemPunkt}}
\end{figure}
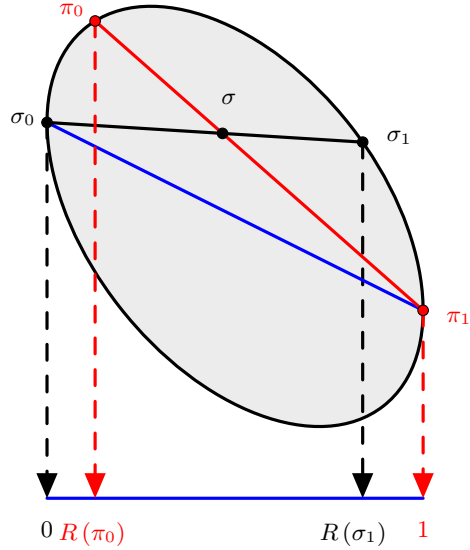
There exists a point $\pi_0$ on the boundary such that $\sigma=\left(1-t_{\pi_0}\right)\cdot\pi_0 + t_{\pi_0}\cdot \pi_1$. Then
\begin{align}
R\left(\sigma\right)&=R\left(\left(1-t_{\pi_0}\right)\cdot\pi_0 + t_{\pi_0}\cdot \pi_1\right)\\
&=\left(1-t_{\pi}\right)\cdot R\left(\pi_0 \right) + t_{\pi}\cdot R\left(\pi_1\right)\\
&\geq t_{\pi}
\end{align}
and 
\begin{align}
R\left(\sigma\right)&=R\left(\left(1-t_{\sigma_0}\right)\cdot\sigma_{0} + t_{\sigma_0}\cdot \sigma_{1}\right)\\
&=\left(1-t_{\sigma_0}\right)\cdot 0 + t_{\sigma_0}\cdot R\left(\sigma_{1}\right)\\
&=t_{\sigma_0}\cdot R\left(\sigma_{1}\right).\\
\end{align}
Therefore 
\begin{equation}\label{minimal}
t_{\sigma_0}\cdot R\left(\sigma_{1}\right)\geq t_{\pi_0}
\end{equation}
Since $t_{\sigma_0}$ is a continuous function of $\sigma_0$ the function we may choose $\sigma_0$ such that $t_{\sigma_0}$
 is minimal, but if $t_{\sigma_0}$ is minimal Inequality (\ref{minimal}) implies that $R\left(\sigma_1\right)=1$ so that $\sigma_0$
 and $\sigma_1$ are orthogonal.
\end{proof}

Iterated use of Theorem \ref{thm:SectionGennemPunkt} leads to an extended
version of Caratheodory's theorem \cite[Thm. 2]{Harremoes2017b}.
\begin{thm}[Orthogonal Caratheodory Theorem]
Let $\mathcal{K}$ denote a convex body of dimension $d$. Then any
point $\sigma\in\mathcal{K}$ has a decomposition $\sigma=\sum_{i=1}^{n}t_{i}\cdot\sigma_{i}$
where $t_{1}^{n}$ is a probability vector and $\sigma_{i}$ are orthogonal
extreme points in $\mathcal{K}$ and $n\leq d+1\, .$ 
\end{thm}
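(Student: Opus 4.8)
The plan is to iterate Theorem~\ref{thm:SectionGennemPunkt} until only extreme points remain, and then to invoke the classical Carath\'eodory theorem to cut the number of terms down to $d+1$. I would dispose of the count first: once we know that $\sigma$ lies in the convex hull of some \emph{finite} family $\sigma_{1},\dots,\sigma_{N}$ of pairwise orthogonal extreme points of $\mathcal{K}$, the classical Carath\'eodory theorem applied inside the $d$-dimensional affine hull of $\mathcal{K}$ yields a sub-family of at most $d+1$ of these points whose convex hull still contains $\sigma$; any sub-family of a pairwise orthogonal family is again pairwise orthogonal, so the pruned decomposition has the desired form. This pruning step is genuinely needed: the raw iteration overshoots --- for a square the centre is the barycentre of the four vertices, which are pairwise orthogonal, so $N=4>d+1=3$ --- so the sharp bound only comes out after invoking classical Carath\'eodory.

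The remaining work is to produce the finite pairwise orthogonal family, which I would do by recursion on $d=\dim\mathcal{K}$. Replacing $\mathcal{K}$ by the smallest face containing $\sigma$ (a face of $\mathcal{K}$, in whose relative interior $\sigma$ lies), we may assume $\sigma$ is an interior point; if that face is $\{\sigma\}$, i.e.\ $\sigma$ is extreme, we are done with $n=1$ (orthogonality vacuous). Otherwise $d\geq 1$, and Theorem~\ref{thm:SectionGennemPunkt} writes $\sigma=(1-t)\cdot\sigma_{0}+t\cdot\sigma_{1}$ with $\sigma_{0}$ and $\sigma_{1}$ orthogonal; in fact its proof constructs a retraction $R\colon\mathcal{K}\to[0,1]$ with $R(\sigma_{0})=0$ and $R(\sigma_{1})=1$. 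Put $\mathcal{G}_{i}=R^{-1}(i)$ for $i=0,1$. Since $R$ is an affinity onto $[0,1]$, each $\mathcal{G}_{i}$ is a proper face of $\mathcal{K}$, so $\dim\mathcal{G}_{i}<d$, and $\sigma_{i}\in\mathcal{G}_{i}$. Apply the recursion to $\sigma_{0}\in\mathcal{G}_{0}$ and to $\sigma_{1}\in\mathcal{G}_{1}$, obtaining decompositions into pairwise orthogonal extreme points of $\mathcal{G}_{0}$, resp.\ of $\mathcal{G}_{1}$, and combine them with weights $1-t$ and $t$. The recursion terminates because the dimension strictly drops each time.

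It remains to verify that the combined family is pairwise orthogonal, and this is the step that needs care. A face of a face of $\mathcal{K}$ is a face of $\mathcal{K}$, and an extreme point of a face is extreme in $\mathcal{K}$; hence the extreme points coming out of $\mathcal{G}_{0}$ (resp.\ $\mathcal{G}_{1}$) are extreme in $\mathcal{K}$, and two such points coming from the same $\mathcal{G}_{i}$ and orthogonal in $\mathcal{G}_{i}$ are orthogonal in $\mathcal{K}$. For a point $\tau\in\mathcal{G}_{0}$ and a point $\rho\in\mathcal{G}_{1}$, the affine map $s\mapsto(1-s)\cdot\tau+s\cdot\rho$ is a section of $R$, since $R((1-s)\cdot\tau+s\cdot\rho)=(1-s)\cdot0+s\cdot1=s$, so $\tau$ and $\rho$ are mutually singular in $\mathcal{K}$, i.e.\ orthogonal. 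Thus every pair in the combined family is orthogonal, which finishes the construction of the finite pairwise orthogonal family and, together with the pruning step, proves the theorem. I expect the only real subtlety to be exactly this bookkeeping of orthogonality across the two complementary faces $R^{-1}(0)$ and $R^{-1}(1)$; the rest is routine.
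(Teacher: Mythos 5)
Your proposal is correct and follows exactly the route the paper indicates: the paper offers no proof of this theorem beyond the remark that it follows by iterating Theorem~\ref{thm:SectionGennemPunkt} (deferring details to the cited reference), and your recursion on the faces $R^{-1}(0)$ and $R^{-1}(1)$, the cross-face orthogonality check via the section $s\mapsto(1-s)\cdot\tau+s\cdot\rho$, and the final pruning by classical Carath\'eodory supply precisely the missing details. The observation that the raw iteration can overshoot (the square) and that pruning preserves pairwise orthogonality is the right way to secure the bound $n\leq d+1$.
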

The Caratheodory number of a convex body is the maximal number of extreme points needed to decompose a point into extreme points. We need a similar definition related to orthogonal decompositions.
\begin{defn}
The\emph{ rank} of a convex body $\mathcal{K}$ is the maximal number
of orthogonal extreme points needed in an orthogonal decomposition
of a point in $\mathcal{K}.$
\end{defn}

If $\mathcal{K}$ has rank 1 then it is a singleton. Some examples of convex
bodies of rank 2 are illustrated in Figure \ref{fig:ConvexBodiesRank2}. Clearly the Caratheodory number lower bounds the rank of a convex body. 
Figure \ref{fig:CentrallySymmetricWithFace} provides an example where the Carathodory number is different from the rank.  The rest of this
paper will focus on convex bodies of rank 2.
Convex bodies of rank 2 satisfy \emph{weak spectrality} as defined
in \cite{Barnum2015}. 

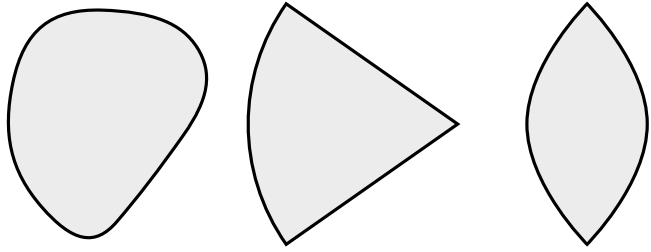
\begin{figure}[tbh]
\begin{centering}

\ifx\JPicScale\undefined\def\JPicScale{0.8}\fi
\unitlength \JPicScale mm
\begin{tikzpicture}[x=\unitlength,y=\unitlength,inner sep=0pt]

\draw [line width=1.2pt,fill=gray!30!white,fill opacity=0.5]
(24.61,57.04) .. controls (22.76,48.43) and (24.18,42.19) .. 
(29.34,36.25) .. controls (34.5,30.31) and (38.25,29.56) .. 
(41.84,33.75) .. controls (45.43,37.94) and (48.77,42.28) .. 
(52.96,48.22) .. controls (57.15,54.16) and (57.84,58.49) .. 
(55.26,62.63) .. controls (52.68,66.78) and (47.93,68.67) .. 
(39.41,68.95) .. controls (30.89,69.22) and (26.45,65.65) .. 
cycle;

\draw [line width=1.2pt,fill=gray!30!white,fill opacity=0.5](70,70) arc (144.94:215.06:34.82) -- (98.5,50) -- cycle;

\draw [line width=1.2pt,fill=gray!30!white,fill opacity=0.5]
(110,50) .. controls (110,40) and (120,30) .. 
(120,30) .. controls (120,30) and (130,40) .. 
(130,50) .. controls (130,60) and (120,70) .. 
(120,70) .. controls (120,70) and (110,60) .. 
cycle;
\end{tikzpicture}

\par\end{centering}
\caption{\label{fig:ConvexBodiesRank2}Convex bodies of rank 2. The convex
body to the left has a smooth strictly convex boundary so that any
point on the boundary has exactly one orthogonal point. The body in
the middle has a set of three extreme points that are orthogonal, but any
point can be written as a convex combination of just two points. The
convex body to the right is centrally symmetric without 1-dimensional
proper faces, i.e. it is a spectral set.}
\end{figure}

If $\mathcal{K}$ is a convex body it is sometimes convenient to consider
the cone $\mathcal{K}_{+}$ generated by $\mathcal{K}$ . The cone
$\mathcal{K}_{+}$ consist of elements of the form $x\cdot\sigma$
where $x\geq0$ and $\sigma\in\mathcal{\mathcal{K}}$. Elements of
the cone are called \emph{positive elements} and such elements can
be multiplied by positive constants via $x\cdot\left(y\cdot\sigma\right)=\left(x\cdot y\right)\cdot\sigma$
and can be added as follows. 
\begin{equation}
x\cdot\rho+y\cdot\sigma=\left(x+y\right)\cdot\left(\frac{x}{x+y}\cdot\rho+\frac{y}{x+y}\cdot\sigma\right).
\end{equation}
For a point $\sigma\in\mathcal{K}$ the \emph{trace} of $x\cdot\sigma\in\mathcal{K}_{+}$
is defined by $\mathrm{tr}\left[x\cdot\sigma\right]=x.$ The cone
$\mathcal{K}_{+}$ can be embedded in a real vector space by taking
the affine hull of the cone and use the apex of the cone as origin
of the vector space and the trace extends to a linear function on
this vector space. In this way a convex body $\mathcal{K}$ can
be identified with the set of positive elements in a vector
space with trace 1. 
\begin{lem}
Let $\mathcal{K}$ be a convex body and let $\Phi:\mathcal{K}\to\mathcal{K}$
be an affinity. Let $\Phi_{\mu}=\sum_{n=0}^{\infty}\frac{\mu^{n}}{n!}\mathrm{e}^{-\mu}\Phi^{\circ n}$.
Then $\Phi_{\infty}=\lim_{\mu\to\infty}\Phi_{\mu}$ is a retraction
of $\mathcal{K}$ onto the set of fix-points of $\Phi.$ 
\end{lem}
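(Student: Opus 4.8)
The plan is to linearise the problem and then read the limit off the Jordan form of the linearised operator. Realise $\mathcal{K}$ inside the finite-dimensional real vector space $V$ spanned by the cone $\mathcal{K}_{+}$, so that $\mathcal{K}$ is the set of positive elements of trace $1$, and extend $\Phi$ to the linear map $A:V\to V$ determined by $A\left(x\cdot\sigma\right)=x\cdot\Phi\left(\sigma\right)$; this is well defined and linear by the same computation that defines addition in the cone, it preserves $\mathcal{K}_{+}$, and $\mathrm{tr}\circ A=\mathrm{tr}$. The operator implementing $\Phi_{\mu}$ is then $A_{\mu}=\mathrm{e}^{-\mu}\sum_{n\ge0}\frac{\mu^{n}}{n!}A^{n}=\mathrm{e}^{\mu\left(A-I\right)}$, and the task is to show that $A_{\mu}$ converges, as $\mu\to\infty$, to a projection onto $\ker\left(A-I\right)$.

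First I would show that $A$ is power-bounded. Let $C$ be the closed convex hull of $\mathcal{K}\cup\left(-\mathcal{K}\right)\cup\{0\}$. Since $\mathcal{K}$ affinely spans the hyperplane $\left\{\mathrm{tr}=1\right\}$, it contains a basis $p_{1},\dots,p_{d}$ of $V$; as $C$ also contains $-p_{1},\dots,-p_{d}$ and is convex, $0$ lies in the interior of $C$, so $C$ is the unit ball of a norm on $V$. Moreover $A\left(C\right)\subseteq C$, because $A$ fixes $0$ and maps $\pm\mathcal{K}$ into $\pm\mathcal{K}\subseteq C$; hence $\left\Vert A^{n}\right\Vert\le1$ for every $n$. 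Since each $A_{\mu}\sigma$ is a Poisson-weighted convex combination of the points $\Phi^{\circ n}\left(\sigma\right)$, we likewise get $A_{\mu}\left(C\right)\subseteq C$, and in particular $A_{\mu}\left(\mathcal{K}\right)\subseteq\mathcal{K}$ for every $\mu\ge0$.

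Now pass to the Jordan decomposition of $A$ over $\mathbb{C}$. Power-boundedness forces $\left|\lambda\right|\le1$ for every eigenvalue and forces every eigenvalue on the unit circle to be semisimple, so the generalised eigenspace for $\lambda=1$ is exactly $\ker\left(A-I\right)$; it is non-trivial, because $\Phi$ is a continuous self-map of the compact convex set $\mathcal{K}$ and hence has a fixed point by Brouwer's theorem. On $\ker\left(A-I\right)$ one has $A-I=0$, so $A_{\mu}$ is the identity there for all $\mu$. On a generalised eigenspace for an eigenvalue $\lambda\neq1$ with $\left|\lambda\right|\le1$ one has $\mathrm{Re}\left(\lambda-1\right)<0$ and $A_{\mu}=\mathrm{e}^{\mu\left(\lambda-1\right)}\mathrm{e}^{\mu N}$ with $N$ nilpotent, so the norm of $A_{\mu}$ there is bounded by $\mathrm{e}^{\mu\,\mathrm{Re}\left(\lambda-1\right)}$ times a polynomial in $\mu$ and tends to $0$. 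Therefore $\Phi_{\infty}=\lim_{\mu\to\infty}A_{\mu}$ exists and equals the projection $P$ onto $\ker\left(A-I\right)$ along the sum of the remaining generalised eigenspaces; being a limit of trace-preserving real linear maps it is itself real, trace-preserving, and idempotent, and it restricts to the identity on $\ker\left(A-I\right)$.

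It remains to identify $P|_{\mathcal{K}}$ with the asserted retraction. For $\sigma\in\mathcal{K}$ the points $A_{\mu}\sigma$ lie in the closed set $\mathcal{K}$, so $P\sigma\in\mathcal{K}$; and $P\sigma\in\ker\left(A-I\right)$ gives $\Phi\left(P\sigma\right)=P\sigma$, so $P\sigma$ is a fixed point of $\Phi$. Conversely, $\Phi\sigma=\sigma$ implies $A\sigma=\sigma$ and hence $P\sigma=\sigma$. Writing $\mathcal{F}=\left\{\sigma\in\mathcal{K}:\Phi\sigma=\sigma\right\}=\mathcal{K}\cap\ker\left(A-I\right)$, which is a non-empty compact convex set, the affinity $\Phi_{\infty}:\mathcal{K}\to\mathcal{F}$ is a left inverse of the inclusion $\mathcal{F}\hookrightarrow\mathcal{K}$, i.e. a retraction of $\mathcal{K}$ onto the set of fix-points of $\Phi$. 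The crux of the argument is the linear-algebraic step: proving that $A$ is power-bounded — which is precisely what rules out Jordan blocks at $\lambda=1$ together with eigenvalues outside the closed unit disc — and then dispatching the surviving unit-modulus eigenvalues $\lambda\neq1$ through $\mathrm{Re}\left(\lambda-1\right)<0$; everything else is bookkeeping with the affine–linear correspondence.
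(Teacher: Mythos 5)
Your proof is correct and follows essentially the same route as the paper's: linearize $\Phi$ on the vector space generated by the cone, pass to the complexified Jordan form, observe that boundedness of the iterates forces $\left|\lambda\right|\leq1$ and semisimplicity at $\lambda=1$, and show the Poisson average kills every generalized eigenspace except $\ker\left(A-I\right)$. The only differences are cosmetic improvements on your side — the identity $\Phi_{\mu}=\mathrm{e}^{\mu\left(A-I\right)}$ in place of the paper's term-by-term Poisson sums, and an explicit norm (the gauge of $\mathrm{conv}\left(\mathcal{K}\cup\left(-\mathcal{K}\right)\cup\left\{ 0\right\} \right)$) witnessing power-boundedness where the paper only asserts it.
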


\begin{proof}
Since $\mathcal{K}$ is compact the affinity $\Phi$ has a fix-point that
we will call $s_{0}$. The affinity can be extended to a positive trace
preserving affinity of the real vector space generated by $\mathcal{K}$
into itself. Since $\Phi$ maps a convex body into itself all the
eigenvalues of $\Phi$ are numerically upper bounded by 1. The affinity
can be extended to a complexification of the vector space. On this
complexification of the vector space there exist a basis in which
the affinity $\Phi$ has the Jordan normal form with blocks
of the form
\begin{equation}
\left(\begin{array}{cccccc}
\lambda & 1 & 0 & \cdots & 0 & 0\\
0 & \lambda & 1 & \ddots & \ddots & \ddots\\
0 & 0 & \lambda & \ddots & \ddots & \ddots\\
\vdots & \vdots & \vdots & \ddots & \ddots & 0\\
0 & 0 & 0 & \cdots & \lambda & 1\\
0 & 0 & 0 & \cdots & 0 & \lambda
\end{array}\right)
\end{equation}
and $\Phi^{n}$ has blocks of the form
\begin{equation}
\left(\begin{array}{cccccc}
\lambda^{n} & \binom{n}{n-1}\lambda^{n-1} & \binom{n}{n-2}\lambda^{n-2} & \ddots & \binom{n}{n-\ell+2}\lambda^{n-\ell+2} & \binom{n}{n-\ell+1}\lambda^{n-\ell+1}\\
0 & \lambda^{n} & \binom{n}{1}\lambda^{n-1} & \ddots & \ddots & \ddots\\
0 & 0 & \lambda^{n} & \ddots & \ddots & \ddots\\
\vdots & \vdots & \vdots & \ddots & \ddots & \ddots\\
0 & 0 & 0 & \cdots & \lambda^{n} & \binom{n}{n-1}\lambda^{n-1}\\
0 & 0 & 0 & \cdots & 0 & \lambda^{n}
\end{array}\right).
\end{equation}
Now 
\begin{align}
\sum_{n=0}^{\infty}\frac{\mu^{n}}{n!}\mathrm{e}^{-\mu}\binom{n}{n-j}\lambda^{n-j} & =\sum_{n=j}^{\infty}\frac{\mu^{n}}{\left(n-j\right)!j!}\mathrm{e}^{-\mu}\lambda^{n-j}\\
 & =\frac{\mu^{j}}{j!}\sum_{n=j}^{\infty}\frac{\mu^{n-j}}{\left(n-j\right)!}\mathrm{e}^{-\mu}\lambda^{n-j}
\end{align}
tends to zero for $\mu$ tending to infinity except if $\lambda=1$.
If $\lambda=1$ then there is no uniform upper bound on $\Phi^{n}$
except if the Jordan block is diagonal. Therefore $\Phi_{\mu}=\sum_{n=0}^{\infty}\frac{\mu^{n}}{n!}\mathrm{e}^{-\mu}\Phi^{\circ n}$
convergences to a map $\Phi_{\infty}$ that is diagonal with eigenvalues
0 and 1, i.e. a idempotent. Since $\Phi$ and $\Phi_{\infty}$ commute
they have the same fix-points.
\end{proof}
\begin{prop}
Let $\Phi$ denote an affinity $\mathcal{K}\to\mathcal{L}$ and let
$\Psi$ denote an affinity $\mathcal{L}\to\mathcal{K}.$ Then the
set of fix-points of $\Psi\circ\Phi$ is a section of $\mathcal{K}$
and the set of fix-points of $\Phi\circ\Psi$ is a section of $\mathcal{L}$.
The affinities $\Phi$ and $\Psi$ restricted to the fix-point sets
are isomorphisms between these sets.
\end{prop}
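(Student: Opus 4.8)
The plan is to build everything on the preceding Lemma, which already tells us that the fixed-point set of any affinity of a convex body into itself is the image of an idempotent affinity. First I would observe that $\Psi\circ\Phi$ is an affinity $\mathcal{K}\to\mathcal{K}$ and that $\Phi\circ\Psi$ is an affinity $\mathcal{L}\to\mathcal{L}$, so the Lemma applies to both. Write $\mathcal{F}_{\mathcal{K}}$ for the set of fixed-points of $\Psi\circ\Phi$ and $\mathcal{F}_{\mathcal{L}}$ for the set of fixed-points of $\Phi\circ\Psi$. By the Lemma, $\left(\Psi\circ\Phi\right)_{\infty}$ is an idempotent affinity of $\mathcal{K}$ whose set of fix-points, and hence image, is $\mathcal{F}_{\mathcal{K}}$; since any idempotent affinity determines a section/retraction pair (as recorded in Section 2), $\mathcal{F}_{\mathcal{K}}$ is a section of $\mathcal{K}$, and likewise $\mathcal{F}_{\mathcal{L}}$ is a section of $\mathcal{L}$.

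The second step is to check that $\Phi$ and $\Psi$ intertwine the two fixed-point sets. If $\sigma\in\mathcal{F}_{\mathcal{K}}$, i.e.\ $\Psi\left(\Phi\left(\sigma\right)\right)=\sigma$, then applying $\Phi$ to both sides gives $\left(\Phi\circ\Psi\right)\left(\Phi\left(\sigma\right)\right)=\Phi\left(\sigma\right)$, so $\Phi\left(\sigma\right)\in\mathcal{F}_{\mathcal{L}}$; by the symmetric argument $\Psi$ maps $\mathcal{F}_{\mathcal{L}}$ into $\mathcal{F}_{\mathcal{K}}$. Thus the restrictions $\Phi|_{\mathcal{F}_{\mathcal{K}}}\colon\mathcal{F}_{\mathcal{K}}\to\mathcal{F}_{\mathcal{L}}$ and $\Psi|_{\mathcal{F}_{\mathcal{L}}}\colon\mathcal{F}_{\mathcal{L}}\to\mathcal{F}_{\mathcal{K}}$ are well-defined affinities.

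Finally, by the very definition of $\mathcal{F}_{\mathcal{K}}$ the composite $\Psi\circ\Phi$ is the identity on $\mathcal{F}_{\mathcal{K}}$, and $\Phi\circ\Psi$ is the identity on $\mathcal{F}_{\mathcal{L}}$, so the two restrictions above are mutually inverse. Hence each of them is a bijective affinity, i.e.\ an isomorphism, between $\mathcal{F}_{\mathcal{K}}$ and $\mathcal{F}_{\mathcal{L}}$. I do not expect a genuine obstacle here: the argument is essentially bookkeeping, and the only point that needs a moment's care is the observation that compositions of affinities between convex bodies are again affinities between convex bodies, so that the Lemma is legitimately applicable to $\Psi\circ\Phi$ and to $\Phi\circ\Psi$ in the first place.
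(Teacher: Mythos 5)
Your argument is correct and is exactly the intended one: the paper states this proposition without proof, immediately after the Lemma on $\Phi_{\infty}$, clearly expecting the reader to apply that Lemma to $\Psi\circ\Phi$ and $\Phi\circ\Psi$ and then do the bookkeeping you carry out. Your proposal fills in that omitted proof correctly, including the check that the restrictions of $\Phi$ and $\Psi$ are mutually inverse affinities between the two fix-point sets.
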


\section{Regret and Bregman Divergences}

Consider a payoff function where the payoff may represent extracted
energy or how much data can be compressed or something else. Our payoff
depends both of the state of the system and of some choice that we
can make. Let $F\left(\sigma\right)$ denote the maximal mean payoff
when our knowledge is represented by $\sigma.$ Then $F$ is a convex
function on the convex body. 

The two most important examples are squared the Euclidean norm squared $F\left(\vec{v}\right)=\left\Vert \vec{v}\right\Vert _{2}^{2}$ 
defined on a vector space and minus the von Neuman entropy $F\left(\sigma\right)=\mathrm{tr}\left[\sigma\ln\left(\sigma\right)\right].$  
Note that Shannon entropy may be considered as a special case of von Neuman entropy when all operators commute. 
We may also consider $F\left(\sigma\right)=-S_{\alpha}\left(\sigma\right)$ where the Tsallis entropy of order $\alpha>0$ is defined by

\begin{equation}
S_{\alpha}\left(\sigma\right)=-\mathrm{tr}\left[\sigma\log_{\alpha}\left(\sigma\right)\right]
\end{equation}
and where the logarithm of order $\alpha \neq 1$ is given by
\begin{equation}
\log_{\alpha}\left(x\right)=\frac{x^{\alpha -1}-1}{\alpha -1} \, 
\end{equation}
and $\log_{1}\left(x\right)=\ln\left(x\right)$ . 
We will study such entropy functions via the corresponding regret functions that are defined by:

\begin{defn}
Let $F$ denote a convex function defined on a convex body $\mathcal{K}$.
For $\rho,\sigma\in\mathcal{K}$ we define the \emph{regret function}
$D_{F}$ by
\begin{equation}
D_{F}\left(\rho,\sigma\right)=F\left(\rho\right)-\left(F\left(\sigma\right)+\lim_{t\to0_{+}}\frac{F\left(\left(1-t\right)\cdot\sigma+t\cdot\rho\right)-F\left(\sigma\right)}{t}\right).\label{eq:Regret}
\end{equation}
The regret function $D_F$ is \emph{strict} if $D_{F}\left(\rho,\sigma\right)=0$ implies that $\rho =\sigma$ . If $F$ is differentiable
the regret function is called a \emph{Bregman divergence}.
\end{defn}

The interpretation of the regret function is that $D_{F}\left(\rho,\sigma\right)$
tells how much more payoff one could have obtained if the state is $\rho$ but one act as if the
state was $\sigma$. This is illustrated
in Figure \ref{fig:regret}.

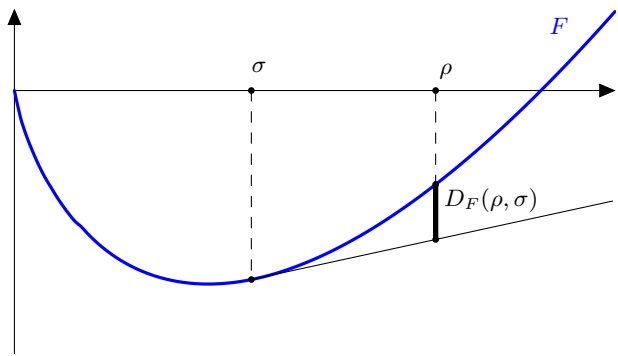
\begin{figure}[tbh]
\begin{centering}
\begin{tikzpicture}[scale=0.70,line cap=round,line join=round,>=triangle 45,x=10.0cm,y=10.0cm] 
\draw[->,color=black] (0.,0.) -- (1.1419478737997253,0.); 
\draw[->,color=black] (0.,-0.5) -- (0.,0.15580246913580265); 
\clip(-0.06431,-0.5008779149519884) rectangle (1.1419478737997253,0.15580246913580265); 
\draw[line width=1.2pt,color=blue,smooth,samples=100,domain=0:1.1419478737997253] plot(\x,{(\x)*ln((\x)+1.0E-4)}); 
 
\draw (0.45,-0.35922847440746253) -- (1.1362, -0.21); 
\draw [dash pattern=on 4pt off 4pt] (0.8,0.)-- (0.8,-0.178414847300847); 
 
\draw [line width=2.pt] (0.8,-0.178414847300847)-- (0.8,-0.283); 
\draw [dash pattern=on 4pt off 4pt] (0.45,0.)-- (0.45,-0.35922847440746253); 
\draw (0.4343468211923174,0.07152263374485619) node[anchor=north west] {$\sigma$}; 
\draw (0.790781098312178,0.0680109739369001) node[anchor=north west] {$\rho$}; 
\draw (0.8,-0.17) node[anchor=north west] {$D_{F}(\rho ,\sigma)$}; 
\draw (1,0.1540466392318246) node[anchor=north west, color=blue] {$F$}; 

\draw [fill=black] (0.8,-0.178414847300847) circle (1.5pt); 
\draw [fill=black] (0.45,-0.35922847440746253) circle (1.5pt); 
\draw [fill=black] (0.8,-0.283) circle (1.5pt); 
\draw [fill=black] (0.45,0.) circle (1.5pt); 
\draw [fill=black] (0.8,0.) circle (1.5pt); 

\end{tikzpicture}
\par\end{centering}
\caption{The regret equals the vertical distance between the curve and the
tangent.\label{fig:regret}}
\end{figure}

The two most important examples of Bregman divergences are squared Euclidean distance $\left\Vert \vec{v}-\vec{w}\right\Vert _{2}^{2}$
that is generated by the squared Euclidean norm and information divergence 
\begin{equation}
D\left(\rho\Vert\sigma\right)=\mathrm{tr}\left[\rho\left(\ln\left(\rho\right)-\ln\left(\sigma\right)\right)-\rho+\sigma\right]
\end{equation}
that is generated by minus the von Neuman entropy. The Bregman divergence generated by $-S_{\alpha}$ is called the Bregman divergence of order $\alpha$ and is denoted $D_{\alpha}\left(\rho ,\sigma\right)$ . Various examples of payoff functions and corresponding regret functions
are discussed in \cite{Harremoes2017} where some basic properties
of regret functions are also discussed. If $F$ is differentiable
the regret function is a Bregman divergence and the
formula (\ref{eq:Regret}) reduces to
\begin{equation}
D_{F}\left(\rho,\sigma\right)=F\left(\rho\right)-\left(F\left(\sigma\right)+\left\langle \nabla F\left(\sigma\right)\mid \rho-\sigma \right\rangle \right).
\end{equation}
Bregman divergences were introduced in \cite{Bregman1967}, but they
only gained popularity after their properties were investigated i
great detail in \cite{Banerjee2005}. A Bregman divergence satisfies the \emph{Bregman
equation}
\begin{equation}
\sum t_{i}\cdot D_{F}\left(\rho_{i}, \sigma\right)=\sum t_{i}\cdot D_{F}\left(\rho_{i}, \bar{\rho}\right)+D_{F}\left(\bar{\rho}, \sigma\right)
\end{equation}
 where $\left(t_{1},t_{2},\dots\right)$ is a probability vector and
$\bar{\rho}=\sum t_{i}\cdot\rho_{i}$ .

Assume that $D_{F}$ is a Bregman divergence on the convex body $\mathcal{K}.$ 
If the state is not know exactly but we know that $s$ is one of the
states $s_{1},s_{2},\dots,s_{n}$ then the \emph{minimax regret} is
defined as 
\begin{equation}
C_{F}=\inf_{\sigma\in\mathcal{K}}\sup_{\rho\in\mathcal{K}}D_{F}\left(\rho,\sigma\right).
\end{equation}
The point $\sigma$ that achieves the minimax regret will be denoted
by $\sigma_{opt}.$ 

\begin{thm}
If $\mathcal{K}$ is a convex body with a Bregman divergence $D_{F}$ and with a probability vector $\left(t_{1},t_{2},\dots,t_{n}\right)$ 
on the points $\rho_{1},\rho_{2},\dots,\rho_{n}$ with $\bar{\rho}=\sum t_{i}\cdot\rho_{i}$
and $\sigma_{opt}$ achieves the minimax regret then 
\begin{equation}\label{eq:redundansulighed}
C_{F}\geq\sum t_{i}\cdot D_{F}\left(\rho_{i},\bar{\rho}\right)+D_{F}\left(\bar{\rho},\sigma_{opt}\right).
\end{equation}
\end{thm}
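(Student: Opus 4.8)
The plan is to combine the Bregman equation with the defining property of $\sigma_{opt}$. First I would apply the Bregman equation to the probability vector $\left(t_{1},t_{2},\dots,t_{n}\right)$, the points $\rho_{1},\dots,\rho_{n}$, their barycenter $\bar{\rho}=\sum t_{i}\cdot\rho_{i}$, and the reference point $\sigma=\sigma_{opt}$. This gives
\begin{equation}
\sum t_{i}\cdot D_{F}\left(\rho_{i},\sigma_{opt}\right)=\sum t_{i}\cdot D_{F}\left(\rho_{i},\bar{\rho}\right)+D_{F}\left(\bar{\rho},\sigma_{opt}\right),
\end{equation}
so it suffices to show that the left-hand side is bounded above by $C_{F}$.

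For this I would use that $\sigma_{opt}$ attains the minimax regret, so that $C_{F}=\sup_{\rho\in\mathcal{K}}D_{F}\left(\rho,\sigma_{opt}\right)$. In particular $D_{F}\left(\rho_{i},\sigma_{opt}\right)\leq C_{F}$ for every $i$, and taking the convex combination with weights $t_{i}$ (using $\sum t_{i}=1$) yields $\sum t_{i}\cdot D_{F}\left(\rho_{i},\sigma_{opt}\right)\leq C_{F}$. Substituting this bound into the displayed Bregman equation gives Inequality (\ref{eq:redundansulighed}).

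There is essentially no serious obstacle in the argument; the only point requiring a word of care is that $\sigma_{opt}$ exists, i.e. the infimum defining $C_{F}$ is attained, which follows from compactness of $\mathcal{K}$ together with lower semicontinuity of $\sigma\mapsto\sup_{\rho}D_{F}\left(\rho,\sigma\right)$, but this is already part of the hypothesis. It is worth noting that the inequality is tight exactly when each $\rho_{i}$ lying in the support of $\left(t_{i}\right)$ is a maximizer of $\rho\mapsto D_{F}\left(\rho,\sigma_{opt}\right)$, which is the case relevant to the redundancy-type identities used later.
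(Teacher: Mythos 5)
Your proposal is correct and is essentially the paper's own argument: both rest on the Bregman equation applied to $\left(t_{i}\right)$, $\rho_{i}$, $\bar{\rho}$, $\sigma_{opt}$ together with the bound $D_{F}\left(\rho_{i},\sigma_{opt}\right)\leq C_{F}$, merely written in the reverse order. No further comment is needed.
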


\begin{proof}

If $\sigma_{opt}$ is optimal then 
\begin{eqnarray}
C_{F} & = & \sum_{i}t_{i}\cdot C_{F}\\
 & \geq & \sum_{i}t_{i}\cdot D_{F}\left(\rho_{i},\sigma_{opt}\right)\\
 & = & \sum_{i}t_{i}\cdot D_{F}\left(\rho_{i},\bar{\rho}\right)+D_{F}\left(\bar{\rho},\sigma_{opt}\right)
\end{eqnarray}
which proves Inequality (\ref{eq:redundansulighed}).
\end{proof}

One can formulate a minimax theorem for divergence, but we will prove a result that is stronger than a minimax theorem in the sense that it gives an upper bound on how close a specific strategy is to the optimal strategy. First we need the following lemma.

\begin{lem}
Let $\mathcal{K}$ be a convex body with a Bregman divergence $D_F$ that is lower semi-continuous. Let $\mathcal{L}$ denote a closed convex subset of $\mathcal{K}$. For any $\sigma\in\mathcal{K}$ there exists a point $\sigma^{*}\in\mathcal{L}$ such that 
\begin{equation}\label{eq:pythagoras}
D_{F}\left(\rho,\sigma\right)\geq D_{F}\left(\rho,\sigma^{*}\right)+D_{F}\left(\sigma^{*},\sigma\right) 
\end{equation}
for all $\rho\in\mathcal{L}$. In particular $\sigma^*$ minimizes $D_{F}\left(\rho,\sigma\right)$ under the constraint that $\rho\in\mathcal{L}$.
\end{lem}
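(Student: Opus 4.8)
The plan is to produce $\sigma^{*}$ as the \emph{Bregman projection} of $\sigma$ onto $\mathcal{L}$ and then extract the Pythagorean inequality (\ref{eq:pythagoras}) from the first‑order optimality condition at that projection. Since $\mathcal{K}$ is compact and $\mathcal{L}\subseteq\mathcal{K}$ is closed, $\mathcal{L}$ is compact; and since $D_{F}$ is lower semi‑continuous, the function $\rho\mapsto D_{F}\left(\rho,\sigma\right)$ attains its infimum on $\mathcal{L}$. I would let $\sigma^{*}$ be any point at which this infimum is attained, so that the final ``in particular'' assertion holds by construction; the work is then to upgrade plain minimality to the stronger inequality.

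Next, fix an arbitrary $\rho\in\mathcal{L}$ and look at the segment $\rho_{t}=\left(1-t\right)\cdot\sigma^{*}+t\cdot\rho$, which lies in $\mathcal{L}$ for $t\in\left[0,1\right]$ by convexity. I would study the one‑variable function $g\left(t\right)=D_{F}\left(\rho_{t},\sigma\right)=F\left(\rho_{t}\right)-F\left(\sigma\right)-\left\langle \nabla F\left(\sigma\right)\mid\rho_{t}-\sigma\right\rangle$, which is the restriction of the convex function $F$ to a line plus an affine term in $t$, hence convex on $\left[0,1\right]$. Because $\sigma^{*}$ minimizes $D_{F}\left(\cdot,\sigma\right)$ over $\mathcal{L}$ and $\rho_{t}\in\mathcal{L}$, we have $g\left(t\right)\geq g\left(0\right)$, so the right derivative $g'\left(0^{+}\right)$ exists and is non‑negative. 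Using differentiability of $F$ at $\sigma^{*}$, one has $\lim_{t\to0^{+}}\frac{F\left(\rho_{t}\right)-F\left(\sigma^{*}\right)}{t}=\left\langle \nabla F\left(\sigma^{*}\right)\mid\rho-\sigma^{*}\right\rangle$, and a short manipulation with the definition of the Bregman divergence yields the three‑point identity $D_{F}\left(\rho,\sigma\right)-D_{F}\left(\rho,\sigma^{*}\right)-D_{F}\left(\sigma^{*},\sigma\right)=\left\langle \nabla F\left(\sigma^{*}\right)-\nabla F\left(\sigma\right)\mid\rho-\sigma^{*}\right\rangle=g'\left(0^{+}\right)\geq0$, which is exactly (\ref{eq:pythagoras}). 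Since regret functions are non‑negative, $D_{F}\left(\rho,\sigma^{*}\right)\geq0$, so (\ref{eq:pythagoras}) also forces $D_{F}\left(\rho,\sigma\right)\geq D_{F}\left(\sigma^{*},\sigma\right)$.

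I expect the only delicate point to be the differentiability step at $\sigma^{*}$ when $\sigma^{*}$ lies on the boundary of $\mathcal{K}$: there one should argue directly with the one‑sided directional derivative appearing in (\ref{eq:Regret}) rather than with a gradient, and check that the right derivative of $g$ at $0$ is precisely the directional derivative of $F$ at $\sigma^{*}$ in the direction $\rho-\sigma^{*}$ minus the corresponding linear term. Everything else is routine bookkeeping with the Bregman equation.
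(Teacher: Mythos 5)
Your proof is correct, and it starts exactly where the paper starts: take $\sigma^{*}$ to be a minimizer of $\rho\mapsto D_{F}\left(\rho,\sigma\right)$ on the compact set $\mathcal{L}$ (existence from lower semi-continuity) and probe it along the segment $\rho_{t}=\left(1-t\right)\cdot\sigma^{*}+t\cdot\rho$. The two arguments diverge in how the limit $t\to0$ is extracted. The paper applies the Bregman equation to the mixture $\left(1-t\right)\cdot\sigma^{*}+t\cdot\rho$, discards the term $\left(1-t\right)\cdot D_{F}\left(\sigma^{*},\rho_{t}\right)\geq0$, invokes minimality to bound $D_{F}\left(\rho_{t},\sigma\right)\geq D_{F}\left(\sigma^{*},\sigma\right)$, and then needs lower semi-continuity a \emph{second} time to pass from $D_{F}\left(\rho,\rho_{t}\right)$ to $D_{F}\left(\rho,\sigma^{*}\right)$ as $t\to0$. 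You instead differentiate $g\left(t\right)=D_{F}\left(\rho_{t},\sigma\right)$ from the right at $t=0$ and identify $g'\left(0^{+}\right)$ with the three-point quantity $D_{F}\left(\rho,\sigma\right)-D_{F}\left(\rho,\sigma^{*}\right)-D_{F}\left(\sigma^{*},\sigma\right)$, so the Pythagorean inequality falls out of $g'\left(0^{+}\right)\geq0$ with no further continuity argument; semi-continuity is used only once, to produce the minimizer. Your closing caveat is handled automatically by the paper's conventions: the regret function is \emph{defined} through the one-sided directional derivative $\lim_{t\to0_{+}}\frac{F\left(\left(1-t\right)\cdot\sigma^{*}+t\cdot\rho\right)-F\left(\sigma^{*}\right)}{t}$, so your identity for $g'\left(0^{+}\right)$ holds verbatim at a boundary $\sigma^{*}$ without ever invoking $\nabla F\left(\sigma^{*}\right)$. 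Both routes are sound; yours is the classical three-point-lemma argument and trades the paper's second appeal to lower semi-continuity for the one-sided differentiability already built into the definition of $D_{F}$.
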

\begin{proof}
Using that $\mathcal{L}$ is closed and lower semicontinuity of $D_F$ we find a point $\sigma^* \in \mathcal{L}$ that minimizes $D_{F}\left(\rho,\sigma\right)$ under the constraint that $\rho\in\mathcal{L}$. Define 
\begin{equation}
\rho_t =\left(1-t\right)\cdot\sigma^* +t\cdot\rho .
\end{equation}
Then according to the Bregman equation
\begin{multline}
\left(1-t\right)\cdot D_{F}\left(\sigma^* ,\sigma\right)+t\cdot D_{F}\left(\rho,\sigma\right)
=\left(1-t\right)\cdot D_{F}\left(\sigma^* ,\rho_{t}\right)+t\cdot D_{F}\left(\rho,\rho_{t}\right)+D_{F}\left(\rho_{t},\sigma\right)\\
\geq t\cdot D_{F}\left(\rho,\rho_{t}\right)+D_{F}\left(\sigma^* ,\sigma\right)
\end{multline}
After reorganizing the terms and dividing by $t$ we get
\begin{equation}
D_{F}\left(\rho,\sigma\right) \geq D_{F}\left(\rho,\rho_{t}\right)+ D_{F}\left(\sigma^* ,\sigma\right)\, .
\end{equation}
Inequality (\ref{eq:pythagoras})  is obtained by letting $t$ tend to zero and using lower semi-continuity.
\end{proof}

\begin{thm}
If $\mathcal{K}$ is a convex body with a Bregman divergence $D_{F}$ that is lower semi-continuous in both variables and such that $F$ is continuously differentiable $C^1$. Then 
\begin{equation}\label{eq:minimax}
C_{F}=\sup_{\vec{t}}\sum_{i}t_{i}\cdot D_{F}\left(\rho_{i},\bar{\rho}\right)
\end{equation}
where the supremum is taken over all probability vectors $\vec{t}$
supported on $\mathcal{K}$. Further the following inequality holds
\begin{equation}\label{eq:maxred}
\sup_{\rho\in\mathcal{K}}D_{F}\left(\rho,\sigma\right)\geq C_{F}+D_{F}\left(\sigma_{opt},\sigma\right)
\end{equation}
for all $\sigma$.
\end{thm}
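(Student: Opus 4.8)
The plan is to derive both identities from inequality (\ref{eq:redundansulighed}) together with the existence of an optimal input distribution, which I would obtain by a compactness argument, and then to read off (\ref{eq:minimax}) and (\ref{eq:maxred}) from a first-order optimality condition for that distribution.

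Write $R^{*}$ for the right-hand side of (\ref{eq:minimax}). Inequality (\ref{eq:redundansulighed}) gives $C_{F}\ge\sum_{i}t_{i}\cdot D_{F}(\rho_{i},\bar{\rho})+D_{F}(\bar{\rho},\sigma_{opt})\ge\sum_{i}t_{i}\cdot D_{F}(\rho_{i},\bar{\rho})$ for every probability vector, so $C_{F}\ge R^{*}$ and the content is the reverse inequality. The starting observation is the identity
\[
\sum_{i}t_{i}\cdot D_{F}(\rho_{i},\bar{\rho})=\sum_{i}t_{i}\cdot F(\rho_{i})-F(\bar{\rho}),
\]
valid because the tangent term of a Bregman divergence is linear in its first argument and $\sum_{i}t_{i}\cdot(\rho_{i}-\bar{\rho})=0$. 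Hence $R^{*}$ is the supremum of the Jensen gap $\Phi(\vec{t},\vec{\rho})=\sum_{i}t_{i}\cdot F(\rho_{i})-F\bigl(\sum_{i}t_{i}\cdot\rho_{i}\bigr)$, and by Caratheodory's theorem applied to the graph $\{(\rho,F(\rho)):\rho\in\mathcal{K}\}\subset\mathbb{R}^{d+1}$ this supremum is unchanged if one restricts to at most $d+2$ support points. Since $\mathcal{K}$ is compact and $F$ is continuous, $\Phi$ is continuous on the corresponding compact set of configurations, so the supremum is attained, at some $\vec{t}^{*},\vec{\rho}^{*}$ say. Put $\sigma^{*}=\sum_{i}t_{i}^{*}\cdot\rho_{i}^{*}$, so that $R^{*}=\sum_{i}t_{i}^{*}\cdot F(\rho_{i}^{*})-F(\sigma^{*})$.

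The next step is a perturbation: for fixed $\rho_{0}\in\mathcal{K}$ and $\varepsilon\in(0,1)$, mixing $\rho_{0}$ into the optimal distribution with weight $\varepsilon$ yields a distribution with mean $\sigma^{*}+\varepsilon(\rho_{0}-\sigma^{*})$, and optimality forces
\[
(1-\varepsilon)\sum_{i}t_{i}^{*}\cdot F(\rho_{i}^{*})+\varepsilon\cdot F(\rho_{0})-F\bigl(\sigma^{*}+\varepsilon(\rho_{0}-\sigma^{*})\bigr)\le R^{*}.
\]
Subtracting $R^{*}$, dividing by $\varepsilon$, and letting $\varepsilon\to0_{+}$, the difference quotient of $F$ along $\rho_{0}-\sigma^{*}$ tends to $\langle\nabla F(\sigma^{*})\mid\rho_{0}-\sigma^{*}\rangle$ because $F$ is $C^{1}$, which gives $F(\rho_{0})-\sum_{i}t_{i}^{*}\cdot F(\rho_{i}^{*})\le\langle\nabla F(\sigma^{*})\mid\rho_{0}-\sigma^{*}\rangle$, i.e.
\[
D_{F}(\rho_{0},\sigma^{*})=F(\rho_{0})-F(\sigma^{*})-\langle\nabla F(\sigma^{*})\mid\rho_{0}-\sigma^{*}\rangle\le\sum_{i}t_{i}^{*}\cdot F(\rho_{i}^{*})-F(\sigma^{*})=R^{*}
\]
for every $\rho_{0}\in\mathcal{K}$. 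Thus $\sup_{\rho}D_{F}(\rho,\sigma^{*})\le R^{*}$; together with $C_{F}\le\sup_{\rho}D_{F}(\rho,\sigma^{*})$ and $C_{F}\ge R^{*}$ this forces $C_{F}=R^{*}$, which is (\ref{eq:minimax}), and exhibits $\sigma^{*}$ as a minimax point, so $\sigma_{opt}=\sigma^{*}$.

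For (\ref{eq:maxred}) I would test the optimal distribution against an arbitrary $\sigma$: since $\sup_{\rho}D_{F}(\rho,\sigma)\ge\sum_{i}t_{i}^{*}\cdot D_{F}(\rho_{i}^{*},\sigma)$ and the tangent term is linear in its first argument,
\[
\sum_{i}t_{i}^{*}\cdot D_{F}(\rho_{i}^{*},\sigma)=\sum_{i}t_{i}^{*}\cdot F(\rho_{i}^{*})-F(\sigma)-\langle\nabla F(\sigma)\mid\sigma^{*}-\sigma\rangle=R^{*}+D_{F}(\sigma^{*},\sigma)=C_{F}+D_{F}(\sigma_{opt},\sigma).
\]
The one delicate point I anticipate is the attainment of the supremum defining $R^{*}$: this is exactly where compactness of $\mathcal{K}$ and continuity of $F$ and $\nabla F$ on the whole body are used, and once it is granted the variational inequality and the algebraic identities are routine. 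If $F$ is only continuously differentiable on the relative interior the same argument runs after restricting $\rho_{0}$ and $\sigma$ there, the remaining boundary states being handled by lower semi-continuity of $D_{F}$ (and the degenerate case $C_{F}=\infty$ being trivial); no strictness of $D_{F}$ is required, since $\sigma^{*}$ is produced as an explicit minimizer of $\sup_{\rho}D_{F}(\rho,\cdot)$.
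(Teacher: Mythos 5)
Your argument is correct, but it follows a genuinely different route from the paper's. The paper first proves the statement for an inner approximating polytope $\mathcal{L}$: it looks at the set $J$ of extreme points attaining $\sup_{\rho}D_{F}\left(\rho,\sigma_{opt}\right)$, projects $\sigma_{opt}$ onto their convex hull, argues (somewhat informally) that this projection must coincide with $\sigma_{opt}$, and then passes to the limit along an increasing sequence of polytopes -- this limiting step is where lower semi-continuity and compactness are really used. You instead work directly with the Jensen-gap functional $\sum_{i}t_{i}\cdot F\left(\rho_{i}\right)-F\left(\bar{\rho}\right)$, use Carath\'eodory in $\mathbb{R}^{d+1}$ to bound the support size so that compactness and continuity of $F$ give an attained maximum at some $\left(\vec{t}^{*},\vec{\rho}^{*}\right)$ with barycenter $\sigma^{*}$, and then extract the first-order condition $D_{F}\left(\rho,\sigma^{*}\right)\leq R^{*}$ for all $\rho$; this is the classical redundancy--capacity argument. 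Your version replaces the paper's informal perturbation (``moving $\sigma$ towards $\pi$'') by an explicit variational inequality, yields the existence of a minimax point $\sigma_{opt}=\sigma^{*}$ as a byproduct rather than presupposing it, and derives (\ref{eq:maxred}) by a one-line application of the Bregman equation. The trade-off is that your perturbation step needs $\nabla F$ at the maximizing barycenter $\sigma^{*}$, which may sit on the boundary of $\mathcal{K}$; the paper's polytope-plus-semicontinuity route is built to tolerate degeneration of $F$ or $\nabla F$ there, and your closing remark about restricting to the relative interior is the right patch, though it would need the extra observation that the supremum of $D_{F}\left(\cdot,\sigma^{*}\right)$ over interior points already equals the supremum over all of $\mathcal{K}$ by lower semi-continuity. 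One cosmetic point: for the easy direction it is cleaner to note that $\sup_{\rho}D_{F}\left(\rho,\sigma\right)\geq\sum_{i}t_{i}\cdot D_{F}\left(\rho_{i},\bar{\rho}\right)$ holds for every $\sigma$ by the Bregman equation and then take the infimum over $\sigma$, rather than invoking the earlier theorem, whose statement already assumes that $\sigma_{opt}$ exists.
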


\begin{proof}
First we prove the theorem for a convex polytope $\mathcal{L}\subseteq\mathcal{K}$. 
Assume that $\rho_1 ,\rho_2 ,\dots , \rho_n$ are the extreme points of $\mathcal{L}$. 
Let $\sigma_{opt}\left(\mathcal{L}\right)$ denote a point that minimizes that $\sup_{\rho\in\mathcal{K}}D_{F}\left(\rho,\sigma\right)$. 
Let $J$ denote the set of indices $i$ for which 
\begin{equation}
D_{F}\left(\rho_i ,\sigma\right)=\sup_{\rho\in\mathcal{L}}D_{F}\left(\rho ,\sigma\right)
\end{equation}
Let $\mathcal{M}$ denote the convex hull of $\rho_i ,\, i\in J$. Let $\pi$ denote the projection of $\sigma_{opt}$ on $\mathcal{M}$. The there exists a mixture such that $\sum_{i\in J}t_{i}\cdot\rho_{i}=\pi$. 
Then for any $\sigma$
\begin{align}
\sup_{\rho\in\mathcal{L}}D_{F}\left(\rho,\sigma\right) & \geq  \sum_{i\in J}t_{i}\cdot D_{F}\left(\rho_{i},\sigma\right)\label{eq:generel}\\
 & =  \sum_{i\in J}t_{i}\cdot D_{F}\left(\rho_{i},\bar{\rho}\right)+D_{F}\left(\bar{\rho},\sigma\right).
\end{align}
Since all divergences $D_{F}\left(\rho_{i},\sigma\right)$ where $i\in J$ can be decreased by moving $\sigma$ from $\sigma_{opt}$ towards $\pi$ 
and the divergences $D_{F}\left(\rho_{i},\sigma\right)$ where $i\notin J$ are below $C\left(\mathcal{L}\right)$ as long as $\sigma$ is only moved a little towards $\pi$ we have that $\pi =\sigma_{opt}$ and that (\ref{eq:minimax}) holds. 
Inequality (\ref{eq:maxred}) follows from inequality (\ref{eq:generel}) when $\bar{\rho}=\rho_{opt}$. 

Let $\mathcal{L}_{1}\subseteq\mathcal{L}_{1}\subseteq\dots\subseteq\mathcal{K}$ denote an increasing sequence of polytopes such that the union contain the interior of $\mathcal{K}$. We have
\begin{equation}\label{eq:sekvens}
C_{F}\left(\mathcal{L}_{1}\right)   \leq       C_{F}\left(\mathcal{L}_{1}\right) \leq \dots\leq  C_{F}\left(\mathcal{K}\right)
\end{equation}
Let $\sigma_{opt,i}$ denote a point that is optimal for $\mathcal{L}_i$ . By compactness of $\mathcal{K}$ we may assume that $\sigma_{i}\to\sigma_{\infty}$ for $i\to\infty$ for some point $\sigma_{\infty}\in\mathcal{K}$. 
Otherwise we just replace the sequence by a subsequence. For any $\rho\in\mathcal{L}_i$ we have
\begin{align}
D_{F}\left(\rho , \sigma_{i}\right) &\leq \lim_{i\to\infty}\inf D_{F}\left(\rho , \sigma_{i}\right)\\
&\leq \lim_{i\to\infty} C_{F}\left(\mathcal{L}_{i}\right) \, .
\end{align}
By lower semi-continuity
\begin{equation}
D_{F}\left(\rho , \sigma_{\infty}\right)\leq \lim_{i\to\infty} C_{F}\left(\mathcal{L}_{i}\right) \, .
\end{equation}
By taking the supremum over all interior points $\rho\in\mathcal{K}$ we obtain
\begin{align}
C_{F}\left(\mathcal{K}\right)&\leq\sup_{\rho\in\mathcal{K}} D_{F}\left(\rho , \sigma_{\infty}\right)\\
&=\lim_{i\to\infty} C_{F}\left(\mathcal{L}_{i}\right)\\
&=\sup_{\vec{t}}\sum_{i}t_{i}\cdot D_{F}\left(\rho_{i},\bar{\rho}\right),
\end{align}
which in combination with (\ref{eq:sekvens}) proves (\ref{eq:minimax}) and also proves that $\sigma_{\infty}$ is optimal.
We also have 
\begin{align}
\sup_{\rho\in\mathcal{K}}D_{F}\left(\rho,\sigma\right)&=\lim_{i\to\infty}\sup_{\rho\in\mathcal{L}_{i}}D_{F}\left(\rho,\sigma\right)\\
&\geq \lim_{i\to\infty}\inf \left(C_{F}\left(\mathcal{L}_{i}\right)+D_{F}\left(\sigma_{i},\sigma\right)\right)\\
&\geq C_{F}+D_{F}\left(\sigma_{\infty},\sigma\right)
\end{align}
which proves Inequality (\ref{eq:maxred}).
\end{proof}

\section{Spectral Sets}

Let $\mathcal{K}$ denote a convex body of rank 2. Then $\sigma\in\mathcal{K}$
is said to have \emph{unique spectrality} if all orthogonal decompositions
$\sigma=\left(1-t\right)\cdot\sigma_{0}+t\cdot\sigma_{1}$ have the
same coefficients $\left\{ 1-t,t\right\} $ and the set $\left\{ 1-t,t\right\} $
is called the \emph{spectrum} of $\sigma.$ If all elements of $\mathcal{K}$
have unique spectrality we say that $\mathcal{K}$ is \emph{spectral}.
A convex body $\mathcal{K}$ is said to be \emph{centrally symmetric}
with \emph{center} $c$ if for any point $\sigma\in\mathcal{K}$ there
exists a \emph{centrally inverted} point $\tilde{\sigma}$ in $\mathcal{K}$,
i.e. a point $\tilde{\sigma}\in\mathcal{K}$ such that $\frac{1}{2}\sigma+\frac{1}{2}\tilde{\sigma}=c\, .$ 
\begin{thm}
A spectral set $\mathcal{K}$ of rank 2 is centrally symmetric.
\end{thm}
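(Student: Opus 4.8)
The plan is to produce an explicit centre, by combining the minimisation step inside the proof of Theorem~\ref{thm:SectionGennemPunkt} with unique spectrality; no genuinely new convex‑geometric input is needed.

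First I would pick any point $\sigma$ in the relative interior of $\mathcal{K}$. Since $\mathcal{K}$ has rank $2$, the point $\sigma$ has an orthogonal decomposition $\sigma=\left(1-t\right)\cdot\sigma_{0}+t\cdot\sigma_{1}$ into orthogonal extreme points, and as $\sigma$ is not extreme we may take $0<t<1$, so $\sigma_{0}\neq\sigma_{1}$. Set $m=\frac12\sigma_{0}+\frac12\sigma_{1}$. Because $\sigma$ is a relative interior point lying strictly between $\sigma_{0}$ and $\sigma_{1}$, the whole open segment from $\sigma_{0}$ to $\sigma_{1}$, and in particular $m$, lies in the relative interior of $\mathcal{K}$. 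Moreover $m=\frac12\sigma_{0}+\frac12\sigma_{1}$ is itself an orthogonal decomposition of the non-extreme point $m$, so unique spectrality forces the spectrum of $m$ to be $\left\{\frac12,\frac12\right\}$.

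I would then claim that $m$ is a centre of central symmetry. For a boundary point $\tau$ of $\mathcal{K}$ let $\tau'$ denote the other point where the line through $\tau$ and $m$ meets $\partial\mathcal{K}$, and write $m=\left(1-t_{\tau}\right)\cdot\tau+t_{\tau}\cdot\tau'$ with $t_{\tau}\in\,]0,1[$; along each such chord one has $t_{\tau}+t_{\tau'}=1$. Now I run the argument of the proof of Theorem~\ref{thm:SectionGennemPunkt} with $m$ playing the role of the interior point: $t_{\tau}$ depends continuously on $\tau$ on the compact set $\partial\mathcal{K}$, hence attains a minimum at some $\tau_{*}$, and the inequality used there (choose a retraction $R$ with $R(\tau_{*})=0$ and derive $t_{\tau_{*}}\cdot R(\tau_{*}')\geq t_{\pi_{0}}\geq t_{\tau_{*}}$, so $R(\tau_{*}')=1$) shows that such a minimiser $\tau_{*}$ is orthogonal to $\tau_{*}'$. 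Thus $m=\left(1-t_{\tau_{*}}\right)\cdot\tau_{*}+t_{\tau_{*}}\cdot\tau_{*}'$ is an orthogonal decomposition of $m$, so by unique spectrality $\left\{1-t_{\tau_{*}},t_{\tau_{*}}\right\}=\left\{\frac12,\frac12\right\}$; that is, $\min_{\tau\in\partial\mathcal{K}}t_{\tau}=t_{\tau_{*}}=\frac12$. Hence $t_{\tau}\geq\frac12$ for every boundary point $\tau$, and together with $t_{\tau}+t_{\tau'}=1$ this forces $t_{\tau}=\frac12$ for all $\tau$. So $m$ bisects every chord of $\mathcal{K}$ through it, i.e. $\mathcal{K}=2m-\mathcal{K}$, which is exactly central symmetry with centre $m$.

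The only non-formal ingredient is the statement that a minimiser of $t_{\tau}$ yields an orthogonal pair, and that is precisely the minimisation lemma already established inside the proof of Theorem~\ref{thm:SectionGennemPunkt}; everything else is bookkeeping together with two invocations of unique spectrality (once to see that the spectrum of $m$ is $\left\{\frac12,\frac12\right\}$ from the decomposition defining $m$, once to read off $t_{\tau_{*}}=\frac12$ from the orthogonal decomposition produced by the minimisation). The subtle point — and the reason I start from an interior point $\sigma$ rather than from an arbitrary orthogonal pair of extreme points — is that the minimisation argument of Theorem~\ref{thm:SectionGennemPunkt} requires its base point to be algebraically interior, whereas the midpoint of a general orthogonal pair of extreme points could lie on a proper face of $\mathcal{K}$.
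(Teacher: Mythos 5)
Your proof is correct and takes essentially the same route as the paper: the candidate centre is the midpoint of an orthogonal pair, and the minimisation argument from the proof of Theorem~\ref{thm:SectionGennemPunkt} combined with unique spectrality shows that every chord through that point is bisected. Your explicit check that the candidate centre lies in the relative interior (so that the minimisation argument applies to every chord) is a minor refinement of, not a departure from, the paper's argument.
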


\begin{proof}
Let $S:\left[0,1\right]\to\mathcal{K}$ denote a section. Let $\pi_{0}\in\mathcal{K}$
denote an arbitrary extreme point and let $\pi_{1}$ denote a point
on the boundary such that $\left(1-s\right)\cdot\pi_{0}+s\cdot\pi_{1}=S\left(\boldsymbol{\nicefrac{1}{2}}\right)$
where $0\leq s\leq\nicefrac{1}{2}.$ Then $S\left(\nicefrac{1}{2}\right)$ can be written as a mixture $\left(1-t\right)\cdot\sigma_{0}+t\cdot\sigma_{1}$ 
of points on the boundary such that $t$ is minimal. As in the proof of Theorem \ref{thm:SectionGennemPunkt} we see that $\sigma_{0}$ and $\sigma_{1}$ 
are orthogonal.  Since $\mathcal{K}$ is spectral we have $t=\nicefrac{1}{2}$. Since $t\leq s\leq \nicefrac{1}{2}$ we have $s=\nicefrac{1}{2}$ implying that $\mathcal{K}$
is symmetric around $S\left(\nicefrac{1}{2}\right).$ 
\end{proof}
\begin{prop}\label{prop:fix}
Let $S:\mathcal{L}\to\mathcal{K}$ denote a section with retraction
$R:\mathcal{K}\to\mathcal{L}.$ If $\mathcal{K}$ is a spectral set
of rank 2 and $\mathcal{L}$ is not a singleton then $\mathcal{L}$ is also a spectral set of rank 2. If
$c$ is the center of $\mathcal{K}$ then $R\left(c\right)$ is the
center of $\mathcal{L}$ and $S\left(R\left(c\right)\right)=c,$ i.e.
the section goes through the center of $\mathcal{K}.$
\end{prop}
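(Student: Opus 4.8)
The plan is to transport the whole question to the section $\mathcal{M}:=S(\mathcal{L})\subseteq\mathcal{K}$. By the Proposition at the end of Section~2 on fixed points of composite affinities, $\mathcal{M}$ is the fixed-point set of the idempotent affinity $\Phi:=S\circ R$, and $R$ and $S$ restrict to mutually inverse isomorphisms between $\mathcal{M}$ and $\mathcal{L}$; since rank, unique spectrality and central symmetry (with ``center $\mapsto$ center'') are isomorphism invariants, it suffices to prove that $\mathcal{M}$ is spectral of rank $2$ with center $c$, after which applying $R$ yields all the assertions about $\mathcal{L}$ and $R(c)$. Extending $\Phi$ to an affine map of the real vector space generated by $\mathcal{K}$ (as in Section~2), its fixed points form an affine subspace $A$ and $\mathcal{M}=\mathcal{K}\cap A$; since $\mathcal{L}$ is not a singleton, $\dim A\geq 1$. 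The one tool I will use repeatedly is a sharpening of the theorem of Section~4: its proof actually shows that $\mathcal{K}$ is symmetric about the midpoint $S'\bigl(\tfrac{1}{2}\bigr)$ of \emph{every} segment-section $S':[0,1]\to\mathcal{K}$, so by uniqueness of the center this midpoint is $c$; as $S'$ is affine, this says that any two mutually singular points $\rho_0,\rho_1\in\mathcal{K}$ satisfy $\tfrac{1}{2}(\rho_0+\rho_1)=c$.

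The crux is to show $c\in\mathcal{M}$, equivalently $S(R(c))=c$. Since $\mathcal{L}$ is not a singleton, a non-constant affine functional on $\mathcal{L}$, normalised to take values in $[0,1]$, is a retraction $\mathcal{L}\to[0,1]$ (a section being the segment from a point $\ell_0$ where it vanishes to a point $\ell_1$ where it equals $1$), so $\ell_0$ and $\ell_1$ are mutually singular in $\mathcal{L}$. Composing this section with $S$ and this retraction with $R$ produces a segment-section of $\mathcal{K}$ joining $S(\ell_0)$ to $S(\ell_1)$, so these points are mutually singular in $\mathcal{K}$; by the tool above $\tfrac{1}{2}\bigl(S(\ell_0)+S(\ell_1)\bigr)=c$, which by affinity of $S$ reads $S\bigl(\tfrac{1}{2}(\ell_0+\ell_1)\bigr)=c$. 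Hence $c\in\mathcal{M}$, and applying $R$ gives $R(c)=\tfrac{1}{2}(\ell_0+\ell_1)$ and $S(R(c))=c$.

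Knowing $c\in\mathcal{M}=\mathcal{K}\cap A$, the affine subspace $A$ contains $c$, so $A-c$ is linear and the central inversion $\iota: v\mapsto 2c-v$ maps $A$ onto itself; it also maps $\mathcal{K}$ onto $\mathcal{K}$, hence $\iota(\mathcal{M})=\mathcal{K}\cap A=\mathcal{M}$, so $\mathcal{M}$ is centrally symmetric about $c$, which therefore lies in its relative interior. For rank and spectrality one uses that a spectral set of rank $2$ is strictly convex (it has no proper face of positive dimension; cf.\ the characterisation illustrated in Figure~\ref{fig:ConvexBodiesRank2}), so $\partial\mathcal{K}$ contains no non-degenerate segment; since every relative-boundary point of the slice $\mathcal{K}\cap A$ is a boundary point of $\mathcal{K}$, the same holds for $\mathcal{M}$, so the only proper faces of $\mathcal{M}$ are its extreme points. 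Thus, if $\sigma=(1-t)\cdot\sigma_0+t\cdot\sigma_1$ is an orthogonal decomposition of a point of $\mathcal{M}$ into \emph{distinct} extreme points, the face witnessing orthogonality must be $\mathcal{M}$ itself, so $\sigma_0$ and $\sigma_1$ are mutually singular in $\mathcal{M}$, hence in $\mathcal{K}$, hence $\sigma_1=2c-\sigma_0$. Then $\sigma-c=(1-2t)\cdot(\sigma_0-c)$, so (when $\sigma\neq c$) $\sigma_0$ is one of the two points where the line through $c$ and $\sigma$ meets $\partial\mathcal{M}$ and $\{1-t,t\}$ is then fixed by the position of $\sigma$ on $[\sigma_0,\sigma_1]$; if $\sigma=c$ one gets $t=\tfrac{1}{2}$ directly. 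So $\sigma$ has a unique spectrum and at most two orthogonal extreme points occur. Since $\mathcal{M}$ is not a singleton its rank is exactly $2$; transporting everything through $R$, $\mathcal{L}$ is spectral of rank $2$ with center $R(c)$, and $S(R(c))=c$.

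I expect the main obstacle to be that last paragraph, precisely the step that upgrades orthogonality of two distinct extreme points of $\mathcal{M}$ to mutual singularity in $\mathcal{M}$ (and thence in $\mathcal{K}$); this rests on having the strict convexity of spectral rank-$2$ bodies in hand. By contrast, the first three paragraphs reduce entirely to two clean facts: the averaging property of mutually singular pairs extracted from the proof of the theorem of Section~4, and the description of a section as an affine slice $\mathcal{K}\cap A$.
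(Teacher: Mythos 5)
Your proof is correct, but it takes a genuinely different route from the paper's. The paper stays on the $\mathcal{L}$ side throughout: it observes that $\rho\mapsto R\bigl(\widetilde{S\left(\rho\right)}\bigr)$ (with $\widetilde{\phantom{\sigma}}$ the reflection of $\mathcal{K}$ in $c$) is a central inversion of $\mathcal{L}$ about $R\left(c\right)$, gets rank 2 by pushing a proper face $\mathcal{F}$ of $\mathcal{L}$ forward to $S\left(\mathcal{F}\right)$, asserted to be a proper face of $\mathcal{K}$ and hence a singleton, and only at the end obtains $S\left(R\left(c\right)\right)=c$ from the fact that $R^{-1}\left(\tilde{\rho}\right)$ is a singleton for an extreme point $\tilde{\rho}$ of $\mathcal{L}$. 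You instead work on the slice $\mathcal{M}=S\left(\mathcal{L}\right)=\mathcal{K}\cap A$ and make $c\in\mathcal{M}$ the \emph{first} step, via the midpoint property of mutually singular pairs read off from the proof of the Section 4 theorem; central symmetry of the slice is then automatic, and spectrality and rank follow from strict convexity of $\mathcal{K}$. Both arguments lean on the same unproved assertion, namely that every proper face of a spectral rank-2 body is a singleton (the paper uses it verbatim, you call it strict convexity and flag it as the weak point). Note that your own midpoint tool closes this gap: if a proper face of $\mathcal{K}$ contained distinct points $A$ and $B$, the supporting functional equal to $1$ on that face, normalised to $\left[0,1\right]$, would make both $A$ and $B$ mutually singular with any minimiser $\tilde{A}$ of the functional, forcing $\frac{1}{2}\left(\tilde{A}+A\right)=c=\frac{1}{2}\left(\tilde{A}+B\right)$ and hence $A=B$. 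Your slice-boundary argument also quietly repairs a delicate point in the paper's version: it is not immediate that the image under $S$ of a face of $\mathcal{L}$ is a face of $\mathcal{K}$ (faces of a sub-body need not be faces of the ambient body), whereas ``the relative boundary of $\mathcal{K}\cap A$ lies in $\partial\mathcal{K}$'' is elementary. The trade-off is length: the paper's proof is a few lines once the face fact is granted, while yours is longer but makes the ``section through the center'' claim the cleanest and most self-contained part.
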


\begin{proof}
Let $\sigma\to\tilde{\sigma}$ denote reflection in the point $c\in\mathcal{K}.$
If $\rho\in\mathcal{L}$ then 
\begin{align}
R\left(c\right) & =R\left(\frac{1}{2}\cdot S\left(\rho\right)+\frac{1}{2}\cdot\widetilde{S\left(\rho\right)}\right)\\
 & =\frac{1}{2}\cdot R\left(S\left(\rho\right)\right)+\frac{1}{2}\cdot R\left(\widetilde{S\left(\rho\right)}\right)\\
 & =\frac{1}{2}\cdot\rho+\frac{1}{2}\cdot R\left(\widetilde{S\left(\rho\right)}\right)
\end{align}
so that $\mathcal{L}$ is centrally symmetric around $R\left(c\right)$.
If $\mathcal{F}$ is a proper face of $\mathcal{L}$ then $S\left(\mathcal{F}\right)$
is a proper face of $\mathcal{K}$ implying that $S\left(\mathcal{F}\right)$
is a singleton. Therefore $\mathcal{F}=R\left(S\left(\mathcal{F}\right)\right)$
is a singleton implying that $\mathcal{L}$ has rank 2. If $\rho\in\mathcal{L}$
is an extreme point then $\tilde{\rho}=R\left(\widetilde{S\left(\rho\right)}\right)$
is also an extreme point of $\mathcal{L}$. Now 
\begin{align}
S\left(R\left(c\right)\right) & =S\left(\frac{1}{2}\cdot\rho+\frac{1}{2}\cdot R\left(\widetilde{S\left(\rho\right)}\right)\right)\\
 & =\frac{1}{2}\cdot S\left(\rho\right)+\frac{1}{2}\cdot S\left(R\left(\widetilde{S\left(\rho\right)}\right)\right).
\end{align}
Since $\tilde{\rho}\in\mathcal{L}$ is an extreme point and $\tilde{\rho}=R\left(\widetilde{S\left(\rho\right)}\right)$
we have that $R^{-1}\left(\tilde{\rho}\right)$ is a proper face of
$\mathcal{K}$ and thereby a singleton. Therefore $S\left(R\left(\widetilde{S\left(\rho\right)}\right)\right)=\widetilde{S\left(\rho\right)}$
and $S\left(R\left(c\right)\right)=\frac{1}{2}\cdot S\left(\rho\right)+\frac{1}{2}\cdot\widetilde{S\left(\rho\right)}=c.$
\end{proof}
\begin{cor}
If $\sigma$ is an extreme point of a spectral set $\mathcal{K}$
of rank 2 then there exists a unique element in $\mathcal{K}$ that
is orthogonal to $\sigma\, .$ 
\end{cor}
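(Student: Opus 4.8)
The plan is to identify the element orthogonal to $\sigma$ explicitly: it should be the centrally inverted point $\tilde{\sigma}=2c-\sigma$, where $c$ is the center of $\mathcal{K}$, which exists because a spectral set of rank $2$ is centrally symmetric.

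\emph{Existence.} I would first note that, since $\sigma$ lies on the boundary of $\mathcal{K}$, a supporting hyperplane at $\sigma$ gives, after rescaling, a retraction $R:\mathcal{K}\to[0,1]$ with $R(\sigma)=0$; choosing $b\in\mathcal{K}$ with $R(b)=1$, the affine map $S(t)=(1-t)\cdot\sigma+t\cdot b$ is a section with $S(0)=\sigma$. By Proposition~\ref{prop:fix}, applied with $\mathcal{L}=[0,1]$ (which is not a singleton), this section passes through the center, so $S(R(c))=c$ with $R(c)=\frac{1}{2}$. Hence $c=\frac{1}{2}\sigma+\frac{1}{2}S(1)$, which forces $S(1)=\tilde{\sigma}$. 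Since $S$ is a section with endpoints $\sigma$ and $\tilde{\sigma}$, these two points are mutually singular in $\mathcal{K}$, hence orthogonal, taking the face in the definition of orthogonality to be $\mathcal{K}$ itself.

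\emph{Uniqueness.} Here I would let $\sigma'$ be any element orthogonal to $\sigma$, so that $\sigma$ and $\sigma'$ are mutually singular in some face $\mathcal{F}$ of $\mathcal{K}$ and, in particular, $\sigma'\neq\sigma$. Since a spectral set of rank $2$ has no proper face of positive dimension, $\mathcal{F}$ must coincide with $\mathcal{K}$; thus $\sigma$ and $\sigma'$ are mutually singular in $\mathcal{K}$, and repeating the argument of the existence step on a section $S'$ with $S'(0)=\sigma$ and $S'(1)=\sigma'$ yields $S'(\frac{1}{2})=c$, hence $\sigma'=2c-\sigma=\tilde{\sigma}$. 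Therefore $\tilde{\sigma}$ is the unique element orthogonal to $\sigma$.

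The step that I expect to carry the real weight is the claim that a spectral set of rank $2$ has only singleton proper faces; this is already used tacitly in the proof of Proposition~\ref{prop:fix}, but it deserves justification. First, the center $c$ is an interior point: if $c$ belonged to a proper face $\mathcal{G}$, then from $c=\frac{1}{2}x+\frac{1}{2}\tilde{x}$ both $x$ and $\tilde{x}$ would lie in $\mathcal{G}$ for every boundary point $x$, whence $\mathcal{G}\supseteq\partial\mathcal{K}$ and $\mathcal{G}=\mathcal{K}$. Second, if $\mathcal{F}$ were a proper face of dimension at least $1$, then a point $p$ in its relative interior would be a non-extreme boundary point of $\mathcal{K}$; applying Theorem~\ref{thm:SectionGennemPunkt} inside $\mathcal{F}$ gives an orthogonal decomposition $p=(1-t)\cdot\sigma_0+t\cdot\sigma_1$ with $0<t<1$, while the supporting-hyperplane construction used above exhibits a point $p'$ with $p$ orthogonal to $p'$, hence the orthogonal decomposition $p=1\cdot p+0\cdot p'$. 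These two decompositions of $p$ have different coefficient sets, contradicting the unique spectrality of $\mathcal{K}$. The remaining verifications are routine manipulations with affine maps.
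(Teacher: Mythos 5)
Your argument is correct and follows the route the paper intends: the corollary is stated without proof immediately after Proposition~\ref{prop:fix}, and your derivation --- any section joining $\sigma$ to an orthogonal partner must pass through the center by that proposition, forcing the partner to be the central inversion $2c-\sigma$ --- is exactly the intended consequence of Proposition~\ref{prop:fix} together with central symmetry. Your explicit justification that proper faces of a spectral rank-2 body are singletons fills in a step the paper uses only tacitly (in the proof of Proposition~\ref{prop:fix} and in the captions of Figures~\ref{fig:ConvexBodiesRank2} and~\ref{fig:CentrallySymmetricWithFace}), and it is sound given that the paper's definition of unique spectrality does not exclude the degenerate spectrum $\{1,0\}$ produced by the supporting-hyperplane section.
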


If a centrally symmetric set has a proper face that is not an extreme
point then the set is not spectral as illustrated in Figure \ref{fig:CentrallySymmetricWithFace}.

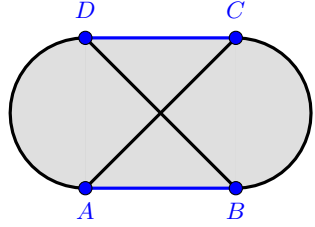
\begin{figure}[tbh]
\begin{centering}
\begin{tikzpicture}[line cap=round,line join=round,>=triangle 45,x=1.0cm,y=1.0cm]
\clip(1.82,0.54) rectangle (6.28,3.54);
\fill[line width=0pt,color=gray!30!white,fill=gray!50!white,fill opacity=0.5] (3.,3.) -- (5.,3.) -- (5.,1.) -- (3.,1.) -- cycle; 
\draw [shift={(5.,2.)},line width=1.2pt,color=black,fill=gray!50!white,fill opacity=0.5]  plot[domain=-1.5707963267948966:1.5707963267948966,variable=\t]({1.*1.*cos(\t r)+0.*1.*sin(\t r)},{0.*1.*cos(\t r)+1.*1.*sin(\t r)});
\draw [shift={(3.,2.)},line width=1.2pt,color=black,fill=gray!50!white,fill opacity=0.5]  plot[domain=1.5707963267948966:4.71238898038469,variable=\t]({1.*1.*cos(\t r)+0.*1.*sin(\t r)},{0.*1.*cos(\t r)+1.*1.*sin(\t r)});

\draw [line width=1.2pt] (3.,3.)-- (5.,1.);
\draw [line width=1.2pt] (3.,1.)-- (5.,3.);

\draw [line width=1.2pt, color=blue] (3.,1.)-- (5.,1.);
\draw [fill=blue] (3.,1.) circle (2.5pt);
\draw[color=blue] (3.0,0.7) node {$A$};
\draw [fill=blue] (5.,1.) circle (2.5pt);
\draw[color=blue] (5.0,0.7) node {$B$};

\draw [line width=1.2pt, color=blue] (5.,3.)-- (3.,3.);
\draw [fill=blue] (5.,3.) circle (2.5pt);
\draw[color=blue] (5.0,3.37) node {$C$};
\draw [fill=blue] (3.,3.) circle (2.5pt);
\draw[color=blue] (3.0,3.37) node {$D$};

\end{tikzpicture}
\par\end{centering}
\caption{\label{fig:CentrallySymmetricWithFace}A centrally symmetric convex
body with non-trivial faces $\overline{AB}$ and $\overline{CD}$. The Caratheodory number is 2, but the rank is 3.
The points $A,B,C$, and $D$ are orthogonal extreme points and any
point in the interior of the square $\square ABCD$ has several orthogonal
decompositions with different mixing coefficients with weights on
$A,B,C$, and $D$. Points in the convex body but outside the triangles
can be decomposed as a mixture of two orthogonal extreme points on
the semi circles. }
\end{figure}

Let $\rho=x\cdot\sigma_{0}+y\cdot\sigma_{1}$ denote an orthogonal
decomposition of an element of the vector space generated by a spectral
set of rank 2. Then we may define 
\begin{equation}\label{eq:spectral_theory}
f\left(\rho\right)=f\left(x\right)\cdot\sigma_{0}+f\left(y\right)\cdot\sigma_{1}\, .
\end{equation}
 If $\rho=x\cdot\rho_{0}+y\cdot\rho_{1}$ is another orthogonal decomposition
then $x=y$ and 
\begin{equation}
f\left(x\right)\cdot\sigma_{0}+f\left(y\right)\cdot\sigma_{1}=2f\left(x\right)\cdot\frac{\sigma_{0}+\sigma_{1}}{2}
\end{equation}
 and 
\begin{equation}
f\left(x\right)\cdot\rho_{0}+f\left(y\right)\cdot\rho_{1}=2f\left(x\right)\cdot\frac{\rho_{0}+\rho_{1}}{2}.
\end{equation}
Since 
\begin{equation}
\frac{\sigma_{0}+\sigma_{1}}{2}=\frac{\rho_{0}+\rho_{1}}{2}=c
\end{equation}
different orthogonal decompositions will result in the same value
of $f\left(\rho\right).$ Note in particular that for the constant
function $f\left(x\right)=\nicefrac{1}{2}$ we have $f\left(\rho\right)=c.$
In this sense $c=\nicefrac{1}{2}$ and from now on we will use $\boldsymbol{\frac{1}{2}}$ in bold face instead of $c$  as notation for the center of a spectral set. If $\frac{1}{2}\cdot\rho+\frac{1}{2}\cdot\sigma=c$
then $\frac{1}{2}\cdot\rho+\frac{1}{2}\cdot\sigma=\boldsymbol{\frac{1}{2}}$ so
that $\rho+\sigma=\boldsymbol{1}$ so that the central inversion of $\rho$ equals
$\boldsymbol{1}-\rho.$ We note that if $f\left(x\right)\geq0$ for all $x$ then
$f\left(\rho\right)$ is element in the positive cone. Therefore $\sum_{i}\rho_{i}^{2}=0$
implies that $\rho_{i}=0$ for all $i$, where $\rho_{i}^{2}$ is defined via Equation (\ref{eq:spectral_theory}). Note also that if $\Phi$
is an isomorphism then $\Phi\left(f\left(\rho\right)\right)=f\left(\Phi\left(\rho\right)\right).$

\section{Sufficient Regret Functions}

There are a number of equivalent ways of defining sufficiency, and
the present definition of sufficiency is based on \cite{Petz1988}.
We refer to \cite{Jencova2006} where the notion of sufficiency is
discussed in great detail.
\begin{defn}
Let $\left(\sigma_{\theta}\right)_{\theta}$ denote a family of points
in a convex body $\mathcal{K}$ and let $\Phi$ denote an affinity
$\mathcal{K}\to\mathcal{L}$ where $\mathcal{K}$ and $\mathcal{L}$
denote convex bodies. Then $\Phi$ is said to be \emph{sufficient}
for $\left(\sigma_{\theta}\right)_{\theta}$ if there exists an affinity
$\Psi:\mathcal{L}\to\mathcal{K}$ such that $\Psi\left(\Phi\left(\sigma_{\theta}\right)\right)=\sigma_{\theta},$
i.e. the states $\sigma_{\theta}$ are fix-points of $\Psi\circ\Phi\, .$ 
\end{defn}

The notion of sufficiency as a property of general divergences was
introduced in \cite{Harremoes2007a}. It was shown in \cite{Jiao2014}
that a Bregman divergence on the simplex of distributions on an alphabet
that is not binary determines the divergence up to a multiplicative
factor. In \cite{Harremoes2017} this result was extended to $C^{^{*}}$-algebras.
Here we are interested in the binary case and its generalization that
is convex bodies of rank 2.
\begin{defn}
We say that the regret function $D_{F}$ on the convex body $\mathcal{K}$
satisfies \emph{sufficiency} if 
\begin{equation}
D_{F}\left(\Phi\left(\rho\right),\Phi\left(\sigma\right)\right)=D_{F}\left(\rho,\sigma\right)
\end{equation}
for any affinity $\mathcal{K}\to\mathcal{K}$ that is sufficient for
$\left(\rho,\sigma\right).$ 
\end{defn}

\begin{lem}
If a strict regret function on a convex body of rank 2 satisfies sufficiency,
then the convex body is spectral and the regret function is generated
by a function of the form 
\begin{equation}
F\left(\sigma\right)=\mathrm{tr}\left[f\left(\sigma\right)\right]\label{eq:trace}
\end{equation}
for some convex function $f:\left[0,1\right]\to\mathbb{R}.$ 
\end{lem}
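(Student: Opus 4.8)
The plan is to extract, from sufficiency, enough rigidity to force unique spectrality, and then to identify $F$ on the resulting one-parameter family of orthogonal decompositions. First I would use Theorem~\ref{thm:SectionGennemPunkt}: any interior $\sigma$ admits an orthogonal decomposition $\sigma=(1-t)\cdot\sigma_0+t\cdot\sigma_1$, hence a section $S:[0,1]\to\mathcal{K}$ through $\sigma$ with retraction $R$. The composite $S\circ R:\mathcal{K}\to\mathcal{K}$ is an idempotent affinity whose fix-point set is exactly the segment $S([0,1])$, and it is sufficient for every pair of points on that segment (since such points are its own fix-points, take $\Psi=S\circ R$, or more precisely $\Psi=S$ and $\Phi=R$ in the sense of the sufficiency definition). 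So sufficiency of $D_F$ gives $D_F(S(u),S(v))=D_F(\rho,\sigma)$ whenever $\rho,\sigma$ lie on $S([0,1])$ and $u,v$ are their $R$-images — i.e.\ the restriction of $D_F$ to the segment depends only on the affine coordinate along the segment, and in fact, by composing with different sections through the same endpoints, the restricted divergence is the \emph{same} Bregman divergence $d$ on $[0,1]$ for every orthogonal segment. This is the core rigidity statement.

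Next I would establish spectrality. Suppose $\sigma$ has two orthogonal decompositions with coefficient sets $\{1-t,t\}$ and $\{1-s,s\}$, $t\le s\le\tfrac12$, giving two sections $S_t,S_s$ through $\sigma$ with $\sigma=S_t(t)=S_s(s)$. Applying the previous paragraph, $D_F(\sigma_0,\sigma)=d(0,t)$ computed in the first section and $=d(0',s)$ in the second, but since the one-dimensional Bregman divergence $d$ is strict (inherited from strictness of $D_F$) and is a fixed function of its arguments, comparing the values $d(0,t)$ and $d(1,t)$ against $d(0,s)$, $d(1,s)$ — together with the Bregman equation along each segment and the fact that $\sigma_0$ is an \emph{extreme} point (so $d(0,\cdot)$ is the divergence from a vertex) — forces $t=s$. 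Concretely: strictness plus the Bregman (Pythagorean) inequality of the earlier lemma, applied to the closed segment $\mathcal{L}=S_t([0,1])$ and the point $\sigma_1$ of the $s$-decomposition, pins the minimizer and yields $t=s$. Hence every point has a unique spectrum and $\mathcal{K}$ is spectral; by the theorem of Section~4 it is centrally symmetric with center $\boldsymbol{\tfrac12}$.

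Now that $\mathcal{K}$ is spectral, every $\sigma$ is uniquely $\sigma=(1-t)\cdot\sigma_0+t\cdot\sigma_1$ with $\sigma_0,\sigma_1$ orthogonal extreme points, and the functional calculus of Equation~(\ref{eq:spectral_theory}) is available. Define $f:[0,1]\to\mathbb{R}$ by $f(t):=F(S(t))$ for any one fixed orthogonal section $S$; by the rigidity of the first paragraph (all orthogonal sections carry the same Bregman divergence) the value of $D_F$ between any two points on any orthogonal segment equals $d_f(u,v)=f(u)-f(v)-(u-v)f'(v)$, so $d$ is the one-dimensional Bregman divergence generated by $f$, and $f$ is convex because $F$ is. It remains to check that $F(\sigma)=\operatorname{tr}[f(\sigma)]$ globally, not just along segments: writing $\sigma=(1-t)\cdot\sigma_0+t\cdot\sigma_1$, the Bregman equation with $\bar\rho=\sigma$ along the segment through $\sigma_0,\sigma_1$ reads $(1-t)D_F(\sigma_0,s)+tD_F(\sigma_1,s)=(1-t)D_F(\sigma_0,\sigma)+tD_F(\sigma_1,\sigma)+D_F(\sigma,s)$; specializing $s$ suitably and using $F(\sigma_i)=f(1)$ (value at an extreme point of the standard $[0,1]$, i.e.\ $f(1)$ — here one must also fix the affine normalization so $f(0)=F$ at the orthogonal complement extreme point) expresses $F(\sigma)$ linearly in $f(t),f(1-t)$ as $(1-t)f(\,\cdot\,)+tf(\,\cdot\,)$, which is precisely $\operatorname{tr}[f(\sigma)]$ under the convention $\operatorname{tr}=1$.

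The main obstacle I anticipate is the spectrality step: squeezing $t=s$ out of sufficiency requires carefully tracking which affinities are genuinely sufficient for a given pair (the idempotent $S\circ R$ only fixes points already on $S([0,1])$, so one cannot directly compare two decompositions of the \emph{same} $\sigma$ unless one argues, as in the proof of Theorem~\ref{thm:SectionGennemPunkt}, via minimality of the mixing coefficient). I would handle this by choosing the section so that $t_{\sigma_0}$ is minimal, invoking Inequality~(\ref{minimal}) to get orthogonality, and then using that strictness of the induced one-dimensional divergence $d$ forbids two distinct minimal coefficients. A secondary nuisance is the normalization of $f$ at the two endpoints $0,1$: different orthogonal sections must be glued consistently, which works precisely because $\mathcal{K}$ is now centrally symmetric so the two extreme endpoints of any orthogonal segment are centrally inverse, making $f(0)$ and $f(1)$ well-defined independent of the segment.
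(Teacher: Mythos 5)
Your overall route --- transport $D_F$ between orthogonal sections via the sufficient affinity $S_2\circ R_1$ with recovery map $S_1\circ R_2$, conclude that every orthogonal segment carries one and the same one-dimensional Bregman divergence, and then read off spectrality and the trace formula --- is the same as the paper's. However, two steps contain genuine gaps.

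The final identification $F(\sigma)=\mathrm{tr}[f(\sigma)]$ is wrong as you have set it up. By the functional calculus of Equation (\ref{eq:spectral_theory}), for $\sigma=(1-t)\cdot\sigma_{0}+t\cdot\sigma_{1}$ one has $f(\sigma)=f(1-t)\cdot\sigma_{0}+f(t)\cdot\sigma_{1}$, and since each orthogonal extreme point has trace $1$ this gives $\mathrm{tr}\left[f(\sigma)\right]=f(1-t)+f(t)$ --- a \emph{sum}, not the convex combination $(1-t)f(\cdot)+tf(\cdot)$ you arrive at; there is no ``convention $\mathrm{tr}=1$'' that converts one into the other. With your definition $f(t)=F(S(t))$ you would get $\mathrm{tr}[f(\sigma)]=F(S(t))+F(S(1-t))$, which is not $F(\sigma)$. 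The paper repairs exactly this point: central inversion is sufficient for every pair, so $D_F=D_{(F+\tilde F)/2}$ with $\tilde F(\sigma)=F(\boldsymbol{1}-\sigma)$, which allows one to assume $F$ invariant under central inversion and then to set $f(t)=\frac{1}{2}F\left((1-t)\cdot\sigma_{0}+t\cdot\sigma_{1}\right)$, whence $\mathrm{tr}[f(\sigma)]=f(1-t)+f(t)=2f(t)=F(\sigma)$. Both the symmetrization of $F$ and the factor $\frac{1}{2}$ are missing from your argument, and without them the trace identity fails.

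The spectrality step is also not closed. If $\sigma=S_{1}(t)=S_{2}(s)$ are two orthogonal decompositions, the quantities your rigidity statement gives you are $D_{F}(S_{1}(0),\sigma)=d(0,t)$ and $D_{F}(S_{2}(0),\sigma)=d(0,s)$, but these are divergences \emph{from two different points}, so nothing yet equates two values of $d$, and strictness of $d$ has nothing to act on. Your proposed fix (minimality of $t_{\sigma_{0}}$ plus Inequality (\ref{minimal})) only reproduces the existence of one orthogonal decomposition as in Theorem \ref{thm:SectionGennemPunkt}; it does not exclude a second orthogonal decomposition with a different coefficient, since both decompositions are already orthogonal and neither was assumed minimal. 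The paper's device is to compare each point with the \emph{midpoint of its own section}: sufficiency of $S_{2}\circ R_{1}$ yields $D_{F}\left(S_{1}(t),S_{1}(\nicefrac{1}{2})\right)=D_{F}\left(S_{2}(t),S_{2}(\nicefrac{1}{2})\right)$ for all sections, so this divergence is a function of the spectrum alone, and uniqueness of the spectrum is extracted from that universal function. You need some version of this midpoint comparison (or another mechanism producing an actual equality of $d$-values for the two decompositions) before strictness can be invoked.
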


\begin{proof}
For $i=1,2$ assume that $S_{i}:\left[0,1\right]\to\mathcal{K}$ are
sections with retractions $R_{i}:\mathcal{K}\to\left[0,1\right].$
Then $S_{2}\circ R_{1}$ is sufficient for the pair $\left(S_{1}\left(t\right),S_{1}\left(\nicefrac{1}{2}\right)\right)$
with recovery map $S_{1}\circ R_{2}$ implying that 
\begin{equation}
D_{F}\left(S_{1}\left(t\right),S_{1}\left(\nicefrac{1}{2}\right)\right)=D_{F}\left(S_{2}\left(t\right),S_{2}\left(\nicefrac{1}{2}\right)\right).
\end{equation}
Define $f\left(t\right)=D_{F}\left(S_{1}\left(t\right),S_{1}\left(\nicefrac{1}{2}\right)\right).$
Then $D_{F}\left(S_{2}\left(t\right),S_{2}\left(\nicefrac{1}{2}\right)\right)=f\left(t\right)$
for any section $S_{2}$, so this divergence is completely determined
by the spectrum $\left(t,1-t\right).$ In particular all orthogonal
decompositions have the same spectrum so that the convex body
is spectral.

Let $\mathcal{K}$ denote a spectral convex set of rank 2 with center
$\boldsymbol{\frac{1}{2}}\,.$ If the Bregman divergence $D_{F}$ satisfies sufficiency then
$D_{F}\left(\rho,\sigma\right)=D_{F}\left(\boldsymbol{1}-\rho,\boldsymbol{1}-\sigma\right)$
and 
\begin{align}
D_{F}\left(\rho,\sigma\right) & =\frac{D_{F}\left(\rho,\sigma\right)+D_{F}\left(\boldsymbol{1}-\rho,\boldsymbol{1}-\sigma\right)}{2}\\
 & =\frac{D_{F}\left(\rho,\sigma\right)+D_{\tilde{F}}\left(\rho,\sigma\right)}{2}\\
 & =D_{\frac{F+\tilde{F}}{2}}\left(\rho,\sigma\right)
\end{align}
where $\tilde{F}\left(\sigma\right)$ is defined as $F\left(\boldsymbol{1}-\sigma\right).$
Now $\frac{F+\tilde{F}}{2}$ is convex and invariant under central
inversion. Therefore a regret function on a spectral set of rank 2
is generated by a function that is invariant under central inversion. 

Let $F$ denote a convex function that is invariant under central
inversion and assume that $D_{F}$ satisfies sufficiency. If $\sigma_{0}$ and $\sigma_{1}$ are orthogonal we may define $f\left(t\right)=\frac{1}{2}\cdot F\left(\left(1-t\right)\cdot\sigma_{0}+t\cdot\sigma_{1}\right)$
for $t\in\left[0,1\right]$. Then 
\begin{align}
\mathrm{tr}\left[ f\left(\sigma \right)\right] & = \mathrm{tr}\left[ f\left(1-t\right)\cdot\sigma_{0}+f\left(t \right)\cdot\sigma_{1}\right]\\
&= f\left(1-t\right)\cdot1+f\left(t \right)\cdot1\\
&=2\cdot f\left(t \right)\\
&=2\cdot\frac{1}{2}\cdot F\left(\left(1-t\right)\cdot\sigma_{0}+\left(t \right)\cdot\sigma_{1} \right)\\
&=F\left(\sigma\right)\, ,
\end{align}
which proves Eq. (\ref{eq:trace}).
\end{proof}

\begin{prop}
Let $\mathcal{K}$ denote a spectral convex set of rank 2. If $f:\left[0,1\right]\to\mathbb{R}$
is convex then $F\left(\sigma\right)=\mathrm{tr}\left[f\left(\sigma\right)\right]$
defines a convex function on $\mathcal{K}$ and the regret function
$D_{F}$ satisfies sufficiency.
\end{prop}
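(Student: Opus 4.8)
The plan is to establish two separate claims: first, that $F(\sigma)=\mathrm{tr}[f(\sigma)]$ is a well-defined convex function on $\mathcal{K}$ whenever $f$ is convex; and second, that the induced regret function $D_F$ satisfies sufficiency. For the first claim, the well-definedness is already essentially handled by the spectral-theoretic discussion preceding Section~5: since $\mathcal{K}$ is spectral of rank~2, every $\sigma$ has an orthogonal decomposition $\sigma=(1-t)\cdot\sigma_0+t\cdot\sigma_1$ with the coefficient set $\{1-t,t\}$ uniquely determined, and the computation there shows $f(\sigma)=f(1-t)\cdot\sigma_0+f(t)\cdot\sigma_1=2f(1-t)\cdot c$ gives the same element of the vector space for any choice of orthogonal pair. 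Hence $F(\sigma)=\mathrm{tr}[f(\sigma)]=f(1-t)+f(t)$ depends only on $\sigma$ through its spectrum.

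For convexity of $F$, I would argue as follows. Write $F(\sigma)=g(t)$ where $\{1-t,t\}$ is the spectrum of $\sigma$ and $g(t)=f(t)+f(1-t)$; note $g$ is convex and symmetric about $t=\nicefrac12$, with minimum at $\nicefrac12$. Given $\rho,\sigma\in\mathcal{K}$ and $\lambda\in[0,1]$, set $\tau=(1-\lambda)\cdot\rho+\lambda\cdot\sigma$. The key observation is that the ``eigenvalue map'' $\sigma\mapsto t_\sigma$ (the spectral coefficient in $[\nicefrac12,1]$, say, measuring distance from the center) behaves like a seminorm: it is the gauge/Minkowski functional of the centrally symmetric body $\mathcal{K}$ recentered at $\boldsymbol{\frac12}$, suitably affinely reparametrized so that $t=\nicefrac12$ corresponds to the center and $t\in\{0,1\}$ to the boundary. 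More precisely, if $\sigma=\boldsymbol{\frac12}+v$ in the vector space coordinates then $t_\sigma-\nicefrac12$ is (up to the fixed constant making the boundary correspond to $t\in\{0,1\}$) the gauge $\|v\|_{\mathcal{K}}$, which is a genuine norm since $\mathcal{K}$ is centrally symmetric and has rank~2. Therefore $\|(1-\lambda)v_\rho+\lambda v_\sigma\|_{\mathcal{K}}\le(1-\lambda)\|v_\rho\|_{\mathcal{K}}+\lambda\|v_\sigma\|_{\mathcal{K}}$. Then $F(\tau)=g(\nicefrac12+\|v_\tau\|)\le g(\nicefrac12+(1-\lambda)\|v_\rho\|+\lambda\|v_\sigma\|)\le(1-\lambda)g(\nicefrac12+\|v_\rho\|)+\lambda g(\nicefrac12+\|v_\sigma\|)$, using monotonicity of $g$ on $[\nicefrac12,1]$ for the first inequality and convexity of $g$ for the second, which gives $F(\tau)\le(1-\lambda)F(\rho)+\lambda F(\sigma)$.

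For the sufficiency claim, let $\Phi:\mathcal{K}\to\mathcal{K}$ be an affinity that is sufficient for $(\rho,\sigma)$, with recovery map $\Psi$ so that $\Psi\Phi$ fixes both $\rho$ and $\sigma$. By Proposition~\ref{prop:fix} (applied to the section given by the fix-point set of $\Psi\Phi$, which contains $\rho$ and $\sigma$), this fix-point section is itself a spectral set of rank~2 passing through the center $\boldsymbol{\frac12}$, and $\Phi$ restricted to it is an isomorphism onto its image. Since isomorphisms commute with the functional calculus of Equation~(\ref{eq:spectral_theory}) (the final remark of Section~4), we get $\Phi(f(\rho))=f(\Phi(\rho))$, hence $F(\Phi(\rho))=\mathrm{tr}[f(\Phi(\rho))]=\mathrm{tr}[\Phi(f(\rho))]=\mathrm{tr}[f(\rho)]=F(\rho)$ because $\Phi$ preserves trace; similarly for $\sigma$. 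It remains to check the derivative term in the regret formula~(\ref{eq:Regret}) transforms correctly, but along the segment from $\sigma$ to $\rho$ inside the fix-point section we have $\Phi((1-s)\cdot\sigma+s\cdot\rho)=(1-s)\cdot\Phi(\sigma)+s\cdot\Phi(\rho)$, so the one-sided directional derivative of $F$ at $\Phi(\sigma)$ toward $\Phi(\rho)$ equals that of $F\circ\Phi=F$ at $\sigma$ toward $\rho$ (using $F\circ\Phi=F$ on this section); combining the three matching terms gives $D_F(\Phi(\rho),\Phi(\sigma))=D_F(\rho,\sigma)$.

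I expect the main obstacle to be making the convexity argument fully rigorous, specifically the identification of the spectral coefficient $t_\sigma$ with a norm in the recentered coordinates; one must verify that central symmetry plus rank~2 genuinely forces the gauge functional to be a norm (symmetry gives $\|{-v}\|=\|v\|$, convexity of $\mathcal{K}$ gives the triangle inequality, and rank~2 together with the fact that proper faces of the relevant sections collapse rules out degeneracy), and that the reparametrization from the gauge to the spectral coefficient is affine and order-preserving on $[\nicefrac12,1]$. The sufficiency part is comparatively routine once Proposition~\ref{prop:fix} and the isomorphism-invariance of functional calculus are invoked.
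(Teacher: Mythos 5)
Your proposal is correct, and the sufficiency half follows the paper's route exactly: pass to the fix-point section of $\Psi\circ\Phi$, use Proposition~4 to see it is a spectral rank-2 set through the center on which $\Phi$ acts as an isomorphism commuting with the functional calculus, and conclude that all terms of $D_F$ are preserved (you check the directional-derivative term more explicitly than the paper does, which is a welcome addition rather than a deviation).

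The convexity half is packaged differently. The paper replaces $\rho_1$ by an auxiliary point $\sigma$ co-linear with $\rho_0$ and the center satisfying $F(\sigma)=F(\rho_1)$, asserts $F\left(\left(1-t\right)\rho_0+t\rho_1\right)\leq F\left(\left(1-t\right)\rho_0+t\sigma\right)$, and then uses convexity of $f$ along the common orthogonal decomposition of the co-linear pair. You instead write $F=g\circ N$ with $g(t)=f(t)+f(1-t)$ convex and nondecreasing on $\left[\nicefrac{1}{2},1\right]$ and $N$ the Minkowski gauge of the recentered body, and invoke the standard fact that a nondecreasing convex function of a norm is convex. These rest on the same two ingredients --- the triangle inequality for the gauge of the centrally symmetric body and the monotonicity of $g$ away from the center --- but your formulation has the advantage of making explicit precisely the step the paper leaves unjustified: its first inequality is exactly the statement that $F$ is nondecreasing in the gauge distance together with subadditivity of the gauge, which is what your composition argument supplies. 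The identification $t_\sigma-\nicefrac{1}{2}\propto N(\sigma-c)$ that you flag as the delicate point is unproblematic, since any orthogonal pair in a spectral rank-2 set averages to the center (as shown in Section~4), so the spectral coefficient is an affine, order-preserving function of the gauge.
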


\begin{proof}
Let $\rho_{0}$ and $\rho_{1}$ denote points in $\mathcal{K}$. Let
$\sigma$ denote a point that is co-linear with $\boldsymbol{\frac{1}{2}}$ and $\rho_{0}$
and such that $F\left(\sigma\right)=F\left(\rho_{1}\right).$ Then
\begin{align}
F\left(\left(1-t\right)\cdot\rho_{0}+t\cdot\rho_{1}\right) & \leq F\left(\left(1-t\right)\cdot\rho_{0}+t\cdot\sigma\right)\\
 & =\mathrm{tr}\left[f\left(\left(1-t\right)\cdot\rho_{0}+t\cdot\sigma\right)\right]\\
 & \leq\mathrm{tr}\left[\left(1-t\right)\cdot f\left(\rho_{0}\right)+t\cdot f\left(\sigma\right)\right]\\
 & =\left(1-t\right)\cdot F\left(\rho_{0}\right)+t\cdot F\left(\sigma\right)\\
 & =\left(1-t\right)\cdot F\left(\rho_{0}\right)+t\cdot F\left(\rho_{1}\right)\, ,
\end{align}
which proves that $F$ is convex.

Now we will prove that $D_F$ satisfies sufficiency. Let $\rho , \sigma\in\mathcal{K}$ denote two point and let $\Phi :\mathcal{K}\to\mathcal{K}$  denote an affinity that is sufficient for $\rho , \sigma$ with recovery map $\Psi$. 
Then $\Phi\circ\Psi$ and $\Psi\circ\Phi$ are retractions and the fixpoint set of $\Psi\circ\Phi$ and $\Phi\circ\Psi$ are isomorphic convex bodies. 
Accoring to Proposition \ref{prop:fix} the center of $\mathcal{K}$ a fixpoint under retractions and we see that a decomposition into 
orthogonal extreme point in a fixpoint set is also an orthogonal decomposition in $\mathcal{K}$. 
Therefore $\mathrm{tr}\left[f\left(\sigma\right)\right]$ has the same value when the calculation is done within the fixpoint set of $\Psi\circ\Phi$, which proves the proposition.
\end{proof}
\begin{thm}
Let $\mathcal{K}$ denote a convex body of rank 2 with a sufficient
Bregman divergence $D_{F}$ that is strict. Then the center of $\mathcal{K}$ the unique point that achieves
the minimax regret.
\end{thm}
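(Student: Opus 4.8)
The plan is to build directly on the preceding lemma. Since $D_F$ is strict and sufficient on a convex body of rank $2$, that lemma gives that $\mathcal{K}$ is spectral, hence (by the theorem that a spectral set of rank $2$ is centrally symmetric) it has a center, which we denote $\boldsymbol{\frac{1}{2}}$, with central inversion $\rho\mapsto\boldsymbol{1}-\rho$, and moreover $F(\sigma)=\mathrm{tr}\left[f(\sigma)\right]$ for a convex $f:[0,1]\to\mathbb{R}$ acting through the functional calculus (\ref{eq:spectral_theory}). In particular, for any orthogonal pair of extreme points $\sigma_0,\sigma_1$ we have $\tfrac12\sigma_0+\tfrac12\sigma_1=\boldsymbol{\frac{1}{2}}$ by Proposition \ref{prop:fix}, while $F(\sigma_0)=F(\sigma_1)=f(0)+f(1)$ and $F(\boldsymbol{\frac{1}{2}})=2f(\tfrac12)$; since a point with spectrum $\{1-t,t\}$ has $F$-value $f(1-t)+f(t)$, convexity of $f$ shows that $F$ is maximized on $\mathcal{K}$, with value $f(0)+f(1)$, exactly at the extreme points.

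The first step is to show that the Bregman divergence from an arbitrary point to the center collapses to $D_F(\rho,\boldsymbol{\frac{1}{2}})=F(\rho)-F(\boldsymbol{\frac{1}{2}})$. Two facts already available in the proof of the lemma combine to give this: sufficiency yields $D_F(\rho,\boldsymbol{\frac{1}{2}})=D_F(\boldsymbol{1}-\rho,\boldsymbol{1}-\boldsymbol{\frac{1}{2}})=D_F(\boldsymbol{1}-\rho,\boldsymbol{\frac{1}{2}})$ (using $\boldsymbol{1}-\boldsymbol{\frac{1}{2}}=\boldsymbol{\frac{1}{2}}$), and $F$ is invariant under central inversion, so $F(\boldsymbol{1}-\rho)=F(\rho)$. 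Writing out $D_F(\rho,\boldsymbol{\frac{1}{2}})$ and $D_F(\boldsymbol{1}-\rho,\boldsymbol{\frac{1}{2}})$ with the Bregman formula, the two linear terms $\pm\langle\nabla F(\boldsymbol{\frac{1}{2}})\mid\rho-\boldsymbol{\frac{1}{2}}\rangle$ are negatives of one another, so averaging the two equal quantities leaves $D_F(\rho,\boldsymbol{\frac{1}{2}})=F(\rho)-F(\boldsymbol{\frac{1}{2}})$. Hence $\sup_{\rho}D_F(\rho,\boldsymbol{\frac{1}{2}})=\max_{\rho}F(\rho)-F(\boldsymbol{\frac{1}{2}})=f(0)+f(1)-2f(\tfrac12)$, attained at every extreme point.

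The second step bounds $\sup_\rho D_F(\rho,\sigma)$ from below for an arbitrary $\sigma$. Fixing an orthogonal pair of extreme points $\sigma_0,\sigma_1$ (so that $\tfrac12\sigma_0+\tfrac12\sigma_1=\boldsymbol{\frac{1}{2}}$) and applying the Bregman equation gives
\begin{equation}
\tfrac12 D_F(\sigma_0,\sigma)+\tfrac12 D_F(\sigma_1,\sigma)=\tfrac12 D_F(\sigma_0,\boldsymbol{\tfrac{1}{2}})+\tfrac12 D_F(\sigma_1,\boldsymbol{\tfrac{1}{2}})+D_F(\boldsymbol{\tfrac{1}{2}},\sigma).
\end{equation}
The left-hand side is at most $\sup_\rho D_F(\rho,\sigma)$; on the right, each of the first two terms equals $f(0)+f(1)-2f(\tfrac12)$ by the first step and $D_F(\boldsymbol{\frac{1}{2}},\sigma)\ge0$. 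Therefore $\sup_\rho D_F(\rho,\sigma)\ge f(0)+f(1)-2f(\tfrac12)=\sup_\rho D_F(\rho,\boldsymbol{\frac{1}{2}})$ for every $\sigma$, which identifies $C_F=f(0)+f(1)-2f(\tfrac12)$ and shows the center attains the minimax regret. For uniqueness, if $\sigma$ also attains $C_F$, then the left-hand side of the displayed identity is $\le C_F$ while its right-hand side equals $C_F+D_F(\boldsymbol{\frac{1}{2}},\sigma)$, forcing $D_F(\boldsymbol{\frac{1}{2}},\sigma)=0$; strictness of $D_F$ then gives $\sigma=\boldsymbol{\frac{1}{2}}$.

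The one point I expect to require a little care is the cancellation in the first step — namely, checking cleanly from the sufficiency symmetry and the central-inversion invariance of $F$ that $\langle\nabla F(\boldsymbol{\frac{1}{2}})\mid\rho-\boldsymbol{\frac{1}{2}}\rangle=0$ for all $\rho\in\mathcal{K}$, so that $D_F(\cdot,\boldsymbol{\frac{1}{2}})$ reduces to a difference of $F$-values. Everything after that is routine manipulation of the Bregman equation together with the trace representation $F=\mathrm{tr}\left[f(\cdot)\right]$.
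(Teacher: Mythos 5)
Your proof is correct, and it takes a genuinely different route from the paper's. The paper's proof combines the two minimax inequalities established earlier --- Inequality (\ref{eq:redundansulighed}) applied to the uniform mixture of an orthogonal pair $S(0),S(1)$, and Inequality (\ref{eq:maxred}) --- with the observation that sufficiency makes $D_F\left(\rho,\boldsymbol{\frac{1}{2}}\right)$ depend only on the spectrum of $\rho$, hence be maximized at pure states; squeezing these chains gives $C_F=D_F\left(S(1),\boldsymbol{\frac{1}{2}}\right)$ and $D_F\left(\sigma_{opt},\boldsymbol{\frac{1}{2}}\right)=0$. You instead first prove the pointwise identity $D_F\left(\rho,\boldsymbol{\frac{1}{2}}\right)=F(\rho)-F\left(\boldsymbol{\frac{1}{2}}\right)$ by averaging the divergence with its image under central inversion (the trace-form $F$ is invariant under $\rho\mapsto\boldsymbol{1}-\rho$ while the linear term changes sign, so $\left\langle \nabla F\left(\boldsymbol{\frac{1}{2}}\right)\mid\rho-\boldsymbol{\frac{1}{2}}\right\rangle$ cancels), which yields the explicit value $C_F=f(0)+f(1)-2f\left(\tfrac{1}{2}\right)$; a single application of the Bregman equation then lower-bounds $\sup_\rho D_F(\rho,\sigma)$ for every $\sigma$ and delivers uniqueness via strictness. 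This buys two things: you never invoke Inequality (\ref{eq:maxred}), whose proof in the paper carries extra regularity hypotheses (lower semicontinuity and $F\in C^1$), and you do not need to presuppose that some $\sigma_{opt}$ attains the infimum --- existence and uniqueness of the optimizer both fall out of the argument. The one step you flag as delicate, the vanishing of the gradient term at the center, is handled cleanly by your symmetrization, and the remaining ingredients (orthogonal pairs averaging to the center via Proposition \ref{prop:fix}, and $f(t)+f(1-t)\leq f(0)+f(1)$ by convexity) are all available from the preceding lemma.
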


\begin{proof}
Let $S:\left[0,1\right]\to\mathcal{K}$ denote a section. Then $S\left(\nicefrac{1}{2}\right)=\boldsymbol{\frac{1}{2}}$ and
\begin{align}
C_{F} & \geq\frac{1}{2}\cdot D_{F}\left(S\left(0\right),S\left(\nicefrac{1}{2}\right)\right)+\frac{1}{2}\cdot D_{F}\left(S\left(1\right),S\left(\nicefrac{1}{2}\right)\right)+D_{F}\left(S\left(\nicefrac{1}{2}\right),\sigma_{opt}\right)\\
 & =D_{F}\left(S\left(1\right),\boldsymbol{\frac{1}{2}}\right)+D_{F}\left(\boldsymbol{\frac{1}{2}},\sigma_{opt}\right)
\end{align}
 Further we have 
\begin{equation}
\sup_{\rho\in\mathcal{K}}D_{F}\left(\rho,\boldsymbol{\frac{1}{2}}\right)\geq C_{F}+D_{F}\left(\sigma_{opt},\boldsymbol{\frac{1}{2}}\right).
\end{equation}
Now $\rho=S_{\rho}\left(t\right)$ for some section $S_{\rho}$ and
some $t\in\left[0,1\right].$ Therefore 
\begin{align}
D_{F}\left(\rho,\boldsymbol{\frac{1}{2}}\right) & =D_{F}\left(S_{\rho}\left(t\right),S\left(\nicefrac{1}{2}\right)\right)\\
 & =D_{F}\left(S_{\rho}\left(t\right),S_{\rho}\left(\nicefrac{1}{2}\right)\right)\\
 & =D_{F}\left(S\left(t\right),S\left(\nicefrac{1}{2}\right)\right)\\
 & \leq D_{F}\left(S\left(1\right),\boldsymbol{\frac{1}{2}}\right).
\end{align}
Therefore $C_{F}=D_{F}\left(S\left(1\right),\boldsymbol{\frac{1}{2}}\right)$
and $D_{F}\left(\sigma_{opt},\boldsymbol{\frac{1}{2}}\right)=0$
implying $\sigma_{opt}=\boldsymbol{\frac{1}{2}}\, .$
\end{proof}
If the Bregman divergence is based on Shannon entropy then the minimax
regret is called the capacity and the result is that a convex body
of rank 2 has a capacity of 1 bit.

\section{Spin Factors}

We say that a convex body is a \emph{Hilbert ball} if the convex body
can be embedded as a unit ball in a $d$ dimensional real Hilbert
space $\mathcal{H}$ with some inner product that will be denoted
$\left\langle \cdot\mid\cdot\right\rangle .$ The positive elements are
the elements $\left(\vec{v},s\right)$ where $\left\Vert \vec{v}\right\Vert _{2}\leq s.$
The trace
of the spin factor is $\mathrm{tr}\left[\left(\vec{v},s\right)\right]=2s.$ 

The direct sum $\mathcal{H}\oplus\mathbb{R}$
can be equipped a product $\bullet$ by 
\begin{equation}
\left(\vec{v},s\right)\bullet\left(\vec{w},t\right)=\left(t\cdot\vec{v}+s\cdot\vec{w},\left\langle \vec{v}\left|\vec{w}\right.\right\rangle +s\cdot t\right)\, .
\end{equation}
This product is distributive and $\left(\vec{v},1\right)\bullet\left(-\vec{v},1\right)=0.$
Therefore $x^{2}$ defined via (\ref{eq:spectral_theory}) will be equal to $x\bullet x$ and $\left(\mathcal{H}\oplus\mathbb{R},\bullet\right)$
becomes a \emph{formally real Jordan algebra} of the type that is called a
\emph{spin factor} and is denoted $JSpin_{d}$. The unit of a spin factor is $\left(\vec{0},1\right)$ and will be denoted $\boldsymbol{1}.$ See \cite{McCrimmon2004}
for general results on Jordan algebras. 

Let $\mathcal{M}_{n}\left(\mathbb{F}\right)$ denote $n\times n$ matrices over
$\mathbb{F}$ where $\mathbb{F}$ may denote the real numbers $\mathbb{R}$
or the complex numbers $\mathbb{C}$ or the quaternions $\mathbb{H}$
or the octonions $\mathbb{O}.$ Let $\left(\mathcal{M}_{n}\left(\mathbb{F}\right)\right)_{h}$
denote the set of self-adjoint matrices of $\mathcal{M}_{n}\left(\mathbb{F}\right).$
Then $\left(\mathcal{M}_{n}\left(\mathbb{F}\right)\right)_{h}$ is
a formally real Jordan algebra with a Jordan product $\bullet$ is
given by 
\begin{equation}
x\bullet y=\frac{1}{2}\left(xy+yx\right)
\end{equation}
except for $\mathbb{F}=\mathbb{O}$ where one only get a Jordan algebra
when $n\leq3.$ The self-adjoint $2\times2$ matrices with real, complex,
quaternionic or octonionic entries can be identified with spin factors
with dimension $d=2,$ $d=3,$ $d=5,$ or $d=9.$ The most important
examples of spin factors are the bit $JSpin_{1}$ and the qubit $JSpin_{3}$. 

We introduce the Pauli matrices 
\begin{equation}
\boldsymbol{\sigma}_{1}=\left(\begin{array}{cc}
0 & 1\\
1 & 0
\end{array}\right),\,\,\boldsymbol{\sigma}_{3}=\left(\begin{array}{cc}
1 & 0\\
0 & -1
\end{array}\right)
\end{equation}
 and observe that $\boldsymbol{\sigma}_{1}\bullet\boldsymbol{\sigma}_{3}=\boldsymbol{0}.$
Let $\boldsymbol{v}_{1},\boldsymbol{v}_{2},\dots,\boldsymbol{v}_{d}$ denote a basis of
the Hilbert space $\mathcal{H}.$ Let the function $S:JSpin_{d}\to\left(\mathcal{M}_{2}\left(\mathbb{R}\right)\right)^{\otimes \left(d-1\right)}$
be defined by
\begin{align}
S\left(\boldsymbol{1}\right) & =\boldsymbol{1}\otimes\boldsymbol{1}\otimes\boldsymbol{1}\otimes\dots\otimes\boldsymbol{1}\, ,\\
S\left(\boldsymbol{v}_{1}\right) & =\boldsymbol{\sigma}_{1}\otimes\boldsymbol{1}\otimes\boldsymbol{1}\otimes\dots\otimes\boldsymbol{1}\, ,\\
S\left(\boldsymbol{v}_{2}\right) & =\boldsymbol{\sigma}_{3}\otimes\boldsymbol{\sigma}_{1}\otimes\boldsymbol{1}\otimes\dots\otimes\boldsymbol{1}\, ,\\
S\left(\boldsymbol{v}_{3}\right) & =\boldsymbol{\sigma}_{3}\otimes\boldsymbol{\sigma}_{3}\otimes\boldsymbol{\sigma}_{1}\otimes\dots\otimes\boldsymbol{1}\, ,\\
&\,\,\,\vdots\\
S\left(\boldsymbol{v}_{d-1}\right) & =\boldsymbol{\sigma}_{3}\otimes\boldsymbol{\sigma}_{3}\otimes\boldsymbol{\sigma}_{3}\otimes\dots\otimes\boldsymbol{\sigma}_{1}\, ,\\
S\left(\boldsymbol{v}_{d}\right) & =\boldsymbol{\sigma}_{3}\otimes\boldsymbol{\sigma}_{3}\otimes\boldsymbol{\sigma}_{3}\otimes\dots\otimes\boldsymbol{\sigma}_{3}\, .
\end{align}
Then $S$ can be linearly extended and one easily checks that 
\begin{equation}
S\left(x\bullet y\right)=S\left(x\right)\bullet S\left(y\right).
\end{equation}
Now $S\left(JSpin_{d}\right)$ is a linear subspace of the real Hilbert
space $\left(\mathcal{M}_{2}\left(\mathbb{R}\right)\right)^{\otimes \left(d-1\right)}$
so there exists a projection of $\left(\mathcal{M}_{2}\left(\mathbb{R}\right)\right)^{\otimes \left(d-1\right)}$
onto $S\left(JSpin_{d}\right)$ and this projection maps symmetric
matrices in $\left(\mathcal{M}_{2}\left(\mathbb{R}\right)\right)^{\otimes \left(d-1\right)}$
into symmetric matrices. Therefore $S$ is a section with a retraction
generated by the projection. In this way $JSpin_{d}$ is a section
of a Jordan algebra of symmetric matrices with real entries. The Jordan
algebra $\mathcal{M}_{n}\left(\mathbb{R}\right)_{h}$ is obviously
a section of $\mathcal{M}_{n}\left(\mathbb{C}\right)_{h}$ so $JSpin_{d}$ is
a section of $\mathcal{M}_{n}\left(\mathbb{C}\right)_{h}.$ Note that the projection of $\mathcal{M}_{n}\left(\mathbb{C}\right)_{h}$ on a spin factor is not necessarily completely positive. 

Since the standard
formalism of quantum theory represents states as density matrices in
$\mathcal{M}_{n}\left(\mathbb{C}\right)_{h}$ we see that spin factors appear
as sections of state spaces of the usual formalism of quantum theory.
Therefore the points in the Hilbert ball are called \emph{states}
and the Hilbert ball is called the \emph{state space} of the spin factor.
The extreme points in the state space are called \emph{pure states}. 

The positive cone of a spin factor is self-dual in the sense that
any positive functional $\phi:JSpin_{d}\to\mathbb{R}$ is given by
$\phi\left(x\right)=\mathrm{tr}\left[x\bullet y\right]$ for some
uniquely determined positive element $y.$ We recall the definition of the polar set of a convex body $\mathcal{K}\subseteq \mathbb{R}^d$
\begin{equation}
\mathcal{K}^{\circ} =\left\{ y\in\mathbb{R}^d\mid \langle x,y\rangle\leq1\text{ for all }x\in\mathcal{K}\right\}\, .
\end{equation}
\begin{prop}
Assume that the cone generated by a spectral convex body $\mathcal{K}$ of rank 2
is self-dual. Then it can be represented as a spin factor. 
\end{prop}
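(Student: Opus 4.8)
The plan is to show that self-duality forces the state space $\mathcal{K}$ to be a Hilbert ball, which by Section~6 is exactly the state space of a spin factor.

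First I would invoke the theorem of Section~4 that a spectral body of rank~$2$ is centrally symmetric. Writing $c$ for its center, $\boldsymbol{1}=2c$ and $W=\ker(\mathrm{tr})$, the ambient space becomes $\mathbb{R}\boldsymbol{1}\oplus W$ and the cone acquires the ``ice-cream'' form $\mathcal{K}_{+}=\{\,t\boldsymbol{1}+u:u\in W,\ N(u)\le t\,\}$, where $N$ is the Minkowski gauge of the centrally symmetric base ball $B\subseteq W$. One also checks that $\mathcal{K}$ has no proper face of positive dimension: an edge $[\sigma_{0},\sigma_{1}]$ is a face isomorphic to $[0,1]$ and hence witnesses that $\sigma_{0}$ and $\sigma_{1}$ are orthogonal, so by the corollary on uniqueness of orthogonal complements of extreme points together with central symmetry $\sigma_{1}=\boldsymbol{1}-\sigma_{0}$, whence the midpoint of the edge would be the interior point $c$ -- a contradiction. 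Thus $B$ is strictly convex. Since the positive cone of $JSpin_{d}$ is precisely such an ice-cream cone with $N$ Euclidean, it remains to prove that self-duality forces $N$ to be induced by an inner product on $W$.

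Assume, towards a contradiction, that $B$ is not an ellipsoid; then its polar $B^{\circ}$ is not an ellipsoid either. Fix a reference inner product $\langle\cdot,\cdot\rangle_{0}$ with $\boldsymbol{1}\perp_{0}W$ and $\langle\boldsymbol{1},\boldsymbol{1}\rangle_{0}=1$; a short computation then gives $\mathcal{K}_{+}^{*}=\{\,s\boldsymbol{1}+w:N^{*}(w)\le s\,\}$ with $N^{*}$ the dual norm on $W$, so $\mathcal{K}_{+}^{*}$ is a \emph{rigid} ice-cream cone over $B^{\circ}$, meaning that no linear automorphism moves its axis $\mathbb{R}\boldsymbol{1}$. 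Indeed an axis-moving automorphism would carry the base onto a bounded centrally symmetric section of $\mathcal{K}_{+}^{*}$ with off-axis center; such a section cannot be horizontal, since an affine automorphism of a centrally symmetric body fixes its center, so $\mathcal{K}_{+}^{*}$ would admit a genuinely tilted bounded centrally symmetric section, and by the classical fact that a projectivity between centrally symmetric convex bodies is affine unless both are ellipsoids this would make $B^{\circ}$ an ellipsoid. In particular the reflection $\rho\colon t\boldsymbol{1}+u\mapsto t\boldsymbol{1}-u$ is the only involutory automorphism of $\mathcal{K}_{+}^{*}$ with one-dimensional $+1$-eigenspace, because any such involution has the form $R_{O}$ with $O^{2}=I$ and $\ker(O-I)=0$, forcing $O=-I$.

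Now bring in self-duality: writing the self-duality inner product as $g(x,y)=\langle Px,y\rangle_{0}$ with $P$ positive definite, self-duality reads $P(\mathcal{K}_{+})=\mathcal{K}_{+}^{*}$. Then $P\rho P^{-1}$ is an involutory automorphism of $\mathcal{K}_{+}^{*}$ with one-dimensional $+1$-eigenspace $\mathbb{R}(P\boldsymbol{1})$, hence $P\rho P^{-1}=\rho$, so $P$ commutes with $\rho$ and splits as $P=\mu\oplus Q$ with $\mu>0$ and $Q$ positive definite on $W$. Comparing the two cones gives $N^{*}(Qv)=\mu N(v)$ for all $v$, i.e. $Q(B)=\mu B^{\circ}$; passing to $B'=Q^{1/2}(B)$ yields $B'^{\circ}=\mu^{-1}B'$, so $B'$ is self-polar up to a scalar. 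But a self-polar body is a round ball: for $x\in B'$ we have $\langle x,x\rangle_{0}\le\sup_{y\in B'}\langle x,y\rangle_{0}=\mu\,\|x\|_{B'}\le\mu$, so $B'\subseteq\sqrt{\mu}\,B_{2}$ (with $B_{2}$ the unit ball of $\langle\cdot,\cdot\rangle_{0}\!\mid_{W}$), and applying polarity gives the reverse inclusion, whence $B'$ is Euclidean. Then $B=Q^{-1/2}(B')$ is an ellipsoid, contradicting our assumption. Therefore $B$ is an ellipsoid, $\mathcal{K}$ is a Hilbert ball, and it is the state space of a spin factor. The step I expect to be the main obstacle is the rigidity claim -- that a non-ellipsoidal ice-cream cone admits no axis-moving automorphism, equivalently no genuinely tilted bounded centrally symmetric section -- which is where the non-trivial convex-geometric input enters; everything afterwards is the conjugation argument pinning $P$ to block-diagonal form together with the one-line ``self-polar implies ball'' estimate.
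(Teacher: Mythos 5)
Your overall strategy is the same as the paper's: reduce self-duality of the cone to self-polarity of the centered body, then use the fact that a self-polar body is a Euclidean ball. The pieces you work out explicitly --- central symmetry from spectrality, the identification of the dual cone as the ice-cream cone over $B^{\circ}$, the reduction of $P$ to block-diagonal form once one knows $P\boldsymbol{1}\in\mathbb{R}\boldsymbol{1}$, and the polarity estimate showing that $B'^{\circ}=\mu^{-1}B'$ forces $B'$ to be a round ball --- are all correct, and the last of these is a step the paper merely asserts. You have also correctly located where the real content lies: the paper's sentence ``the set $\mathcal{K}$ is self-polar'' silently assumes that the self-duality isomorphism respects the trace/center decomposition, which is exactly your rigidity claim.

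That rigidity claim is where the proof breaks. The ``classical fact'' you invoke --- that a projectivity between centrally symmetric convex bodies is affine unless both are ellipsoids --- is not only left unproved, it is false. The divisible strictly convex projective domains of Kac--Vinberg type (and Benoist's later examples) give counterexamples: take a dividing group $\Gamma\subset SL_{3}\left(\mathbb{R}\right)$ containing an order-two elliptic element with isolated fixed point $p$ in the domain $\Omega$ and fixed line $\ell$ disjoint from $\overline{\Omega}$; in the affine chart with $\ell$ at infinity, $\Omega$ is a bounded, strictly convex, centrally symmetric body about $p$ that is not an ellipse, yet every $\gamma\in\Gamma$ with $\gamma\left(\ell\right)\neq\ell$ is a non-affine projectivity of $\Omega$ onto itself. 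Equivalently, the cone over such a body admits involutive automorphisms with one-dimensional $+1$-eigenspace off the axis, so your conclusion $P\rho P^{-1}=\rho$ does not follow from central symmetry and non-ellipsoidality alone. To close the gap you would have to bring the self-duality hypothesis itself to bear on the location of the axis --- for instance by analysing the automorphism $A=P^{-1}\rho P\rho$ of $\mathcal{K}_{+}$, which is diagonalizable with positive spectrum and reduces to the identity (hence to your block-diagonal case) precisely when it is scalar --- or find a different route altogether. As written, the decisive step is missing.
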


\begin{proof}
If $\phi$ is a test on $\mathcal{K}$ then $2\cdot\phi -1$ maps $\mathcal{K}$ into $[0,1]$, 
which an element in the polar set of $\mathcal{K}$ embedded in a Hilbert space with the center as the origin. 
Since the cone is assumed to be self-dual the set $\mathcal{K}$ is self-polar and Hilbert balls are the the only self-polar sets. 
The result follows because a Hilbert ball can be represented as the state space of a spin factor.
\end{proof}
A convex body $\mathcal{K}$ of rank 2 is said to have \emph{symmetric
transission probabilities} if for any extreme points $\sigma_{1}$
and $\sigma_{2}$ there exists retractions $R_{1}:\mathcal{K}\to\left[-1,1\right]$
and $R_{2}:\mathcal{K}\to\left[-1,1\right]$ such that $R_{i}\left(\sigma_{i}\right)=1$,
and $R_{1}\left(\sigma_{2}\right)=R_{2}\left(\sigma_{1}\right).$

\begin{thm}
A spectral convex body $\mathcal{K}$ of rank 2 with symmetric transmission
probabilities can be represented by a spin factor.
\end{thm}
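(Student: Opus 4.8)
The plan is to deduce the statement from the preceding proposition by showing that the cone generated by $\mathcal{K}$ is self-dual. Being spectral of rank $2$, the body $\mathcal{K}$ is centrally symmetric, and it is strictly convex in the sense that every proper face is a single point, since a centrally symmetric set with a proper face that is not an extreme point is not spectral. I would therefore translate the center to the origin of the vector space $V=\mathrm{aff}(\mathcal{K})$ and let $B$ denote the resulting body; then every boundary point of $B$ is extreme, and $-a$ is the central inversion of $a\in\partial B$.

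First I would rewrite the hypothesis of symmetric transmission probabilities in this picture. If $R\colon\mathcal{K}\to[-1,1]$ is a retraction with $R(\sigma)=1$ for an extreme point $\sigma$, then $R$ is surjective, so $R^{-1}(1)$ is a proper face and hence the singleton $\{\sigma\}$; since by Proposition~\ref{prop:fix} a retraction carries the center to the center, $R$ is, in the centered coordinates, a linear functional $\ell$ with $\ell\le 1$ on $B$ and $\ell(a)=1$, i.e.\ a normalized supporting functional of $B$ at the boundary point $a$ corresponding to $\sigma$, and conversely every such $\ell$ yields a retraction through $t\mapsto ta$. So symmetric transmission probabilities says: for all $a,b\in\partial B$ there are normalized supporting functionals $\ell$ at $a$ and $m$ at $b$ with $\langle\ell,b\rangle=\langle m,a\rangle$.

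The delicate point is that a priori these functionals, and hence the pairing, may depend on the pair $(a,b)$, whereas what I want is a single symmetric positive-definite bilinear form. I would handle this by restricting attention to the set $\partial B^{\mathrm{sm}}$ of boundary points at which $B$ has a unique supporting hyperplane; this set is dense in $\partial B$, because the non-smooth points of the boundary of a convex body form a set of $(d-1)$-dimensional measure zero. For $a\in\partial B^{\mathrm{sm}}$ the normalized supporting functional $\ell_a$ is now canonical, so the hypothesis forces $\langle\ell_a,b\rangle=\langle\ell_b,a\rangle$ for all $a,b\in\partial B^{\mathrm{sm}}$ with no freedom left. Choosing a basis $a_1,\dots,a_d$ of $V$ inside $\partial B^{\mathrm{sm}}$ and defining $T\colon V\to V^{*}$ linearly by $T a_i=\ell_{a_i}$, one checks by testing against each $a_j$ that $T c=\ell_c$ for every $c\in\partial B^{\mathrm{sm}}$. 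Hence the matrix $\langle T a_i,a_j\rangle=\langle\ell_{a_i},a_j\rangle$ is symmetric, so $T=T^{*}$, and $\langle T x,x\rangle=\|x\|_B^2$ on the cone over $\partial B^{\mathrm{sm}}$, hence on all of $V$ by continuity, since that cone is dense. Thus $\langle T\,\cdot\,,\,\cdot\,\rangle$ is an inner product whose unit ball is $B$, i.e.\ $B$ is a Hilbert ball.

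A Hilbert ball is self-polar, so after adjoining the trace direction to $\langle T\,\cdot\,,\,\cdot\,\rangle$ the cone generated by $\mathcal{K}$ becomes self-dual, and the preceding proposition — equivalently, the identification of Hilbert balls with spin factor state spaces made in Section~6 — represents $\mathcal{K}$ as a spin factor. I expect the main obstacle to be precisely the reduction in the third paragraph: the hypothesis only provides the symmetric pairing pair-by-pair, with functionals that could vary with the pair, so one must first localize to the dense set of smooth boundary points, where the supporting functional is forced, before the pairing can be recognized as a single symmetric positive-definite form. Everything else reduces to the properties of spectral bodies of rank $2$ — central symmetry and strict convexity — that are already established.
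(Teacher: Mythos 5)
Your proof is correct, but it takes a genuinely different route from the paper's. The paper also begins from the observation that almost every extreme point has a unique retraction (your smooth boundary points), but it then works inside the two-dimensional central section $\mathcal{L}$ spanned by two such points $\sigma_1,\sigma_2$ and the center: for a generic extreme point $\sigma$ of $\mathcal{L}$ it shows that $R_1(\sigma)\cdot\sigma-\sigma_1$ and $R_2(\sigma)\cdot\sigma-\sigma_2$ are parallel (both lie in the kernel of the retraction at $\sigma$), expands the resulting vanishing determinant using $R_1(\sigma_2)=R_2(\sigma_1)$ to get a quadratic equation in the coordinates of $\sigma$, and concludes that almost all (hence all) extreme points of $\mathcal{L}$ lie on a bounded conic, i.e.\ an ellipse; it then passes from ``every planar central section is an ellipse'' to ``$\mathcal{K}$ is an ellipsoid.'' You instead build the quadratic form globally: identifying retractions with normalized supporting functionals, using the symmetry hypothesis only at smooth points where the functional is forced, and extending $a\mapsto\ell_a$ to a single symmetric linear map $T$ by testing against a basis of smooth boundary points. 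Your argument buys two things: it avoids the determinant computation entirely, and it avoids the final (uncited in the paper) appeal to the classical fact that a convex body all of whose planar central sections are ellipses is an ellipsoid, since you obtain the quadratic form on all of $V$ in one step. The paper's version is more elementary in that each step is a two-dimensional computation. Both arguments rest on the same nontrivial input, namely that the non-smooth points of the boundary of a convex body are negligible, and both use Proposition~\ref{prop:fix} to normalize retractions so that they become linear in centered coordinates; your closing detour through self-duality is unnecessary (a Hilbert ball is already the state space of a spin factor by the construction in Section~6) but harmless.
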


\begin{proof}
 For almost all extreme points $\sigma$ of $\mathcal{K}$ a retraction $R:\mathcal{K}\to\left[-1,1\right]$ with $R\left(\sigma\right)=1$ is uniquely determined. 
Let $\sigma_1$ and $\sigma_2$ be two extreme points that are not antipodal and with unique retractions $R_1$ and $R_2$. Let $\mathcal{L}$ denote the intersection of $\mathcal{K}$ with the affine span of $\sigma_1 ,\, \sigma_2$ and the center.
Embed $\mathcal{L}$ in a 2-dimensional coordinate system with the center of $\mathcal{L}$ as origin of the coordinate system. 
Let $\sigma$ denote an extreme point with a unique retraction $R$. Then $R_{1}\left(\sigma\right)\cdot \sigma-\sigma_{1}$ is parallel with $R_{2}\left(\sigma\right)\cdot \sigma-\sigma_{2}$ because
\begin{align}
R\left(R_{i}\left(\sigma\right)\cdot \sigma-\sigma_{i}\right)&=R_{i}\left(\sigma\right)\cdot R\left(\sigma\right)-R\left(\sigma_{i}\right)\\
&=R_{i}\left(\sigma\right)\cdot 1-R_{i}\left(\sigma\right)\\
&=0.
\end{align}
Therefore the determinant of $R_{1}\left(\sigma\right)\cdot \sigma-\sigma_{1}$ and $R_{2}\left(\sigma\right)\cdot \sigma-\sigma_{2}$ is zero, but the determinant can be calculated as
\begin{multline}
\det\left(R_{1}\left(\sigma\right)\cdot \sigma-\sigma_{1},R_{2}\left(\sigma\right)\cdot \sigma-\sigma_{2}\right)\\
=0-\det\left( R_{1}\left(\sigma\right)\cdot \sigma,\sigma_{2}\right)
-\det\left(\sigma_{1},R_{2}\left(\sigma\right)\cdot \sigma\right)+\det\left(\sigma_{1},\sigma_{2}\right)\\
=\det\left(  \sigma,R_{2}\left(\sigma\right)\cdot\sigma_{1}-R_{1}\left(\sigma\right)\cdot\sigma_{2}\right)
-\det\left(\sigma_{2},\sigma_{1}\right)\, .
\end{multline}
This means that $\sigma$ satisfies the following equation
\begin{equation}
\det\left(  \sigma,R_{2}\left(\sigma\right)\cdot\sigma_{1}-R_{1}\left(\sigma\right)\cdot\sigma_{2}\right)
=\det\left(\sigma_{2},\sigma_{1}\right) .
\end{equation}
This is a quadratic equation in the coordinates of $\sigma$, which implies that $\sigma$ lies on a conic section. Since $\mathcal{L}$ is bounded this conic section must be a circle or an ellipsoid.
Almost
all extreme points of $\mathcal{L}$ have unique retractions.
Therefore almost all extreme points lie on a circle or an ellipsoid which by convexity implies that all extreme points of $\mathcal{L}$ lie on an ellipsoid or a circle. 
Since this holds for almost all pairs $\sigma_1$ and $\sigma_2$ the convex set $\mathcal{K}$ must be an ellipsoid, which can be mapped into a ball.
\end{proof}
\begin{defn}
Let $\mathcal{A}\subseteq JSpin_{d}$ denote a subalgebra of a spin
factor. Then $\mathbb{E}:JSpin_{d}\to A$ is called a conditional
expectation if $\mathbb{E}\left(1\right)=1$ and $\mathbb{E}\left(a\bullet x\right)=a\bullet\mathbb{E}\left(x\right)$ for any $a\in\mathcal{A}$.
\end{defn}

\begin{thm}
Let $\mathcal{K}$ denote the state space of a spin factor and assume that
$\Phi:\mathcal{K}\to\mathcal{K}$ is an idempotent that preserves the center. Then $\Phi$
is a conditional expectation of the spin factor into a sub-algebra
of the spin factor. 
\end{thm}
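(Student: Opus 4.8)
The plan is to pass to coordinates on $\mathcal{H}\oplus\mathbb{R}$ and show that $\Phi$ is, up to the trace direction, just the orthogonal projection of $\mathcal{H}$ onto a subspace. First I would extend the idempotent affinity $\Phi$ to a linear trace-preserving map $T\colon\mathcal{H}\oplus\mathbb{R}\to\mathcal{H}\oplus\mathbb{R}$ of the vector space generated by $\mathcal{K}$, exactly as in the Lemma of Section~2; since $\Phi$ is idempotent on the spanning set $\mathcal{K}$, also $T^{2}=T$. Writing $T$ in block form with respect to $\mathcal{H}\oplus\mathbb{R}$ as $T(\vec{v},s)=(A\vec{v}+s\vec{b},\langle\vec{c},\vec{v}\rangle+es)$, the identity $\mathrm{tr}\circ T=\mathrm{tr}$, with $\mathrm{tr}[(\vec{v},s)]=2s$, forces $\vec{c}=\vec{0}$ and $e=1$, and preservation of the center, i.e.\ $T\boldsymbol{1}=\boldsymbol{1}$, then forces $\vec{b}=\vec{0}$. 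Hence $T(\vec{v},s)=(A\vec{v},s)$ for a single linear operator $A\colon\mathcal{H}\to\mathcal{H}$, with $A^{2}=A$ from $T^{2}=T$; and since $\Phi$ maps the state ball $\{(\vec{v},\nicefrac{1}{2}):\|\vec{v}\|\le\nicefrac{1}{2}\}$ into itself, $A$ is a contraction, $\|A\|\le1$.

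Next I would invoke the standard fact that a contractive idempotent on a real Hilbert space is the orthogonal projection onto its range: if $A^{2}=A$ and $\|A\|\le1$, then for $x\in\mathrm{ran}\,A$ (so $Ax=x$) and $z\in\ker A$ we have $\|x\|=\|A(x+tz)\|\le\|x+tz\|$ for all $t\in\mathbb{R}$, and minimising the right-hand side over $t$ gives $\langle x,z\rangle=0$; hence $\mathrm{ran}\,A\perp\ker A$ and $A=P_{W}$ with $W:=\mathrm{ran}\,A$. Put $\mathcal{A}:=W\oplus\mathbb{R}\subseteq\mathcal{H}\oplus\mathbb{R}$. The product formula $(\vec{v},s)\bullet(\vec{w},t)=(t\vec{v}+s\vec{w},\langle\vec{v},\vec{w}\rangle+st)$ shows that $\mathcal{A}$ is closed under $\bullet$, i.e.\ it is a sub-spin-factor $JSpin_{\dim W}$, while $\mathrm{ran}\,T=\mathcal{A}$ and $T$ restricts to the identity on $\mathcal{A}$.

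Finally I would verify the two defining properties of a conditional expectation onto $\mathcal{A}$. We already have $T\boldsymbol{1}=\boldsymbol{1}$. For $a=(\vec{w},t)\in\mathcal{A}$, so that $P_{W}\vec{w}=\vec{w}$, and arbitrary $x=(\vec{v},s)\in JSpin_{d}$, the identities $P_{W}^{*}=P_{W}$ and $P_{W}\vec{w}=\vec{w}$ give
\begin{align}
T(a\bullet x)&=\bigl(tP_{W}\vec{v}+s\vec{w},\ \langle\vec{w},\vec{v}\rangle+ts\bigr)\\
&=(\vec{w},t)\bullet(P_{W}\vec{v},s)=a\bullet T(x),
\end{align}
which is precisely the module property, so $T$ is a conditional expectation of $JSpin_{d}$ onto $\mathcal{A}$. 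The only step that needs genuine care is the contractive-idempotent lemma; once $A$ is seen to be an orthogonal projection the module identity above is a one-line computation, and everything else in the argument is forced by trace preservation together with preservation of the center.
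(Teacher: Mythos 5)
Your proof is correct, and it is actually more complete than the one in the paper. The paper's proof begins by \emph{assuming} that the idempotent has the normal form ``identity on $\mathcal{H}_{1}$, zero on $\mathcal{H}_{2}$'' for an orthogonal decomposition $\mathcal{H}=\mathcal{H}_{1}\oplus\mathcal{H}_{2}$, and then carries out essentially the same one-line verification of the module property that you give at the end; the orthogonality $\mathcal{H}_{1}\perp\mathcal{H}_{2}$ is exactly what makes the cross term $\Phi\left(s\cdot\boldsymbol{w}_{2},\left\langle \boldsymbol{v},\boldsymbol{w}_{2}\right\rangle \right)$ vanish in that computation. What you supply, and the paper does not, is the justification that this normal form is attained: trace preservation kills the functional block $\vec{c}$ and fixes $e=1$, preservation of the center kills the translation $\vec{b}$, idempotence descends to $A^{2}=A$ on $\mathcal{H}$, contractivity of $A$ follows from $\Phi(\mathcal{K})\subseteq\mathcal{K}$, and your minimisation of $\|x+tz\|$ over $t$ is the standard argument that a contractive idempotent on a real Hilbert space must be the \emph{orthogonal} projection onto its range (an oblique idempotent necessarily has norm greater than $1$). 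This is precisely the step needed to make the paper's opening ``assume'' a genuine ``without loss of generality,'' so the two arguments share the concluding computation but yours closes the reduction that the paper leaves implicit.
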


\begin{proof}
Assume that the spin factor is based on the Hilbert space $\mathcal{H}=\mathcal{H}_{1}\oplus\mathcal{H}_{2}$
and that the idempotent $\Phi$ is the identity on $\mathcal{H}_{1}$
and maps $\mathcal{H}_{2}$ into the origin. Let $\boldsymbol{v},\boldsymbol{w}_{1}\in H_{1}$
and $\boldsymbol{w}_{2}\in H_{2}$ and $s,t\in R.$ Then 
\begin{align}
\Phi\left(\left(\boldsymbol{v},s\right)\bullet\left(\boldsymbol{w}_{1}+\boldsymbol{w}_{2},t\right)\right) 
& =\Phi\left(\left(\boldsymbol{v},s\right)\bullet\left(\boldsymbol{w}_{1},t\right)\right)+\Phi\left(\left(\boldsymbol{v},s\right)\bullet\left(\boldsymbol{w}_{2},0\right)\right)\\
 & =\left(\boldsymbol{v},s\right)\bullet\left(\boldsymbol{w}_{1},t\right)+\Phi\left(s\cdot\boldsymbol{w}_{2},\left\langle \boldsymbol{v},\boldsymbol{w}_{2}\right\rangle \right)\\
 & =\left(\boldsymbol{v},s\right)\bullet\Phi\left(\boldsymbol{w}_{1}+\boldsymbol{w}_{2},t\right)\, ,
\end{align}
which proves the theorem.
\end{proof}

\section{Monotonicity under dilations}

Next we introduce the notion of monotonicity. In thermodynamics monotonicity is associated with decrease of free
energy in a closed system and in information theory it is associated with the data processing inequality.
\begin{defn}
Let $D_{F}$ denote a regret function on the convex body $\mathcal{K}.$
Then $D_{F}$ is said to be \emph{monotone} if 
\begin{equation}
D_{F}\left(\Phi\left(\rho\right),\Phi\left(\sigma\right)\right)\leq D_{F}\left(\rho,\sigma\right)
\end{equation}
for any affinity $\Phi:\mathcal{K}\to\mathcal{K}.$
\end{defn}

A simple example of a monotone regret function is squared Euclidean distance in a Hilbert ball, but later we shall see that there are many other examples. All monotone regret functions are Bregman
divergences \cite[Prop. 6]{Harremoes2017} that satisfy sufficiency
\cite[Prop. 8]{Harremoes2017}. We shall demonstrate that a convex body of rank 2 with a monotone Bregman divergence can be represented by a spin factor.

We will need to express the Bregman divergence as an integral
involving a different type of divergence. Define
\begin{equation}
D^{F}\left(x,y\right)=\frac{\mathrm{d}^{2}}{\mathrm{d}s^{2}}F\left(x_{s}\right)_{\mid s=1}
\label{eq:AndenAfledt}
\end{equation}
where $x_{s}=\left(1-s\right)\cdot x+s\cdot y.$ If $F$ is $C^2$ and $\mathbf{H}\left(y\right)$ is the
Hesse matrix of $F$ calculated in the point $y$ then 
\begin{equation}
D^{F}\left(x,y\right)=\left\langle x-y \left| \mathbf{H}\left(y\right)\right| x-y\right\rangle .\label{eq:Hesse}
\end{equation}
Since 
\begin{equation}
F\left(x_{s}\right)=F\left(y\right)+\left< \nabla F\left(y\right)\mid x_{s}-y \right>+D_{F}\left(x_{s},y\right)
\end{equation}
we also have 
\begin{equation}\label{eq:derivdiv}
D^{F}\left(x,y\right)=\frac{\mathrm{d}^{2}}{\mathrm{d}s^{2}}D_{F}\left(x_{s},y\right)_{\mid s=1}.
\end{equation}
It is also easy to verify that
\begin{equation}\label{eq:deriv2div}
D^{F}\left(x,y\right)=\frac{\mathrm{d}}{\mathrm{d}s}D_{F}\left(y,x_{s}\right)_{\mid s=1}.
\end{equation}

\begin{prop}\label{prop:lokal}
Let $F:\left[0,1\right]\to\mathbb{R}$ denote a twice differentiable
convex function. If $x_{s}=\left(1-s\right)\cdot x+s\cdot y.$ Then 
\begin{equation}
D_{F}\left(x,y\right)=\int_{0}^{1}\frac{D^{F}\left(x,x_{s}\right)}{s}\,\mathrm{d}s\,.
\end{equation}
where $D^{F}\left(x,x_{s}\right)$ is given by one of the equations (\ref{eq:AndenAfledt}), (\ref{eq:Hesse}), (\ref{eq:derivdiv}), or (\ref{eq:deriv2div}).
\end{prop}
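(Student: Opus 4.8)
The plan is to establish the identity by reducing the multi-dimensional statement to a one-dimensional computation along the segment from $x$ to $y$, and then verifying the scalar identity directly. First I would note that all four expressions (\ref{eq:AndenAfledt}), (\ref{eq:Hesse}), (\ref{eq:derivdiv}), (\ref{eq:deriv2div}) for $D^{F}$ agree (this is established in the surrounding text), so it suffices to work with whichever form is most convenient; here the form (\ref{eq:derivdiv}), expressing $D^{F}$ as a second derivative of the Bregman divergence in its first argument, looks most tractable. Since everything in the claimed identity only involves the values of $F$ along the one-dimensional segment $\{x_{s} : s \in [0,1]\}$, I would introduce the single-variable function $g(s) = F(x_{s}) = F((1-s)x + sy)$ and reduce the whole statement to an identity about $g$ on $[0,1]$.

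The key computation is then the following. By definition of the Bregman divergence in one real variable (or directly from (\ref{eq:Regret}) specialized to the segment), $D_{F}(x, x_{s})$ equals $g(0) - g(s) - g'(s)\cdot(0 - s) = g(0) - g(s) + s\,g'(s)$, using that moving from $x_{s}$ toward $x$ corresponds to the parameter decreasing, so the relevant increment is $-s$. Meanwhile $D^{F}(x, x_{u})$, via (\ref{eq:derivdiv}) applied with base point $x_{u}$, is the second derivative with respect to the segment parameter evaluated at $x$; reparametrizing, one gets $D^{F}(x,x_{u}) = u^{2}\, g''(u)$, since the segment from $x_{u}$ to $x$ is traversed at ``speed'' proportional to $u$ and the second derivative picks up a factor $u^{2}$. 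Plugging these into the claimed formula, the right-hand side becomes
\begin{equation}
\int_{0}^{1} \frac{u^{2}\,g''(u)}{u}\,\mathrm{d}u = \int_{0}^{1} u\,g''(u)\,\mathrm{d}u,
\end{equation}
and integration by parts gives $\int_{0}^{1} u\,g''(u)\,\mathrm{d}u = \bigl[u\,g'(u)\bigr]_{0}^{1} - \int_{0}^{1} g'(u)\,\mathrm{d}u = g'(1) - \bigl(g(1) - g(0)\bigr) = g(0) - g(1) + g'(1)$, which is exactly $D_{F}(x,y)$ written via $g$. This closes the argument.

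The main obstacle I anticipate is bookkeeping with the reparametrization and signs, namely getting the factors of $u$ in $D^{F}(x, x_{u}) = u^{2} g''(u)$ correct and matching the direction conventions in (\ref{eq:Regret}) (which takes the limit from $\sigma$ toward $\rho$) against the parametrization $x_{s} = (1-s)x + sy$. I would handle this by being pedantic about which endpoint is the ``base point'' of the divergence and which is the ``evaluation point'', and by double-checking against the case where $F$ is a quadratic, where $g(s)$ is a quadratic polynomial and both sides can be computed by hand. A second, minor point is justifying that the integrand $D^{F}(x,x_{s})/s = s\,g''(s)$ is integrable near $s = 0$: since $F$ is $C^{2}$ on $[0,1]$, $g''$ is bounded, so $s\,g''(s) \to 0$ and no singularity arises — the apparent $1/s$ is cancelled by the $s^{2}$ coming from $D^{F}$. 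With these conventions pinned down, the remaining steps are the routine one-variable calculus above.
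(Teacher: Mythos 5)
Your argument is correct. Note that the paper itself offers no proof of this proposition --- it only points to Hayashi's Eq.~2.118 and to Pitrik--Virosztek --- so there is nothing to compare against except those references; your reduction to the one-variable function $g(s)=F(x_{s})$, the identification $D^{F}(x,x_{u})=u^{2}g''(u)$ (which indeed cancels the apparent $1/s$ singularity), and the integration by parts yielding $g(0)-g(1)+g'(1)=D_{F}(x,y)$ is exactly the standard computation those sources carry out, and it fills the gap cleanly. Two small remarks: (i) the whole argument collapses to one line if you instead differentiate your own expression $D_{F}(x,x_{s})=g(0)-g(s)+s\,g'(s)$ in $s$, which gives $s\,g''(s)=D^{F}(x,x_{s})/s$ directly and is precisely the content of the paper's Eq.~(\ref{eq:deriv2div}); (ii) the hypothesis is only ``twice differentiable'', not $C^{2}$, so your integrability remark technically needs convexity (monotonicity of $g'$ plus everywhere-differentiability gives absolute continuity of $g'$, so the fundamental theorem of calculus still applies), but this is a minor point that the paper itself glosses over.
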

A similar result appear in \cite{Hayashi2016} as Eq. 2.118. In the context of complex matrices the result was proved as Proposition \ref{prop:lokal} in \cite{Pitrik2015}.

We will need the following lemma.
\begin{lem}\label{lem:andenorden}
Let $F$ denote a convex function defined on a convex body $\mathcal{K}$. Then for almost all $y\in\mathcal{K}$ we have
\begin{equation}
\lim_{x\to y}\frac{D_{F}\left(x,y\right)-\frac{1}{2}D^{F}\left(x,y\right)}{\| x-y\|^2}=0\, .
\end{equation}
\end{lem}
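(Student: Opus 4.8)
The plan is to exploit the characterization of second-order behaviour of convex functions provided by Alexandrov's theorem: a convex function $F$ on a finite-dimensional convex body is twice differentiable almost everywhere, in the sense that for almost every $y$ there is a symmetric matrix $\mathbf{H}(y)$ with
\begin{equation}
F(x) = F(y) + \langle \nabla F(y)\mid x-y\rangle + \tfrac{1}{2}\langle x-y\mid \mathbf{H}(y)\mid x-y\rangle + o(\|x-y\|^2)
\end{equation}
as $x\to y$. Fix such a point $y$ of Alexandrov differentiability (and of ordinary differentiability, so that $\nabla F(y)$ exists). First I would rewrite $D_F(x,y)$ using its definition for differentiable $F$, namely $D_F(x,y)=F(x)-F(y)-\langle\nabla F(y)\mid x-y\rangle$; note that the regret function is well-defined here because $F$ is differentiable at $y$ even though it need not be differentiable everywhere. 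Substituting the Alexandrov expansion gives
\begin{equation}
D_F(x,y) = \tfrac12\langle x-y\mid\mathbf{H}(y)\mid x-y\rangle + o(\|x-y\|^2).
\end{equation}

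Next I would identify $D^F(x,y)$ with the same quadratic form. From the definition $D^F(x,y)=\frac{\mathrm d^2}{\mathrm ds^2}F(x_s)\big|_{s=1}$, where $x_s=(1-s)x+s y$, the point $x_1=y$ is exactly the Alexandrov point. Differentiating $s\mapsto F(x_s)$ twice at $s=1$ is a one-dimensional second derivative of $F$ along the segment through $y$ in direction $x-y$; Alexandrov differentiability guarantees that this restricted second derivative exists and equals $\langle x-y\mid \mathbf{H}(y)\mid x-y\rangle$, which is precisely Equation (\ref{eq:Hesse}). Hence $D^F(x,y)=\langle x-y\mid\mathbf{H}(y)\mid x-y\rangle$, so that $\tfrac12 D^F(x,y)=\tfrac12\langle x-y\mid\mathbf{H}(y)\mid x-y\rangle$. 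Subtracting, $D_F(x,y)-\tfrac12 D^F(x,y)=o(\|x-y\|^2)$, and dividing by $\|x-y\|^2$ and letting $x\to y$ yields the claim.

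The main obstacle is making the "twice differentiable almost everywhere" step rigorous: one needs Alexandrov's theorem for convex functions on convex bodies (including handling the boundary, or else restricting attention to the interior, where the statement "almost all $y$" is unaffected since the boundary has measure zero), and one must be careful that the Alexandrov expansion is a genuine second-order Taylor expansion with a $o(\|x-y\|^2)$ remainder that is uniform over directions — this is exactly what Alexandrov's theorem provides, but it is the content-heavy input. A secondary subtlety is that $D^F$ is defined via a directional second derivative at the endpoint $s=1$ rather than via the full Hessian form; one should check that at an Alexandrov point these two notions coincide, i.e. that Equations (\ref{eq:AndenAfledt}) and (\ref{eq:Hesse}) agree, which again follows from Alexandrov's theorem. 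Once these foundations are in place the remaining algebra is the short computation sketched above.
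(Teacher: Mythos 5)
Your proposal is correct and follows essentially the same route as the paper: rewrite $D_{F}\left(x,y\right)-\frac{1}{2}D^{F}\left(x,y\right)$ as the second-order Taylor remainder of $F$ at $y$ and observe that the claim is precisely Alexandrov's theorem on almost-everywhere second-order differentiability of convex functions. The paper's proof is just this observation, so your additional care about identifying $D^{F}$ with the Hessian quadratic form at an Alexandrov point is a welcome (if minor) elaboration rather than a divergence.
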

\begin{proof}
According to our definitions
\begin{multline}
D_{F}\left(x,y\right)-\frac{1}{2}D^{F}\left(x,y\right)\\
= F\left(x\right)-\left(F\left(y\right)+ \langle \nabla F\left(y\right)\mid x-y\rangle+\frac{1}{2}\langle x-y \left|\mathbf{H}\left(y\right)\right| x-y\rangle\right)
\end{multline}
and we see that Lemma \ref{lem:andenorden} states that a convex function is twice differentiable almost everywhere, which is exactly Alexandrov's theorem \cite{Alexandrov1939}.
\end{proof}

\begin{lem}\label{lem:lokal}
If $F$ is twice differentiable then $D_F$ is a monotone Bregman divergence if and only if $D^F$ is monotone.
\end{lem}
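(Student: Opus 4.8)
The plan is to prove the two implications separately, using Proposition~\ref{prop:lokal} as the bridge between the two divergences. For the ``only if'' direction, suppose $D_F$ is monotone and let $\Phi:\mathcal{K}\to\mathcal{K}$ be an affinity. Fix points $x,y\in\mathcal{K}$ and write $x_s=(1-s)\cdot x+s\cdot y$ and $\Phi(x)_s=(1-s)\cdot\Phi(x)+s\cdot\Phi(y)$. Since $\Phi$ is affine, $\Phi(x_s)=\Phi(x)_s$, so applying monotonicity to the pair $(x,x_s)$ gives $D_F(\Phi(x),\Phi(x)_s)\leq D_F(x,x_s)$ for every $s$. Now I want to differentiate this inequality twice in $s$ at $s=1$ and read off $D^F(\Phi(x),\Phi(y))\leq D^F(x,y)$ via equation~(\ref{eq:derivdiv}). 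The subtlety is that differentiating an inequality does not directly give an inequality between second derivatives; instead I would use that equality holds in the limit $s\to 1$ (both sides vanish, and their first $s$-derivatives at $s=1$ agree since $\frac{\mathrm{d}}{\mathrm{d}s}D_F(x,x_s)_{\mid s=1}$ is built only from $\nabla F$ and the first-order data, which is preserved under composition with the affine map — here one uses that $D_F(x,x_s)$ and $D_F(\Phi(x),\Phi(x)_s)$ agree to first order in $(1-s)$ because the regret vanishes to second order). Hence the difference $D_F(x,x_s)-D_F(\Phi(x),\Phi(x)_s)$ is nonnegative and vanishes to second order at $s=1$, forcing the inequality on the second derivatives, which is exactly $D^F(\Phi(x),\Phi(y))\leq D^F(x,y)$.

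For the ``if'' direction, suppose $D^F$ is monotone and let $\Phi:\mathcal{K}\to\mathcal{K}$ be affine. Fix $x,y$ and set $x_s=(1-s)\cdot x+s\cdot y$. By Proposition~\ref{prop:lokal} applied in the affine slice through $x$ and $y$,
\begin{equation}
D_F(x,y)=\int_0^1\frac{D^F(x,x_s)}{s}\,\mathrm{d}s,
\end{equation}
and likewise, since $\Phi$ is affine so that $\Phi(x_s)=(1-s)\cdot\Phi(x)+s\cdot\Phi(y)$,
\begin{equation}
D_F(\Phi(x),\Phi(y))=\int_0^1\frac{D^F(\Phi(x),\Phi(x_s))}{s}\,\mathrm{d}s.
\end{equation}
Monotonicity of $D^F$ gives $D^F(\Phi(x),\Phi(x_s))\leq D^F(x,x_s)$ pointwise in $s$, and the weight $1/s$ is nonnegative on $(0,1)$, so integrating preserves the inequality and yields $D_F(\Phi(x),\Phi(y))\leq D_F(x,y)$. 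Since $x,y,\Phi$ were arbitrary, $D_F$ is monotone; and $D_F$ is a Bregman divergence by hypothesis that $F$ is (twice) differentiable.

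The main obstacle is the first direction: passing from an inequality between the divergences to an inequality between their second $s$-derivatives. The clean way to handle this is Lemma~\ref{lem:andenorden} (Alexandrov's theorem): for almost every $y$, $D_F(x,y)=\tfrac12 D^F(x,y)+o(\|x-y\|^2)$, and the same holds for the pushed-forward data at $\Phi(y)$ whenever $\Phi(y)$ is also a point of twice-differentiability. Applying this with $x$ replaced by $x_s$ (so $\|x_s-y\|=(1-s)\|x-y\|\to 0$) lets me rewrite monotonicity of $D_F$ near $s=1$ as $\tfrac12 D^F(x_s,y)+o((1-s)^2)\leq\cdots$, divide by $(1-s)^2$ using bilinearity of $D^F$ in its first argument's displacement, and take $s\to 1$ to extract $D^F(\Phi(x),\Phi(y))\leq D^F(x,y)$ for a.e.\ $y$; continuity of $D^F$ (via the Hesse-matrix expression~(\ref{eq:Hesse}), valid a.e., together with a limiting argument, or directly from $C^2$-ness if one is willing to assume it) then removes the ``almost all'' qualifier. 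I would also remark that $D_F$ being monotone automatically makes it a Bregman divergence by \cite[Prop.~6]{Harremoes2017}, so the differentiability hypothesis in the statement is what lets the argument run in both directions symmetrically.
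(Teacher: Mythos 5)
Your overall strategy is the same as the paper's: for the ``only if'' direction, compare two functions of $s$ that osculate to first order at an endpoint and extract the inequality between their second-order coefficients; for the ``if'' direction, integrate the pointwise inequality on $D^F$ against the weight $\nicefrac{1}{s}$ via Proposition~\ref{prop:lokal}. The ``if'' direction is correct as written and is exactly the paper's argument.

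The first rendition of the ``only if'' direction, however, fails for the pair you actually wrote down. You apply monotonicity to $(x,x_s)$, i.e.\ to $s\mapsto D_F(x,x_s)$ with the \emph{second} argument moving. For that function the value at $s=1$ is $D_F(x,y)$, which is not zero; its first derivative at $s=1$ is $\left\langle x-y\left|\mathbf{H}(y)\right|x-y\right\rangle=D^F(x,y)$, so the two sides' first derivatives at $s=1$ are $D^F(x,y)$ and $D^F(\Phi(x),\Phi(y))$ --- precisely the quantities whose inequality you are trying to prove, so asserting that they ``agree'' begs the question; and $\|x-x_s\|=s\|x-y\|$ does not tend to $0$ as $s\to1$, so ``the regret vanishes to second order'' does not apply. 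Moreover equation~(\ref{eq:derivdiv}), which you cite, concerns $D_F(x_s,y)$ with the \emph{first} argument moving. That is the choice that makes the argument work (and it is the paper's): $D_F(x_s,y)$ and $D_F(\Phi(x_s),\Phi(y))$ both vanish together with their first $s$-derivatives at $s=1$, the pointwise inequality between two such functions forces the inequality between their second derivatives at $s=1$, and those second derivatives are $D^F(x,y)$ and $D^F(\Phi(x),\Phi(y))$ by~(\ref{eq:derivdiv}). Your closing paragraph does switch to exactly this setup ($x$ replaced by $x_s$, so $\|x_s-y\|\to0$) and is correct --- and since $F$ is assumed twice differentiable you do not need Alexandrov's theorem there; a Taylor expansion at $y$ suffices. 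So the substance is recoverable: either discard the $(x,x_s)$ version, or rerun it as an expansion around $s=0$, where the value and first derivative genuinely vanish.
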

\begin{proof}
Assume that $D_F$ is monotone and that $\Phi$ is some affinity and that $x_{s} =\left(1-s\right)\cdot x+ s\cdot y$. Then
\begin{equation}
D_{F}\left(\Phi\left(x_{s}\right),\Phi\left(y\right)\right)\leq D_{F}\left(x_{s} ,y\right) .
\end{equation}
Since 
\begin{align}
D_{F}\left(x_{s} ,y\right) &=0\\
\frac{\mathrm{d}}{\mathrm{d}s}D_{F}\left(x_{s},y\right)_{\mid s=1}&=0
\end{align}
and
\begin{align}
D_{F}\left(\Phi\left(x_{s}\right) ,\Phi\left(y\right)\right) &=0\\
\frac{\mathrm{d}}{\mathrm{d}s}D_{F}\left(\Phi\left(x_{s}\right),\Phi\left(y\right)\right)_{\mid s=1}&=0
\end{align}
we must have 
\begin{align}
D^{F}\left(\Phi\left(x\right),\Phi\left(y\right)\right)
&=\frac{\mathrm{d}^{2}}{\mathrm{d}s^{2}} D_{F} \left(\Phi\left(x_{s}\right),\Phi\left(y\right)\right)_{\mid s=1}
\\
&\leq \frac{\mathrm{d}^{2}}{\mathrm{d}s^{2}}D_{F}\left(x_{s},y\right)_{\mid s=1}\\
&\leq D^{F}\left(x,y\right)\, .
\end{align}
If $D^F$ is monotone then Proposition \ref{prop:lokal} implies that $D_F$ is monotone.
\end{proof}

\begin{thm}Let $\mathcal{K}$ denote a convex body with a sufficient regret function $D_F$ that is monotone under dilations. Then $F$ is $C^2$. In particular $D_F$ is a Bregman divergence.
\end{thm}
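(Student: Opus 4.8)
The plan is to prove the two regularity assertions in turn: first that $F$ is differentiable, hence $C^{1}$ because it is convex, and then that $\nabla F$ is itself continuously differentiable, which gives $F\in C^{2}$ and, a fortiori, that $D_{F}$ is a Bregman divergence. Two elementary observations drive both steps. First, for any line $\ell$ meeting $\mathcal{K}$ the restriction of $D_{F}$ to $\ell\cap\mathcal{K}$ is the regret function of the convex function $F|_{\ell}$, and any dilation of $\mathcal{K}$ whose centre lies on $\ell$ restricts to a dilation of the interval $\ell\cap\mathcal{K}$; hence monotonicity under dilations is inherited by these one-dimensional restrictions. Second, if $y$ is an interior point and $z$ is close to $y$, then the dilation centred at the boundary point $c_{0}$ of $\mathcal{K}$ obtained by prolonging the segment $[y,z]$ beyond $z$ carries $y$ to $z$, with ratio $\lambda=1-\|z-y\|/\|c_{0}-y\|\to1$ as $z\to y$; so "infinitesimal" dilations can move any interior point in any direction.

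\emph{Step 1: $F$ is $C^{1}$.} Suppose $F$ is not differentiable at an interior point $y_{0}$. Then $\partial F(y_{0})$ contains two distinct subgradients, so along a suitable line $\ell\ni y_{0}$ the convex function $g=F|_{\ell}$, viewed near $y_{0}$ as a function on an interval about $0$, has one-sided derivatives $a=g'_{-}(0)<g'_{+}(0)=b$. Pick $x$ at parameter $\epsilon>0$, $y$ at parameter $-\delta<0$, and a dilation $\Phi_{\lambda}$ of the chord $\ell\cap\mathcal{K}$ centred at a point of parameter $\gamma>\epsilon$, with $\lambda$ so close to $1$ that $\Phi_{\lambda}(y)$ still has negative parameter. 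From $g(h)=g(0)+bh+o(h)$ for $h>0$ and $g(h)=g(0)+ah+o(h)$ for $h<0$ one computes $D_{g}(x,y)=(b-a)\epsilon+o(\epsilon+\delta)$, while $\Phi_{\lambda}(x)$ has parameter $(1-\lambda)\gamma+\lambda\epsilon$ and $D_{g}\bigl(\Phi_{\lambda}(x),\Phi_{\lambda}(y)\bigr)=(b-a)\bigl[(1-\lambda)\gamma+\lambda\epsilon\bigr]+o(\epsilon+\delta)$. Since $(1-\lambda)\gamma+\lambda\epsilon-\epsilon=(1-\lambda)(\gamma-\epsilon)>0$, taking $\delta=\epsilon\to0$ with $1-\lambda$ of order $\delta$ makes this gap of order $\epsilon$, which dominates the $o(\epsilon)$ error; thus $D_{g}\bigl(\Phi_{\lambda}(x),\Phi_{\lambda}(y)\bigr)>D_{g}(x,y)$, contradicting monotonicity. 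Hence $\partial F$ is everywhere single-valued, $F$ is differentiable, and a differentiable convex function on a convex body is $C^{1}$.

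\emph{Step 2: $F$ is $C^{2}$.} By Alexandrov's theorem in the form of Lemma~\ref{lem:andenorden}, $F$ is twice differentiable on a set $G$ of full measure, with Hesse matrix $\mathbf{H}(y)$ and $D_{F}(x,y)=\tfrac12\langle x-y\,|\,\mathbf{H}(y)\,|\,x-y\rangle+o(\|x-y\|^{2})$ for $y\in G$. If $\Phi_{\lambda}$ is a dilation of ratio $\lambda$ centred at $c_{0}$, and both $y$ and $\Phi_{\lambda}(y)=(1-\lambda)c_{0}+\lambda y$ lie in $G$, then differentiating the inequality $D_{F}\bigl(\Phi_{\lambda}(x_{s}),\Phi_{\lambda}(y)\bigr)\le D_{F}(x_{s},y)$ twice in $s$ at $s=1$ — the computation from the proof of Lemma~\ref{lem:lokal}, which is legitimate here because $F$ is $C^{1}$ and twice differentiable at the two points in question — yields $\lambda^{2}\langle v\,|\,\mathbf{H}(\Phi_{\lambda}(y))\,|\,v\rangle\le\langle v\,|\,\mathbf{H}(y)\,|\,v\rangle$ for every vector $v$. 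Given $y,z\in G$ close together, applying this once with $\Phi_{\lambda}(y)=z$ and once with $y$ and $z$ interchanged (prolonging the segment in the two opposite directions to the boundary) gives $\lambda^{2}\mathbf{H}(z)\le\mathbf{H}(y)$ and $\mu^{2}\mathbf{H}(y)\le\mathbf{H}(z)$ with $\lambda,\mu=1-O(\|z-y\|)$; hence $\mathbf{H}$ is locally bounded and in fact locally Lipschitz on $G$, and so extends to a continuous matrix field $\widetilde{\mathbf{H}}$ on the interior of $\mathcal{K}$. Finally, $\nabla F$ restricted to almost every segment is Lipschitz with almost-everywhere directional derivative $\widetilde{\mathbf{H}}$ along the segment, so by the fundamental theorem of calculus $\nabla F(y)=\nabla F(y_{0})+\int_{0}^{1}\widetilde{\mathbf{H}}\bigl(y_{0}+t(y-y_{0})\bigr)(y-y_{0})\,\mathrm{d}t$ on a dense set, hence everywhere by continuity of both sides; the right-hand side is $C^{1}$ in $y$ with derivative $\widetilde{\mathbf{H}}$, so $\nabla F$ is $C^{1}$ and $F\in C^{2}$.

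The step I expect to be most delicate is the measure-theoretic bookkeeping in Step 2: one must take the pairs $\bigl(y,\Phi_{\lambda}(y)\bigr)$ feeding the squeeze within the full-measure set $G$, working along the directions and segments on which $G$ has full one-dimensional measure, and then promote the resulting continuity of $\mathbf{H}$ along such lines to a continuous extension — the rate $\lambda=1-O(\|z-y\|)$ is exactly what makes this a routine Lipschitz estimate. Step 1 is the other point needing care, since the dilation centre must be kept strictly beyond $x$ while its ratio tends to $1$.
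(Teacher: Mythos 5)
Your overall strategy coincides with the paper's: use Alexandrov's theorem to define the Hessian $\mathbf{H}$ on a full-measure set $G$, apply monotonicity of $D^{F}$ under dilations in both directions to obtain the two-sided estimate $\lambda^{2}\mathbf{H}(z)\le\mathbf{H}(y)$ and $\mu^{2}\mathbf{H}(y)\le\mathbf{H}(z)$ with $\lambda,\mu\to1$ as $z\to y$, and conclude that $\mathbf{H}$ extends to a continuous field $\widetilde{\mathbf{H}}$ on the interior of $\mathcal{K}$. Your Step 1, showing that a kink of $F$ would be amplified by a dilation centred beyond it, is a correct and explicit argument for a point the paper leaves implicit.

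The gap is in the final sentence of Step 2. From ``$F$ is $C^{1}$ and convex, and its Alexandrov Hessian agrees a.e.\ with a continuous field $\widetilde{\mathbf{H}}$'' you cannot conclude that $\nabla F$ is Lipschitz, or even absolutely continuous, along segments, so the fundamental theorem of calculus is not available. The distributional second derivative of a convex function is a matrix-valued measure whose singular part is invisible to the pointwise Alexandrov Hessian: in one dimension take $F$ with $F'$ equal to the Cantor function; then $F$ is $C^{1}$ and convex, its Alexandrov second derivative is $0$ almost everywhere (so $\widetilde{\mathbf{H}}\equiv0$ is a continuous extension), yet your formula $\nabla F(y)=\nabla F(y_{0})+\int_{0}^{1}\widetilde{\mathbf{H}}\bigl(y_{0}+t(y-y_{0})\bigr)(y-y_{0})\,\mathrm{d}t$ would force $F'$ to be constant, which it is not. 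Hence the deduction as written is false, and monotonicity must be invoked once more at exactly this stage. That is what the paper's proof does: it sandwiches $D_{F}\left(x,y\right)-\tfrac{1}{2}D^{F}\left(x,y\right)$ between dilated copies evaluated at Alexandrov points and lets the dilation ratio tend to $1$, obtaining $D_{F}\left(x,y\right)=\tfrac{1}{2}\left\langle x-y\left|\widetilde{\mathbf{H}}\left(y\right)\right|x-y\right\rangle +o\left(\Vert x-y\Vert^{2}\right)$ at \emph{every} interior point $y$, which simultaneously rules out any singular part and identifies $\widetilde{\mathbf{H}}$ as the genuine Hessian. You need an argument of this kind --- some further use of the monotonicity of $D_{F}$ itself, not only of $D^{F}$ --- to close Step 2.
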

\begin{proof}
Since $D_F$ is monotone under dilation we have that $D^F$ is monotone under dilations whenever $D^F$ is defined. Let $y$ be a point where $F$ is differentiable and let $0<r<1$ and $z$ be a point such that $F$ is differentiable in $\left(1-r\right)\cdot z + r\cdot y$. Then
\begin{align}
D^F\left(\left(1-r\right)\cdot z + r\cdot x,\left(1-r\right)\cdot z + r\cdot y\right)&\leq D^F\left(x,y\right)\\
\left< r\cdot x-r\cdot y\left| \mathbf{H}\left(\left(1- r\right)\cdot z+r\cdot y\right)\right| r\cdot x-r\cdot y\right>&
\leq \left<  x- y\left| \mathbf{H}\left(y\right)\right|  x- y\right>\\
r^{2}\cdot\left< x-y\left| \mathbf{H}\left(\left(1- r\right)\cdot z+r\cdot y\right)\right| x- y \right>&
\leq \left<  x- y\left| \mathbf{H}\left(y\right)\right| x- y\right>\\
r^{2}\cdot \mathbf{H}\left(\left(1- r\right)\cdot z+r\cdot y\right)&\leq \mathbf{H}\left(y\right) .
\end{align}
Let $\mathcal{L}_1\subseteq{K}$ denote a ball around $y$ with radius $R_1$ and let $\mathcal{L}_2$ denote a ball around $y$ with radius $R_2 <R_1$. Then for any $w\in\mathcal{L}_2$ there exists a $z\in \mathcal{L}_1$ such that $w=\left(1-r\right)\cdot z +r\cdot y$ where $r\geq 1-\frac{R_2}{R_1}$ implying that
\begin{equation}
\left(1-\frac{R_2}{R_1}\right)^{2}\cdot \mathbf{H}\left(w\right)\leq \mathbf{H}\left(y\right) .
\end{equation}
There also exists a $\tilde{z}\in \mathcal{L}_1$ such that $y=\left(1-\tilde{r}\right)\cdot z +\tilde{r}\cdot w$ where $\tilde{r}\geq 1-\frac{R_2}{R_{1}+R_{2}}$ implying that
\begin{equation}
\left(1-\frac{R_2}{R_{1}+R_2}\right)^{2}\cdot \mathbf{H}\left(y\right)\leq \mathbf{H}\left(w\right) .
\end{equation}
We see that if $R_2$ is small  Then $y\to \mathbf{H}\left(y\right)$ is uniformly continuous on any compact subset of the interior of $\mathcal{K}$ restricted to points where $F$ is twice differentiable. 
Therefore $\mathbf{H}$ has a unique continuous extension to $\mathcal{K}$ and we can use the extension of $\mathbf{H}$ to get an extension of $D^F$.
The last thing we need to prove is that the unique extended function $\mathbf{H}$ actually gives the Hesse matrix in any interior point in $\mathcal{K}$. Let $x,y\in\mathcal{K}$. Introduce $x_r =\left(1-r\right)z+r\cdot x$ and $x_r =\left(1-r\right)z+r\cdot x$. Then 
\begin{multline}
D_{F}\left(x,y\right)-\frac{1}{2}D^{F}\left(x,y\right)\geq D_{F}\left(\left(1-r\right)z+r\cdot x,\left(1-r\right)z+r\cdot y\right)-\frac{1}{2}D^{F}\left(x,y\right)\\
= D_{F}\left(x_r ,y_r \right)-\frac{1}{2}D^{F}\left(x_r ,y_r \right)+\frac{1}{2}D^{F}\left(x_r ,y_r \right)
-\frac{1}{2}D^{F}\left(x,y\right)-\frac{1}{2}D^{F}\left(x,y\right)\\
\geq D_{F}\left(x_r ,y_r \right)-\frac{1}{2}D^{F}\left(x_r ,y_r \right)
+\frac{1}{2}\left<x-y\left| r^2\cdot \mathbf{H}\left(y_r \right)-\mathbf{H}\left(y\right)\right| x-y \right>\\
\geq D_{F}\left(x_r ,y_r \right)-\frac{1}{2}D^{F}\left(x_r ,y_r \right)
-\frac{1}{2}\| r^2\cdot \mathbf{H}\left(\left(1-r\right)z+r\cdot y\right)-\mathbf{H}\left(y\right)\| \|x-y\|^2
\end{multline}
Therefore
\begin{multline}
\frac{D_{F}\left(x,y\right)-\frac{1}{2}D^{F}\left(x,y\right)}{\|x-y\|^2}\geq r^2 \frac{D_{F}\left(x_r ,y_r \right)-\frac{1}{2}D^{F}\left(x_r ,y_r \right)}{\|rx-ry\|^2}\\
-\frac{1}{2}\| r^2\cdot \mathbf{H}\left(\left(1-r\right)z+r\cdot y\right)-\mathbf{H}\left(y\right)\|
\end{multline}
and 
\begin{equation}
\lim_{x\to y}\inf \frac{D_{F}\left(x,y\right)-\frac{1}{2}D^{F}\left(x,y\right)}{\|x-y\|^2}\geq 
-\frac{1}{2}\| r^2\cdot \mathbf{H}\left(\left(1-r\right)z+r\cdot y\right)-\mathbf{H}\left(y\right)\|\, .
\end{equation}
Since this holds for all positive $r<1$ we have
\begin{equation}
\lim_{x\to y}\inf \frac{D_{F}\left(x,y\right)-\frac{1}{2}D^{F}\left(x,y\right)}{\|x-y\|^2}\geq  0 \, .
\end{equation}
One can prove that $\lim\sup$ is less that 0 in the same way.
\end{proof}
\begin{thm}
Assume that $f:[0,1]\to\mathbb{R}$ is a convex symmetric function and that the function $F$ is defined as 
$F\left(\sigma\right)=\mathrm{tr}[f\left(\sigma\right)]$. If the Bregman divergence $D_F$ is monotone 
under dilations then $y\to y^2 \cdot f''(y)$ is an increasing function.
\end{thm}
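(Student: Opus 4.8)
The plan is to reduce the whole statement to a single spectral line through the center of $\mathcal{K}$ and to read an inequality for $f''$ off the dilation‑monotonicity of $D^F$.

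First I would collect the regularity that is at our disposal. Since $F\left(\sigma\right)=\mathrm{tr}\left[f\left(\sigma\right)\right]$, the regret function $D_F$ satisfies sufficiency by the Proposition above, so the preceding theorem applies and $F$ is $C^2$ with a Hesse matrix $\mathbf{H}$ that extends continuously to all of $\mathcal{K}$. Fix orthogonal extreme points $\sigma_0,\sigma_1$ with $\frac{1}{2}\sigma_0+\frac{1}{2}\sigma_1=\boldsymbol{\frac{1}{2}}$ and put $\sigma\left(t\right)=\left(1-t\right)\cdot\sigma_0+t\cdot\sigma_1$. For $t\in(0,1)$ the point $\sigma\left(t\right)$ lies in the relative interior of $\mathcal{K}$: a proper face containing $\sigma\left(t\right)$ would have to contain both $\sigma_0$ and $\sigma_1$, hence their midpoint, the center, which is interior. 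Consequently $t\mapsto F\left(\sigma\left(t\right)\right)=f\left(t\right)+f\left(1-t\right)=2f\left(t\right)$ is genuinely $C^2$ on $(0,1)$, so $f$ is $C^2$ there, and by Lemma \ref{lem:lokal} applied to each dilation affinity, $D^F$ is monotone under dilations.

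Next I would compute $D^F$ along this line. Because $\sigma$ is affine in its parameter, the segment $s\mapsto\left(1-s\right)\cdot\sigma\left(a\right)+s\cdot\sigma\left(b\right)$ equals $\sigma\left(\left(1-s\right)a+sb\right)$, so the chain rule gives $\frac{\mathrm{d}^{2}}{\mathrm{d}s^{2}}F\left(\sigma\left(\left(1-s\right)a+sb\right)\right)=2f''\left(\left(1-s\right)a+sb\right)\left(b-a\right)^{2}$, and evaluating at $s=1$ yields $D^{F}\left(\sigma\left(a\right),\sigma\left(b\right)\right)=2\left(b-a\right)^{2}f''\left(b\right)$ for $a,b\in(0,1)$. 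The decisive move is to apply monotonicity to the dilation centered at the extreme point $\sigma_0=\sigma\left(0\right)$ with ratio $r\in(0,1)$, namely $\Phi_r\left(w\right)=\left(1-r\right)\cdot\sigma_0+r\cdot w$; this maps $\mathcal{K}$ into itself, and one checks $\Phi_r\left(\sigma\left(t\right)\right)=\sigma\left(rt\right)$. Then $D^{F}\left(\Phi_r\left(\sigma\left(a\right)\right),\Phi_r\left(\sigma\left(b\right)\right)\right)\le D^{F}\left(\sigma\left(a\right),\sigma\left(b\right)\right)$ becomes $2r^{2}\left(b-a\right)^{2}f''\left(rb\right)\le 2\left(b-a\right)^{2}f''\left(b\right)$, i.e. $r^{2}f''\left(rb\right)\le f''\left(b\right)$.

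To finish, given $0<u<y<1$ I would take $b=y$ and $r=u/y\in(0,1)$, obtaining $\frac{u^{2}}{y^{2}}f''\left(u\right)\le f''\left(y\right)$, that is $u^{2}f''\left(u\right)\le y^{2}f''\left(y\right)$; since $0<u<y<1$ were arbitrary this is precisely the statement that $y\mapsto y^{2}f''\left(y\right)$ is increasing. The symmetry of $f$ enters only to produce the clean identity $F\left(\sigma\left(t\right)\right)=2f\left(t\right)$; without it one would get $f''\left(t\right)+f''\left(1-t\right)$ on both sides, which again collapses to the same inequality by symmetry. I do not anticipate a deep obstacle here; the points that need care are the regularity step — that $f''$ exists classically on the open interval, which we get from the preceding theorem together with the observation that the interior of the spectral segment lies in the (relative) interior of $\mathcal{K}$ — and the verification that the dilation toward the boundary extreme point $\sigma_0$ is a genuine affinity $\mathcal{K}\to\mathcal{K}$ to which the monotonicity hypothesis applies.
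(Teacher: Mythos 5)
Your proposal is correct and follows essentially the same route as the paper: restrict to the spectral segment through the center, note that $F(\sigma(t))=2f(t)$ there, and apply dilation-monotonicity of $D^F$ around the extreme point $\sigma(0)$ to get $r^{2}f''(ry)\le f''(y)$, i.e.\ that $y^{2}f''(y)$ is increasing. The only cosmetic difference is that the paper phrases the reduction as "the dilation commutes with the retraction of the section, so $D_F$ restricted to $[0,1]$ is monotone," whereas you apply the dilation of $\mathcal{K}$ directly and restrict afterwards; your extra care about where $f''$ exists classically is a welcome addition rather than a divergence.
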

\begin{proof}
Assume that $D_F$ is monotone under dilations. Let $S:[0,1]\to \mathcal{K}$ denote a section. Then a dilation around $S(0)$ 
commutes with the retraction corresponding to the section $S$. Therefore $D_F$ restricted to $S\left([0,1]\right)$ is 
monotone, so we may without loss of generality assume that the convex body is the interval [0,1].

Then $F$ is $C^2$ and $D^F$ is monotone. 
\begin{align}
D^{F}\left(r\cdot x,r\cdot y\right)&=F''\left(r\cdot y\right) \cdot\left(r\cdot x-r\cdot y\right)^2  \\
&= F''\left(r\cdot y\right)\cdot \left(r\cdot y\right)^{2}\cdot\left(\frac{x}{y}- 1\right)^2 \,.
\end{align}
Therefore $y^{2}\cdot F''(y)$ and $y^{2}\cdot f''(y)$ are increasing.

\end{proof}

\begin{thm}
Let $\mathcal{K}$ denote a convex body of rank 2 with a sufficient and strict regret function $D_F$ that is monotone under dilations. Then $\mathcal{K}$ can be represented by a spin factor.
\end{thm}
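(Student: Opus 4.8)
The plan is to reduce to the earlier theorem that a spectral convex body of rank~$2$ with symmetric transmission probabilities can be represented by a spin factor. First I would record what the hypotheses already give: by the lemma on strict sufficient regret functions on rank-$2$ bodies, $\mathcal{K}$ is spectral, hence centrally symmetric with center $\boldsymbol{\frac{1}{2}}$, and $F(\sigma)=\mathrm{tr}[f(\sigma)]$ for a convex $f:[0,1]\to\mathbb{R}$ that may be taken symmetric; by the theorem that a sufficient regret function monotone under dilations has a $C^{2}$ generator, $F$ is $C^{2}$, so $D_{F}$ is a Bregman divergence, the Hesse matrix $\mathbf{H}$ of $F$ extends continuously to the interior of $\mathcal{K}$, and by Lemma~\ref{lem:lokal} $D^{F}$ is monotone under dilations. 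Using the two-sided local estimates in the proof of that theorem together with strictness I would then show that $\mathbf{H}(y)$ is positive definite at every interior point $y$: if $\langle v\mid\mathbf{H}(y_{0})\mid v\rangle=0$ for some $v\neq0$, the estimates force $\mathbf{H}$ to be degenerate in the direction $v$ on a whole neighbourhood of $y_{0}$, so $F$ is affine along a short segment and $D_{F}$ vanishes on it, contradicting strictness. In particular $\langle u,v\rangle_{0}:=\langle u\mid\mathbf{H}(\boldsymbol{\frac{1}{2}})\mid v\rangle$ is an inner product on the space of directions.

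The heart of the argument is a rigidity identity forcing all extreme points onto one $\langle\cdot,\cdot\rangle_{0}$-sphere. Fix an extreme point $\sigma$ and let $\tilde{\sigma}=\boldsymbol{1}-\sigma$ be its (unique, and again extreme) orthogonal partner from the corollary; since $\sigma+\tilde{\sigma}=\boldsymbol{1}=2\cdot\boldsymbol{\frac{1}{2}}$ and the center is an interior point, $\sigma$ and $\tilde{\sigma}$ cannot lie in a common proper face, so there is a section $S:[0,1]\to\mathcal{K}$ with $S(0)=\tilde{\sigma}$ and $S(1)=\sigma$; then $S(\frac{1}{2})=\boldsymbol{\frac{1}{2}}$ and $(1-s)\cdot\sigma+s\cdot\boldsymbol{\frac{1}{2}}=S(1-\frac{s}{2})$. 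Hence by sufficiency $D_{F}\!\left((1-s)\cdot\sigma+s\cdot\boldsymbol{\frac{1}{2}},\,\boldsymbol{\frac{1}{2}}\right)=\phi(1-\frac{s}{2})$, where $\phi(t):=D_{F}(S(t),S(\frac{1}{2}))$ depends only on the spectrum $\{t,1-t\}$ and not on $\sigma$ (this is exactly the computation in the proof of the sufficiency lemma). Combining~(\ref{eq:derivdiv}) and~(\ref{eq:Hesse}) I obtain
\[
\|\sigma-\boldsymbol{\frac{1}{2}}\|_{0}^{2}=D^{F}\!\left(\sigma,\boldsymbol{\frac{1}{2}}\right)=\frac{\mathrm{d}^{2}}{\mathrm{d}s^{2}}\,\phi\!\left(1-\frac{s}{2}\right)\Big|_{s=1}=\frac{1}{4}\,\phi''\!\left(\frac{1}{2}\right)=:\rho^{2},
\]
a constant that is positive because $\mathbf{H}(\boldsymbol{\frac{1}{2}})$ is positive definite and $\sigma\neq\boldsymbol{\frac{1}{2}}$. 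Thus every extreme point lies on the $\langle\cdot,\cdot\rangle_{0}$-sphere of radius $\rho$ about $\boldsymbol{\frac{1}{2}}$, and therefore $\mathcal{K}=\mathrm{conv}(\mathrm{ext}\,\mathcal{K})$ is contained in the corresponding closed ball $B_{0}(\rho)$.

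It then remains to exhibit symmetric transmission probabilities. Given extreme points $\sigma_{1},\sigma_{2}$, set $R_{i}(\boldsymbol{\frac{1}{2}}+v)=\rho^{-2}\langle\sigma_{i}-\boldsymbol{\frac{1}{2}},\,v\rangle_{0}$. By Cauchy--Schwarz and $\mathcal{K}\subseteq B_{0}(\rho)$ this affinity sends $\mathcal{K}$ into $[-1,1]$, satisfies $R_{i}(\sigma_{i})=1$, and is split by the section $t\mapsto\boldsymbol{\frac{1}{2}}+t\cdot(\sigma_{i}-\boldsymbol{\frac{1}{2}})$, hence is a retraction $\mathcal{K}\to[-1,1]$. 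Since $R_{1}(\sigma_{2})=\rho^{-2}\langle\sigma_{1}-\boldsymbol{\frac{1}{2}},\sigma_{2}-\boldsymbol{\frac{1}{2}}\rangle_{0}=R_{2}(\sigma_{1})$, the body $\mathcal{K}$ has symmetric transmission probabilities, so by the earlier theorem it can be represented by a spin factor.

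I expect the rigidity identity in the second step to be the main obstacle: the observation that sufficiency makes $D^{F}(\sigma,\boldsymbol{\frac{1}{2}})$ take the same value for every extreme point is exactly what pins the shape of $\mathcal{K}$ down. Positive definiteness of $\mathbf{H}$ is the other delicate point --- it is needed both to have an inner product at all and to guarantee $\rho>0$ --- and it is there that monotonicity under dilations is used beyond the $C^{2}$ conclusion.
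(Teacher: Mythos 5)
Your argument is correct and its engine is the same as the paper's: from sufficiency, $D_{F}\bigl((1-\lambda)\cdot\sigma+\lambda\cdot c,\,c\bigr)$ depends only on the spectrum and hence not on the choice of extreme point $\sigma$, so differentiating twice gives $D^{F}(\sigma,c)=\langle\sigma-c\mid\mathbf{H}(c)\mid\sigma-c\rangle=k$, a quadric constraint that pins the boundary down. The paper stops essentially there: it reads this equation as an ellipsoid and notes that any ellipsoid is isomorphic to a ball, hence to the state space of a spin factor. You instead interpret $\mathbf{H}(c)$ as an inner product and close by verifying symmetric transmission probabilities, invoking the earlier theorem on spectral rank-2 bodies with that property. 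Your route is longer but buys two things the paper leaves implicit: (i) an explicit justification that $\mathbf{H}$ exists and is positive definite (the paper writes $\mathbf{H}(c)$ without comment, relying silently on the preceding $C^{2}$ theorem, and positive definiteness is what guarantees the quadric is a genuine ellipsoid rather than a degenerate one); and (ii) the Cauchy--Schwarz argument for the retractions only needs $\mathcal{K}\subseteq B_{0}(\rho)$, so you sidestep the step --- implicit in the paper's ``this is the equation for an ellipsoid'' --- that every boundary point of a spectral rank-2 body is extreme and therefore the body actually fills out the ellipsoid. The one cost is that your conclusion inherits whatever care is needed in the symmetric-transmission-probabilities theorem, whose own proof ends with the same ``extreme points on an ellipsoid, hence the body is an ellipsoid'' move; so the gain in rigor at the end is partly cosmetic, while the gain in points (i) is real.
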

\begin{proof}
First we note that $\mathcal{K}$ is a spectral set with a center that we will denote $c$. We will embed $\mathcal{K}$ in a vector space with $c$ as the origin. If $\sigma$ and $\rho$ are points on the boundary and $\lambda\in [0,\nicefrac{1}{2} ]$ then 
\begin{equation}
D_{F}\left(\left(1-\lambda\right)\sigma+\lambda\cdot c,c\right) =D_{F}\left(\left(1-\lambda\right)\rho+\lambda\cdot c,c\right) .
\end{equation}
Therefore 
\begin{equation}\label{eq:lokaldivligning}
D^{F}\left(\sigma ,c\right)=k
\end{equation}
for some constant $k.$ Equation (\ref{eq:lokaldivligning}) can be written in terms of the Hesse matrix as
\begin{equation}
\left\langle \sigma -c \left| \mathbf{H}\left(c\right)\right| \sigma-c\right\rangle = k ,
\end{equation}
 and this is the equation for an ellipsoid. The result follows because any ellipsoid is isomorphic to a ball.
\end{proof}
One easily check that if $f:C^{2}\left([0,1]\right)$ then $F\left(\sigma\right)=\mathrm{tr}[f\left(\sigma\right)]$ defines a $C^2$-function on any spin factor.

\begin{thm}
Assume that $f:C^{3}\left([0,1]\right)$ is a convex symmetric function and that the function $F$ is defined as $F\left(\sigma\right)=\mathrm{tr}[f\left(\sigma\right)]$ on a spin factor. 
If $y\to y^2 f(y)$ is an increasing function then the Bregman divergence $D_F$ is monotone under dilations.
\end{thm}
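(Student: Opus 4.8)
The plan is to reverse the computation behind the preceding theorem: rewrite ``$D_F$ monotone under dilations'' as a pointwise inequality for the Hessian of $F$ in the semidefinite order, and then read it off from the hypothesis on $f$. Since $f\in C^{2}\left(\left[0,1\right]\right)$, the function $F$ is (continuously) twice differentiable on the state space, so by Lemma~\ref{lem:lokal} it is enough to show that $D^{F}$ is monotone under dilations. A dilation is an affinity $\Phi\left(x\right)=\left(1-r\right)\cdot z+r\cdot x$ with $z$ a fixed state and $r\in\left[0,1\right]$; then $\Phi\left(x\right)-\Phi\left(y\right)=r\left(x-y\right)$ and, by~(\ref{eq:Hesse}),
\[
D^{F}\left(\Phi\left(x\right),\Phi\left(y\right)\right)=r^{2}\left\langle x-y\left|\mathbf{H}\left(w\right)\right|x-y\right\rangle ,\qquad w=\Phi\left(y\right).
\]
Hence the claim is equivalent to $r^{2}\cdot\mathbf{H}\left(w\right)\preceq\mathbf{H}\left(y\right)$ as quadratic forms, for every state $y$, every $r$, and every reachable $w$, i.e. every $w$ with $\left\Vert w-c\right\Vert\le r\left\Vert y-c\right\Vert+\tfrac{1-r}{2}$, where $c=\boldsymbol{\nicefrac{1}{2}}$ is the centre. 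Moreover a dilation with arbitrary centre factors as a composition of dilations centred at pure states --- the centre $c$ being recovered as a convex combination of those pure states, with weights determined by the ratios --- and a composition of monotone maps is monotone, so one may assume $z$ is a pure state; then $w-c$ runs over the sphere of radius $\tfrac{1-r}{2}$ around $r\left(y-c\right)$.

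By spectral calculus on the spin factor, if $\sigma$ has eigenvalues $\lambda$ and $1-\lambda$ with $\lambda\ge\tfrac12$ and $c$ is placed at the origin, then $F\left(\sigma\right)=f\left(\lambda\right)+f\left(1-\lambda\right)$, and $\mathbf{H}\left(\sigma\right)$ is diagonal in an orthonormal frame whose first vector is the radial direction $\widehat{\sigma-c}$: the radial entry is $a_{\sigma}=f''\left(\lambda\right)+f''\left(1-\lambda\right)=2f''\left(\lambda\right)$, and the remaining $d-1$ tangential entries all equal $b_{\sigma}=\dfrac{f'\left(\lambda\right)-f'\left(1-\lambda\right)}{\left\Vert\sigma-c\right\Vert}=\dfrac{2f'\left(\lambda\right)}{\lambda-\tfrac12}$, using that $f$ is symmetric; since then $f'\left(\tfrac12\right)=0$, the number $b_{\sigma}$ is twice the mean of $f''$ over $\left[\tfrac12,\lambda\right]$. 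Both are non-negative, and $\sigma\mapsto\mathbf{H}\left(\sigma\right)$ is continuous because $f\in C^{3}$. Every orthogonal transformation of the underlying Hilbert space is an isomorphism fixing $c$ and commuting with $\sigma\mapsto f\left(\sigma\right)$, hence preserving $D^{F}$, so one may assume $y$ and $w$ lie in a common $2$-plane $P$ through $c$. Then $r^{2}\mathbf{H}\left(w\right)\preceq\mathbf{H}\left(y\right)$ decouples into the $2\times2$ inequality $r^{2}\mathbf{H}\left(w\right)|_{P}\preceq\mathbf{H}\left(y\right)|_{P}$ --- which is exactly the assertion for the disc $JSpin_{2}$ carved out by $P$ --- together with the scalar inequality $r^{2}b_{w}\le b_{y}$ on the $d-2$ tangential directions orthogonal to $P$.

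The reachability bound translates to $\lambda_{w}\ge r\lambda_{y}$ and $1-\lambda_{w}\ge r\left(1-\lambda_{y}\right)$, and from these together with the hypothesis that $\lambda\mapsto\lambda^{2}f''\left(\lambda\right)$ is increasing I would derive both $r^{2}a_{w}\le a_{y}$ and $r^{2}b_{w}\le b_{y}$. For $b$: the affine substitution $\phi$ carrying $\left[\tfrac12,\lambda_{y}\right]$ onto $\left[\tfrac12,\lambda_{w}\right]$ in $f'\left(\lambda\right)=\int_{1/2}^{\lambda}f''$ reduces it to the pointwise inequality $r^{2}f''\left(\phi\left(s\right)\right)\le f''\left(s\right)$, which follows from $\lambda^{2}f''\left(\lambda\right)$ increasing because $\phi$ satisfies $rs\le\phi\left(s\right)\le s$ (and one uses the analogous substitution in the $\left[\lambda_{w},\tfrac12\right]$--representation of $b$ when $\lambda_{w}>\lambda_{y}$). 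For $a$ (the only delicate case being $\lambda_{w}>\lambda_{y}$): integrate $\tfrac{\mathrm{d}}{\mathrm{d}\lambda}\log\!\left(\lambda^{2}f''\left(\lambda\right)\right)\le\tfrac{2}{\lambda\left(1-\lambda\right)}$ on $\left(\tfrac12,1\right)$ --- an inequality which, given the symmetry of $f$, is precisely what ``$\lambda^{2}f''\left(\lambda\right)$ increasing also for $\lambda\le\tfrac12$'' comes down to --- to get $\dfrac{\lambda_{w}^{2}f''\left(\lambda_{w}\right)}{\lambda_{y}^{2}f''\left(\lambda_{y}\right)}\le\dfrac{\left(1-\lambda_{y}\right)^{2}}{\left(1-\lambda_{w}\right)^{2}}$, and combine with $1-\lambda_{w}\ge r\left(1-\lambda_{y}\right)$. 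In particular the trace of $\mathbf{H}\left(y\right)|_{P}-r^{2}\mathbf{H}\left(w\right)|_{P}$ is non-negative, so it only remains to show this $2\times2$ matrix has non-negative determinant.

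Writing $\theta$ for the angle between $\widehat{y-c}$ and $\widehat{w-c}$, a direct expansion collapses to
\[
\det\!\left(\mathbf{H}\left(y\right)|_{P}-r^{2}\mathbf{H}\left(w\right)|_{P}\right)=\left(a_{y}-r^{2}a_{w}\right)\left(b_{y}-r^{2}b_{w}\right)-r^{2}\left(a_{y}-b_{y}\right)\left(a_{w}-b_{w}\right)\sin^{2}\theta ,
\]
whose first term is non-negative by the previous step. So the determinant is non-negative whenever $a_{y}-b_{y}$ and $a_{w}-b_{w}$ have opposite signs; and replacing $\sin^{2}\theta$ by $1$ turns the right-hand side into $\left(a_{y}-r^{2}b_{w}\right)\left(b_{y}-r^{2}a_{w}\right)$, so it is also non-negative whenever $r^{2}\max\left(a_{w},b_{w}\right)\le\min\left(a_{y},b_{y}\right)$. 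The remaining case --- equal signs and $r^{2}\max\left(a_{w},b_{w}\right)>\min\left(a_{y},b_{y}\right)$ --- is where the real work lies: here one must use that, for a pure-state centre, $\cos\theta=\dfrac{\left\Vert w-c\right\Vert^{2}+r^{2}\left\Vert y-c\right\Vert^{2}-\left(\tfrac{1-r}{2}\right)^{2}}{2r\left\Vert w-c\right\Vert\left\Vert y-c\right\Vert}$, substitute this into the displayed determinant, and estimate, once more invoking the monotonicity of $\lambda\mapsto\lambda^{2}f''\left(\lambda\right)$ --- now to bound how fast $a_{\sigma}$ and $b_{\sigma}$ can separate from each other as $\lambda$ moves over the admissible range. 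The extreme cases $\theta\in\left\{0,\pi\right\}$, where $\left\Vert w-c\right\Vert$ is maximal or minimal and the determinant reduces to $\left(a_{y}-r^{2}a_{w}\right)\left(b_{y}-r^{2}b_{w}\right)$, are exactly the one-dimensional statement along the diameter through $y$ and use only ``$\lambda^{2}f''\left(\lambda\right)$ increasing'', as in the converse theorem. I expect this determinant estimate, uniform over the admissible $\theta$, to be the main obstacle: it is the one place where the interplay between the symmetry of $f$ and the monotonicity of $\lambda\mapsto\lambda^{2}f''\left(\lambda\right)$ is needed in full, and a hands-on estimate along the one-parameter family of reachable pairs $\left(\left\Vert w-c\right\Vert,\theta\right)$ appears unavoidable.
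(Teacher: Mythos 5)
Your setup is sound: the reduction via Lemma~\ref{lem:lokal} to the quadratic-form inequality $r^{2}\cdot\mathbf{H}\left(w\right)\preceq\mathbf{H}\left(y\right)$ for reachable $w$, the computation of the radial and tangential eigenvalues $a_{\sigma}=2f''\left(\lambda\right)$ and $b_{\sigma}=2f'\left(\lambda\right)/\left(\lambda-\tfrac{1}{2}\right)$, the reduction to a $2$-plane, and the two scalar inequalities $r^{2}a_{w}\leq a_{y}$ and $r^{2}b_{w}\leq b_{y}$ are all reasonable and consistent with the hypothesis on $\lambda\mapsto\lambda^{2}f''\left(\lambda\right)$. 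But the proof is not complete: in what you call ``the remaining case'' you never establish the non-negativity of $\det\left(\mathbf{H}\left(y\right)|_{P}-r^{2}\mathbf{H}\left(w\right)|_{P}\right)$; you describe what an estimate would have to look like and state that you \emph{expect} it to be the main obstacle. Since the entire difficulty of the theorem is concentrated in exactly that off-diagonal coupling (the $\sin^{2}\theta$ term), what you have is a reduction of the theorem to an unproven $2\times2$ matrix inequality, not a proof.

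The paper avoids this dead end by never passing to the integrated inequality over the whole sphere of reachable $w$. Instead it shows that $r\mapsto r^{2}\,\mathbf{H}\left(\left(1-r\right)\cdot z+r\cdot y\right)$ is increasing by differentiating in $r$; after normalizing to $r=1$ the condition becomes $2\mathbf{H}\left(y\right)+\left\langle \nabla\mathbf{H}\left(y\right)\mid y-z\right\rangle \succeq0$, an expression that is \emph{affine} in $z$. That affinity is what permits averaging $z$ over the rotations fixing the centre and $y$, replacing $z$ by a point $\bar{z}$ collinear with $y$ and the centre, at which point the statement collapses to the one-dimensional case already settled in Theorem~\ref{dilmon}. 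Your formulation destroys precisely this linearity: the reachable $w$ fill a sphere, the inequality $r^{2}\mathbf{H}\left(w\right)\preceq\mathbf{H}\left(y\right)$ is not affine in $z$, so no averaging is available and you are forced to control the worst-case angle $\theta$ by hand --- which is where you stop. To finish along your route you would have to actually prove the determinant bound uniformly in $\theta$; the shorter path is to differentiate in $r$ before quantifying over $z$.
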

\begin{proof}

Assume that $y\to y^2 f(y)$ is an increasing function. It is sufficient to prove that $D^F\left(x,y\right)$ is decreasing 
under dilations. Let $x\to \left(1-r\right)z+rx$ denote a dilation around $z$ by a factor of $r\in[0,1]$. Then
\begin{equation}
D^{F}\left(\left(1-r\right)z+rx,\left(1-r\right)z+ry\right)
=r^{2}\left< x-y \left|\mathbf{H}\left(\left(1-r\right)z+ry\right)\right|x-y\right>\, .
\end{equation}
so it is sufficient to prove that $r\to r^{2}\mathbf{H}\left(\left(1-r\right)z+ry\right)$ is an increasing matrix function. 
Since $f$ is $C^3$ we may differentiate with respect to $r$ and we have to prove the inequality
\begin{equation}
2r\mathbf{H}\left(\left(1-r\right)z+ry\right)+r^2\frac{\mathrm{d}}{\mathrm{d}r}\mathbf{H}\left(\left(1-r\right)z+ry\right)\geq 0\, .
\end{equation}
Without loss of generality we may assume $r=1$ so that we have to prove that
\begin{equation}\label{directional}
2\mathbf{H}\left(y\right)+\frac{\mathrm{d}}{\mathrm{d}r}\mathbf{H}\left(\left(1-r\right)z+ry\right)_{\mid r=1}\geq 0 \, .
\end{equation}
If $y=\left(y_{1},y_{2},\dots ,y_{d}\right)$ and $\mathbf{H}=\left(\mathbf{H}_{i,j}\right)$ then 
\begin{align}
\frac{\mathrm{d}}{\mathrm{d}r}\mathbf{H}\left(\left(1-r\right)z+ry\right)_{\mid r=1}
&=\left(\frac{\mathrm{d}}{\mathrm{d}r}\mathbf{H}_{i,j}\left(\left(1-r\right)z+ry\right)\right)_{\mid r=1}\\
&=\left<\nabla \mathbf{H}_{i,j}\left(\left(1-r\right)z+ry\right)\mid y-z \right>_{\mid r=1}\\
&=\left<\nabla \mathbf{H}_{i,j}\left(y\right)\mid y-z \right> \, .
\end{align}
Since inequality (\ref{directional}) is invariant under rotations that leave the center and $y$ invariant the same must be the case for the inequality
\begin{equation}
2\left(\mathbf{H}_{i,j}\right)+\left<\nabla \mathbf{H}_{i,j}\left(y\right)\mid y-z \right> \geq 0 \, ,
\end{equation}
but this inequality is linear in $z$ so we may take the mean under all rotated versions of this inequality. If $\bar{z}$ denotes the mean of rotated versions of $z$ we have to prove that 
\begin{equation}
2\left(H_{i,j}\right)+\left<\nabla \mathbf{H}_{i,j}\left(y\right)\mid y-\bar{z} \right>\geq 0 \, .
\end{equation}
Since $\bar{z}$ is collinear with the $y$ and the center we have reduced the problem to dilations of a one-dimensional spin factor which is covered in Theorem \ref{dilmon}.
\end{proof}

For the Tsallis entropy of order $\alpha$ we have $F\left(x\right)=\frac{x^{\alpha}+\left(1-x\right)^{\alpha}-1}{\alpha-1}$
so that $F''\left(x\right)=\alpha\left(x^{\alpha-2}+\left(1-x\right)^{\alpha-2}\right)$
and
\begin{align}
x^{2}F''\left(x\right) & =x^{2}\alpha\left(x^{\alpha-2}+\left(1-x\right)^{\alpha-2}\right)\\
 & =\alpha\left(x^{\alpha}+x^{2}\left(1-x\right)^{\alpha-2}\right)\, .
\end{align}
The derivative is 
\begin{align}
&\alpha\left(\alpha x^{\alpha-1}+2x\left(1-x\right)^{\alpha-2}  -x^{2}\left(\alpha-2\right)\left(1-x\right)^{\alpha-3}\right)\\
  &=\alpha\left(\alpha x^{\alpha-1}+\left(2x\left(1-x\right)-x^{2}\left(\alpha-2\right)\right)\left(1-x\right)^{\alpha-3}\right)\\
&=\alpha\left(\alpha x^{\alpha-1}+x\left(2-\alpha x\right)\left(1-x\right)^{\alpha-3}\right)\\
   &=\alpha x^{\alpha-1}\left(\alpha+\left(\frac{2}{x}-\alpha\right)\left(\frac{1}{x}-1\right)^{\alpha-3}\right).
\end{align}
Set $z=\frac{1}{x}-1$ so that $x=\frac{1}{z+1}$ which gives
\begin{equation}
\alpha+\left(\frac{2}{x}-\alpha\right)\left(\frac{1}{x}-1\right)^{\alpha-3}=\alpha+\left(2z+2-\alpha\right)z^{\alpha-3}.\label{eq:FortegnAndenAfledt}
\end{equation}
For $\alpha\leq2$ the derivative is always positive. For $\alpha<3$
and $z$ tending to zero the derivative tends to $-\infty$ if $2-\alpha$
is negative so we do not have monotonicity for $2<\alpha<3.$ 

For $\alpha\geq3$ we calculate the derivative in order to determine the minimum.
\begin{equation}
2z^{\alpha-3}+\left(2z+2-\alpha\right)\left(\alpha-3\right)z^{\alpha-4} =0 \, ,
\end{equation}
which has the solution $z=\frac{\alpha-3}{2}$. Plugging this solution the expression in Equation (\ref{eq:FortegnAndenAfledt}) gives the value
\begin{equation}
\alpha+\left(2\cdot\frac{\alpha-3}{2}+2-\alpha\right)\left(\frac{\alpha-3}{2}\right)^{\alpha-3}
=\alpha-\left(\frac{\alpha-3}{2}\right)^{\alpha-3}.
\end{equation}
Numerical calculations show that this function is positive for values
of $\alpha$ between 3 and 6.43779 .

\section{Monotonicity of Bregman divergences on Spin Factors}
A binary system can be represented as the spin factor $JSpin_1$ or as the interval [0,1]. 

\begin{thm}\label{dilmon}Let $F:[0,1]\to \mathbb{R}$ denote a convex and symmetric function. Then $D_F$ is monotone if 
and only if $F\in C^2\left(\left[0,1\right]\right)$ and $y\to y^{2}\cdot F''(y)$ is increasing.
\end{thm}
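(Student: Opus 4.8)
The plan is to treat the two implications separately, in both cases passing to the local divergence $D^F$, which on the interval equals $F''(y)\,(x-y)^2$. First I would settle that $F\in C^2([0,1])$: in the ``if'' direction this is an assumption, while in the ``only if'' direction a monotone regret function is a Bregman divergence satisfying sufficiency (as recalled above), so the theorem on sufficient divergences that are monotone under dilations applies and yields $F\in C^2$. With $F$ twice differentiable, Lemma \ref{lem:lokal} reduces the whole statement to the assertion that $D^F$ is monotone under all affinities $[0,1]\to[0,1]$ if and only if $y\mapsto y^{2}F''(y)$ is increasing.

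For the ``only if'' direction I would test monotonicity of $D^F$ against the dilation $\delta_r\colon y\mapsto ry$ toward $0$, which is an affinity $[0,1]\to[0,1]$ for every $r\in[0,1]$. Since $\delta_r(x)-\delta_r(y)=r(x-y)$, the inequality $D^F(\delta_r x,\delta_r y)\le D^F(x,y)$ reduces, after dividing by $(x-y)^2$ and multiplying by $y^{2}$, to $(ry)^{2}F''(ry)\le y^{2}F''(y)$. As every pair $0\le u\le y\le 1$ can be written as $(ry,y)$ with $r=u/y$, this says precisely that $y\mapsto y^{2}F''(y)$ is increasing.

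For the ``if'' direction I would show $D^F$ is monotone under an arbitrary affinity $\Phi\colon[0,1]\to[0,1]$, written as $\Phi(y)=(1-y)p+yq$ with $p=\Phi(0)$, $q=\Phi(1)\in[0,1]$. Since $F$ is symmetric, $F''(1-y)=F''(y)$, so $D^F$ is invariant under the flip $\iota\colon y\mapsto 1-y$; precomposing with $\iota$ if necessary we may assume $p\le q$. The structural step is that such a $\Phi$ factors as $\Phi=\tau_{1-p}\circ\delta_a$ with $a=\frac{q-p}{1-p}$, where $\delta_a(y)=ay$ is a dilation toward $0$ and $\tau_b(y)=1-b(1-y)$ is a dilation toward $1$, both factors $a$ and $b=1-p$ lying in $[0,1]$ (the bound $a\le 1$ being exactly $q\le 1$; if $p=1$ then $\Phi\equiv 1$ and the claim is trivial). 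Hence it suffices to have monotonicity of $D^F$ under dilations toward each endpoint. Toward $0$ this is the computation of the previous paragraph read in reverse: $D^F(\delta_r x,\delta_r y)\le D^F(x,y)$ amounts to $(ry)^{2}F''(ry)\le y^{2}F''(y)$, which holds because $ry\le y$ and $y\mapsto y^{2}F''(y)$ is increasing. Toward $1$ it follows by conjugation, since $\tau_r=\iota\circ\delta_r\circ\iota$ and $D^F$ is $\iota$-invariant. Composing the two estimates gives $D^F(\Phi x,\Phi y)\le D^F(x,y)$, and Lemma \ref{lem:lokal} then upgrades this to monotonicity of $D_F$.

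The only real obstacle is the factorization of an arbitrary affine self-map of $[0,1]$ into a dilation toward $0$ followed by a dilation toward $1$, together with the reduction to the case $p\le q$ via the symmetry of $F$; once this bookkeeping is in place, the analytic content is the single equivalence $r^{2}F''(ry)\le F''(y)\iff(ry)^{2}F''(ry)\le y^{2}F''(y)$.
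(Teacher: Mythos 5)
Your proposal is correct and follows essentially the same route as the paper: the paper's own proof consists precisely of the observation that every affinity of $[0,1]$ factors into a reflection and dilations around the endpoints $\{0,1\}$, reducing monotonicity to monotonicity under dilations, with the $C^2$ regularity and the equivalence between dilation-monotonicity and the monotonicity of $y\mapsto y^{2}F''(y)$ delegated to the two preceding theorems (via $D^F$ and Lemma \ref{lem:lokal}, exactly as you do). Your write-up merely makes explicit the factorization $\Phi=\tau_{1-p}\circ\delta_a$ and the flip-invariance of $D^F$ coming from the symmetry of $F$, which the paper leaves implicit.
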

\begin{proof}
The convex body $\left[0,1\right]$ has the identity and a reflection as the only isomorphisms. 
Any affinity can be decomposed into an isomorphism and two dilations where each dilation is a dilation around one of the 
extreme points $\left \{ 0,1\right \}$. 
Therefore $D_F$ is monotone if and only if it is monotone under dilations. 
\end{proof}

Next we will study monotonicty of Bregman divergences in spin factors $JSpin_d$ for $d\geq 2$.

\begin{lem}\label{ReduktionTil2}
Let $D_{F}$ denote a Bregman divergence on $JSpin_{d}$ where $d\geq2$.
If $D_{F}$ satisfies sufficiency and the restriction to $JSpin_{2}$
is monotone, then $D_{F}$ is monotone on $JSpin_{d}.$
\end{lem}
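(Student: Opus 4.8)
The plan is to cut an arbitrary endomorphism of $JSpin_{d}$ down to an affinity between two copies of $JSpin_{2}$ that sit inside $JSpin_{d}$ as central planar sections, and then feed this to the hypothesis. I identify $JSpin_{d}$ with its state-space ball in $\mathcal{H}\oplus\mathbb{R}$ and write $c$ for the center. Two geometric facts will be used: (i) for every $2$-dimensional subspace $V\subseteq\mathcal{H}$ the slice $\mathcal{K}\cap\bigl(V\oplus\mathbb{R}\bigr)$ is the state space of a $JSpin_{2}$-subalgebra, and the map $(\vec v,s)\mapsto(P_{V}\vec v,s)$, with $P_{V}$ the orthogonal projection onto $V$, is a retraction of $JSpin_{d}$ onto it; (ii) any two such slices are interchanged by an orthogonal transformation of $\mathcal{H}$, i.e.\ by an isomorphism of $JSpin_{d}$. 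Since an isomorphism $\Theta$ is sufficient for every pair of states (take $\Theta^{-1}$ as recovery map), sufficiency of $D_{F}$ gives $D_{F}\bigl(\Theta\rho,\Theta\sigma\bigr)=D_{F}\left(\rho,\sigma\right)$; hence the restriction of $D_{F}$ to any central planar slice is, after transport by a suitable isomorphism, the restriction of $D_{F}$ to the standard $JSpin_{2}$, which is monotone by hypothesis.

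Now fix an affinity $\Phi\colon JSpin_{d}\to JSpin_{d}$ and states $\rho,\sigma$; we may assume $\Phi\rho\neq\Phi\sigma$, otherwise the inequality is trivial. Choose a $2$-dimensional central slice $\mathcal{L}$ containing $\rho$ and $\sigma$ (the affine span of $\rho$, $\sigma$, $c$, or any $2$-plane through the line they determine if these points are collinear) and, in the same way, a $2$-dimensional central slice $\mathcal{L}'$ containing $\Phi\rho$ and $\Phi\sigma$; both are copies of $JSpin_{2}$, with inclusions $S\colon\mathcal{L}\hookrightarrow JSpin_{d}$, $S'\colon\mathcal{L}'\hookrightarrow JSpin_{d}$ and retractions $R$, $R'$ as in (i). Because the segment $\left[\sigma,\rho\right]$ and the direction $\rho-\sigma$ lie in the flat slice $\mathcal{L}$, the value $D_{F}\left(\rho,\sigma\right)$ depends only on $F|_{\mathcal{L}}$, so it equals the Bregman divergence of $F|_{\mathcal{L}}$ computed inside $\mathcal{L}$; likewise $D_{F}\left(\Phi\rho,\Phi\sigma\right)$ is the Bregman divergence of $F|_{\mathcal{L}'}$ computed inside $\mathcal{L}'$.

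Finally put $\Psi=R'\circ\Phi\circ S\colon\mathcal{L}\to\mathcal{L}'$. Since $R'$ fixes every point of $\mathcal{L}'$ and $\Phi\rho,\Phi\sigma\in\mathcal{L}'$, we get $\Psi(\rho)=\Phi(\rho)$ and $\Psi(\sigma)=\Phi(\sigma)$. Transporting $\mathcal{L}$ and $\mathcal{L}'$ onto the standard $JSpin_{2}$ by isomorphisms of $JSpin_{d}$, as in the first paragraph, turns $\Psi$ into an endomorphism of $JSpin_{2}$ and, by that same invariance, identifies the two restricted divergences with the monotone divergence on $JSpin_{2}$. Monotonicity there yields $D_{F}\left(\Psi\rho,\Psi\sigma\right)\le D_{F}\left(\rho,\sigma\right)$, that is $D_{F}\left(\Phi\rho,\Phi\sigma\right)\le D_{F}\left(\rho,\sigma\right)$, as required.

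The delicate point, and the one I expect to dwell on, is the claim that the restriction of $D_{F}$ to a central planar slice is \emph{canonically} the monotone $JSpin_{2}$-divergence. This is exactly where sufficiency is indispensable: without it two different $JSpin_{2}$-slices could carry non-isomorphic Bregman divergences, and it is also what legitimizes treating the a priori ``off-diagonal'' affinity $\Psi\colon\mathcal{L}\to\mathcal{L}'$ as a genuine endomorphism of $JSpin_{2}$ to which the hypothesis applies. Apart from this, one only has to verify the routine facts that $(\vec v,s)\mapsto(P_{V}\vec v,s)$ is positive and trace preserving — it is, since orthogonal projection does not increase $\|\vec v\|$ and leaves the trace coordinate alone — and that a central planar slice of the state-space ball is again a ball, hence a $JSpin_{2}$.
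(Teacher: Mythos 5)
Your proof is correct, and it takes a genuinely different and noticeably shorter route than the paper's. The paper also starts from the disc $\Delta$ spanned by $\rho$, $\sigma$ and the center, but it then factors $\Phi|_{\Delta}$ as an endomorphism $\Phi_{1}$ of $\Delta$ followed by an isomorphism $\Phi_{2}$ of $\Delta$ onto the \emph{image} disc $\tilde{\Delta}$ spanned by $\Phi(\rho)$, $\Phi(\sigma)$ and $\Phi(c)$. Since $\Phi(c)$ need not be the center, $\tilde{\Delta}$ is in general a displaced ``small circle'' rather than a central slice, so sufficiency cannot be invoked to dispose of $\Phi_{2}$; the bulk of the paper's proof is an ad hoc argument for the monotonicity of this great-circle-to-small-circle map, built from a mirror state $\rho_{2}$ with $D_{F}(\rho_{2},\sigma)=D_{F}(\rho_{1},\sigma)$, the Bregman identity applied to $\bar{\rho}=\frac{1}{2}\rho_{1}+\frac{1}{2}\rho_{2}$, and an auxiliary affinity $\Psi:\Delta\to\Delta$ matching divergences to the center. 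You sidestep that entire construction by choosing the target slice $\mathcal{L}'$ to be the \emph{central} plane through $\Phi(\rho)$, $\Phi(\sigma)$ and $c$, and post-composing with the orthogonal-projection retraction $R'$ onto it: since $R'$ fixes $\Phi(\rho)$ and $\Phi(\sigma)$, the map $R'\circ\Phi\circ S$ still computes the divergence you care about, and after conjugating by automorphisms of $JSpin_{d}$ (which preserve $D_{F}$ by sufficiency) it becomes an honest endomorphism of the standard $JSpin_{2}$, to which the hypothesis applies verbatim. The price is that you must accept the projection $(\vec v,s)\mapsto(P_{V}\vec v,s)$ as an admissible affinity, but the paper's framework explicitly allows positive trace-preserving maps that are not completely positive, so this is unobjectionable. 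What your argument buys is economy and transparency; what the paper's buys is explicit information about how $D_{F}$ behaves on non-central discs, which is not needed for the statement itself.
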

\begin{proof}
Assume that $D_{F}$ satisfies sufficiency and that the
restriction of $D_{F}$ to $JSpin_{2}$ is monotone. Let $\rho_{1},\sigma\in JSpin_{d}$
and let $\Phi:JSpin_{d}\to JSpin_{d}$ denote a positive trace preserving
affinity. Let $\Delta$ denote the disc spanned of $\rho_{1},\sigma$ and
$\boldsymbol{\frac{1}{2}}.$ Then $\Phi\left(\rho_{1}\right),\Phi\left(\sigma\right)$
and $\Phi\left(\boldsymbol{\frac{1}{2}}\right)$ spans a disc $\tilde{\Delta}$
in $JSpin_{d}.$ The restriction of $\Phi$ to $\Delta$ can be written
as $\Phi_{\mid\Delta}=\Phi_{2}\circ\Phi_{1}$ where $\Phi_{1}$ is
an affinity $\Delta\to\Delta$ and $\Phi_{2}$ is an isomorphism $\Delta\to\tilde{\Delta}.$
Essentially $\Phi_{2}$ maps a great circle into a small circle where the great circle is the boundary of $\Delta$ and the small circle is the boundary of the $\tilde{\Delta}$. According
to our assumptions $\Phi_{1}$ is monotone so it is sufficient to
prove that $\Phi_{2}$ is monotone. 

\begin{figure}[tbh]
\begin{centering}
\begin{tikzpicture}[scale=1.10,line cap=round,line join=round,>=triangle 45,x=1.0cm,y=1.0cm]
\clip(-2.1,-2.1) rectangle (2.1,2.1);
\draw [fill=gray!30!white,fill opacity=0.5, line width=1.2pt] (0.,0.) circle (2.cm);
\draw [line width=1.2pt] (-2.,0.)-- (2.,0.);
\draw [line width=1.2pt] (1.3237037037037047,1.0140740740740726)-- (1.3237037037037047,-1.0140740740740726);

\draw (-0.22,0.56) node[anchor=north west] {$\mathbf{\boldsymbol{\frac{1}{2}}}$};
\draw (0.5,-0.1) node[anchor=north west] {$\sigma$};
\draw (0.85,1.182962962962961) node[anchor=north west] {$\rho_{1}$};
\draw (0.85,-0.8170370370370342) node[anchor=north west] {$\rho_{2}$};
\draw (1.3592592592592603,0) node[anchor=north west] {$\bar{\rho}$};

\draw [fill=blue] (0.,0.) circle (2.0pt);
\draw [fill=blue] (1.3237037037037047,1.0140740740740726) circle (2.0pt);
\draw [fill=blue] (0.6837037037037046,0.) circle (2.0pt);
\draw [fill=blue] (1.3237037037037047,-1.0140740740740726) circle (2.0pt);
\draw [fill=blue] (1.3237037037037047,0.) circle (2.0pt);
\end{tikzpicture}
\par\end{centering}
\caption{Illustration of $\Delta$ and the relative position of the states mentioned in the proof of Lemma \ref{ReduktionTil2}.}
\end{figure}
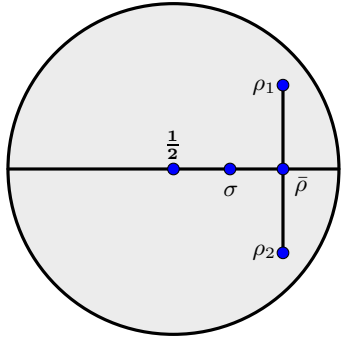

Let $\rho_{2}$ denote a state such that 
\begin{align}
D_{F}\left(\rho_{2},\sigma\right)&=D_{F}\left(\rho_{1},\sigma\right)\\
D_{F}\left(\rho_{2},\boldsymbol{\frac{1}{2}}\right)&=D_{F}\left(\rho_{1},\boldsymbol{\frac{1}{2}}\right).
\end{align}
Then 
\begin{align}
D_{F}\left(\rho_{1},\sigma\right) & =\frac{1}{2}\cdot D_{F}\left(\rho_{1},\sigma\right)+\frac{1}{2}\cdot D_{F}\left(\rho_{2},\sigma\right)\\
 & =\frac{1}{2}\cdot D_{F}\left(\rho_{1},\bar{\rho}\right)+\frac{1}{2}\cdot D_{F}\left(\rho_{2},\bar{\rho}\right)+D_{F}\left(\bar{\rho},\sigma\right)
\end{align}
where $\bar{\rho}=\frac{1}{2}\cdot\rho_{1}+\frac{1}{2}\cdot\rho_{2}.$
Now $\bar{\rho},\sigma$ and $\boldsymbol{\frac{1}{2}}$ are co-linear and so
are $\Phi_{2}\left(\bar{\rho}\right),\Phi_{2}\left(\sigma\right),$
and $\Phi_{2}\left(\boldsymbol{\frac{1}{2}}\right)$ so the restriction of
$\Phi$ to the span of $\bar{\rho},\sigma$ and $\boldsymbol{\frac{1}{2}}$
is an interval and the span of $\Phi_{2}\left(\bar{\rho}\right),\Phi_{2}\left(\sigma\right),\Phi_{2}\left(\boldsymbol{\frac{1}{2}}\right),$
and $\boldsymbol{\frac{1}{2}}$ is a disc so by assumption the restriction
is monotone implying that 
\begin{equation}
D_{F}\left(\Phi_{2}\left(\bar{\rho}\right),\Phi_{2}\left(\sigma\right)\right)\leq D_{F}\left(\bar{\rho},\sigma\right)\,.
\end{equation}

Let $\bar{\pi}\in\Delta$ denote a state that is colinear with $\bar{\rho}$
and $\boldsymbol{\frac{1}{2}}$ and such that $D_{F}\left(\bar{\pi},\boldsymbol{\frac{1}{2}}\right)=D_{F}\left(\Phi_{2}\left(\bar{\rho}\right),\boldsymbol{\frac{1}{2}}\right).$
Then there exists an affinity $\Psi:\Delta\to\Delta$ such that $\Psi\left(\bar{\rho}\right)=\bar{\pi}$
and for $i=1,2$
\begin{equation}
D_{F}\left(\Phi\left(\rho_{i}\right),\Phi\left(\bar{\rho}\right)\right)  =D_{F}\left(\Psi\left(\rho_{i}\right),\Psi\left(\bar{\rho}\right)\right)\,.
\end{equation}
Since $\Psi$ is monotone
\begin{equation}
D_{F}\left(\Phi\left(\rho_{1}\right),\Phi\left(\bar{\rho}\right)\right)=D_{F}\left(\Phi\left(\rho_{2}\right),\Phi\left(\bar{\rho}\right)\right)\leq D_{F}\left(\rho_{1},\bar{\rho}\right).
\end{equation}
 Therefore 
\begin{multline}
D_{F}\left(\Phi\left(\rho_{1}\right),\Phi\left(\sigma\right)\right)  =\frac{1}{2}\cdot D_{F}\left(\Phi\left(\rho_{1}\right),\Phi\left(\sigma\right)\right)+\frac{1}{2}\cdot D_{F}\left(\Phi\left(\rho_{2}\right),\Phi\left(\sigma\right)\right)\\
  =\frac{1}{2}\cdot D_{F}\left(\Phi\left(\rho_{1}\right),\Phi\left(\bar{\rho}\right)\right)+\frac{1}{2}\cdot D_{F}\left(\Phi\left(\rho_{2}\right),\Phi\left(\bar{\rho}\right)\right)+D_{F}\left(\Phi\left(\bar{\rho}\right),\Phi\left(\sigma\right)\right)\\
  \leq\frac{1}{2}\cdot D_{F}\left(\rho_{1},\bar{\rho}\right)+\frac{1}{2}\cdot D_{F}\left(\rho_{2},\bar{\rho}\right)+D_{F}\left(\bar{\rho},\sigma\right)
  =D_{F}\left(\rho_{1},\sigma\right)\,.
\end{multline}
\end{proof}

\begin{thm}\label{thm:InfDivMonoton}
Information divergence is monotone on spin factors.
\end{thm}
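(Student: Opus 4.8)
The plan is to apply Lemma~\ref{ReduktionTil2}. On a spin factor information divergence is the regret function of $F(\sigma)=\mathrm{tr}[\sigma\ln\sigma]$, and on a spectral set of rank~$2$ this has the form $\mathrm{tr}[f(\sigma)]$ with $f(y)=y\ln y$ convex on $[0,1]$; since the state space of a spin factor is a ball, hence a spectral set of rank~$2$, the proposition saying that $\mathrm{tr}[f(\sigma)]$ generates a sufficient regret function on spectral sets of rank~$2$ shows that $D_F$ satisfies sufficiency. By Lemma~\ref{ReduktionTil2} it therefore remains to prove that $D_F$ is monotone on $JSpin_2$.

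On the two-dimensional body $JSpin_2$ the ``scalar'' part is easy: $D_F$ is monotone under dilations because, writing $F$ on a section $[0,1]$ as the binary entropy, $y\mapsto y^{2}F''(y)=\frac{y}{1-y}$ is increasing, so Theorem~\ref{dilmon} (together with averaging over the rotations that fix the relevant diameter) applies. For the remaining affinities of the disc I would use Lemma~\ref{lem:lokal}: it suffices to show that the quadratic divergence $D^{F}$ is monotone, i.e.\ that $A^{\mathsf T}\mathbf H(\Phi\sigma)A\preceq\mathbf H(\sigma)$ for the linear part $A$ of every affinity $\Phi$ of the disc and every interior point $\sigma$. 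Here $F(\sigma)$ depends only on $r=\lVert\sigma\rVert_{2}$, so $\mathbf H(\sigma)$ is diagonal in the radial/tangential frame with the two eigenvalues $g''(r)$ and $g'(r)/r$, where $g$ is the radial profile of $F$; the required inequality then becomes an explicit comparison of $2\times2$ matrices over the (explicitly parametrizable) set of affine self-maps of the disc, and its one-variable restriction is exactly Theorem~\ref{dilmon}. Verifying this comparison over all admissible $(A,\sigma)$ is the main obstacle.

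A shorter but less self-contained route for the $JSpin_2$ step --- which in fact settles all $JSpin_d$ simultaneously --- is to use that $JSpin_d$ is a section of $\left(\mathcal M_n(\mathbb C)\right)_h$ via a unital Jordan embedding $S$ (which commutes with the functional calculus and the trace, hence scales $D_F$ by a fixed positive constant) with a positive trace-preserving retraction $R$: for an affinity $\Phi$ of $JSpin_d$ the map $S\circ\Phi\circ R$ is a positive trace-preserving map of $\left(\mathcal M_n(\mathbb C)\right)_h$, so monotonicity of the quantum relative entropy under positive trace-preserving maps, together with $R\circ S=\mathrm{id}$, gives $D_F(\Phi\rho,\Phi\sigma)\le D_F(\rho,\sigma)$. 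On either route the real content is the same: the retraction $R$ --- already $M\mapsto\tfrac12(M+\overline M)$ for $JSpin_2\hookrightarrow JSpin_3$ --- is not completely positive, so one genuinely needs monotonicity of relative entropy under maps that are positive but not completely positive, which is equivalent to the matrix inequality of the previous paragraph and is where the work lies.
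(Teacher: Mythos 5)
Your second route is essentially the paper's own proof: reduce to the fact that $JSpin_d$ (or, after Lemma~\ref{ReduktionTil2}, just $JSpin_2$) is a unital Jordan section of $\left(\mathcal M_n(\mathbb C)\right)_h$ with a positive trace-preserving retraction, and then invoke monotonicity of quantum relative entropy under positive trace-preserving maps. The one step you leave open --- you correctly observe that the retraction is not completely positive and conclude that ``one genuinely needs monotonicity of relative entropy under maps that are positive but not completely positive, which is \ldots where the work lies'' --- is not actually remaining work: this is precisely the theorem of M{\"u}ller-Hermes and Reeb (\emph{Annales Henri Poincar\'e} 18 (2017), 1777--1788), and citing it is exactly how the paper closes the argument. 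So your diagnosis of where the difficulty sits is right, and the difficulty is already resolved in the literature. Your first route (explicit comparison of Hessians over all affine self-maps of the disc) is a genuinely different, self-contained strategy, but as you admit you do not carry out the matrix comparison, so it does not constitute a proof; the paper does not attempt it either. One small remark: your verification that $D_F$ satisfies sufficiency (needed as a hypothesis of Lemma~\ref{ReduktionTil2}) is a point the paper leaves implicit, so including it is a minor improvement --- though on your second route, which handles all $JSpin_d$ at once, Lemma~\ref{ReduktionTil2} is not needed at all.
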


\begin{proof}
According to Lemma \ref{ReduktionTil2} we just have to check monotonicity on spin factors of dimension 2, but these are sections of qubits. M{\"u}ller-Hermes and Reeb \cite{Mueller-Hermes2017}
proved that quantum relative entropy is monotone on density matrices
on complex Hilbert spaces. In particular quantum relative entropy
is monotone on qubits. Therefore
information divergence is monotone on any spin factor.
\end{proof}
We will need the following lemma.
\begin{lem}
\label{lem:dilation}Let $\Phi:\mathcal{K}\to\mathcal{K}$ denote
an affinity of a centrally symmetric set into itself. Let $\Psi_{r}$
denote a dilation around the center $c$ with a factor $r\in\left]0,1\right].$
Then $\Psi_{r}\circ\Phi\circ\Psi_{r}^{-1}$ maps $\mathcal{K}$ into
itself.
\end{lem}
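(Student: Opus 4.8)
The plan is to pass to coordinates in which the center $c$ of $\mathcal{K}$ is the origin, so that the central inversion of the set is the linear map $\iota\colon x\mapsto -x$ and the dilation $\Psi_{r}$ about $c$ is the linear scaling $x\mapsto r\cdot x$, with inverse $x\mapsto r^{-1}\cdot x$. Writing $\Phi$ as an affine map $\Phi(x)=Ax+b$ with $A$ linear and $b=\Phi(c)\in\mathcal{K}$, a one-line computation gives $\Psi_{r}\circ\Phi\circ\Psi_{r}^{-1}(x)=Ax+r\cdot b$. Hence the statement reduces to showing that $Ax+r\cdot b\in\mathcal{K}$ for every $x\in\mathcal{K}$.

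The point that makes this work is that central symmetry supplies a second self-map of $\mathcal{K}$ at no cost. Because $\mathcal{K}$ is centrally symmetric, $\iota$ is an isomorphism of $\mathcal{K}$ onto itself, so $\iota\circ\Phi\circ\iota$ also maps $\mathcal{K}$ into $\mathcal{K}$; but $\iota\circ\Phi\circ\iota(x)=-\bigl(A(-x)+b\bigr)=Ax-b$. Therefore, for every $x\in\mathcal{K}$, both $Ax+b=\Phi(x)$ and $Ax-b=\iota\bigl(\Phi(\iota(x))\bigr)$ lie in $\mathcal{K}$.

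It then remains to observe that $Ax+r\cdot b$ is a convex combination of these two points, namely $Ax+r\cdot b=\frac{1+r}{2}\,(Ax+b)+\frac{1-r}{2}\,(Ax-b)$. For $r\in(0,1]$ the two coefficients lie in $[0,1]$ and sum to $1$, so convexity of $\mathcal{K}$ gives $Ax+r\cdot b\in\mathcal{K}$, which is the assertion. I do not expect a real obstacle: the only mild care needed is in the reduction to coordinates --- one must record that $b=\Phi(c)$ belongs to $\mathcal{K}$ (so that $-b\in\mathcal{K}$ as well), and that the composite $\Psi_{r}\circ\Phi\circ\Psi_{r}^{-1}$ is insensitive to the choice of origin once it is placed at $c$. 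No differentiability or continuity of $\Phi$ beyond affineness is used, so the argument applies verbatim to the affinities considered throughout the paper.
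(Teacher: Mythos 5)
Your proof is correct and is essentially the paper's own argument: both pass to coordinates with the center at the origin, compute $\Psi_{r}\circ\Phi\circ\Psi_{r}^{-1}(x)=Ax+r\cdot b$, and express this as the convex combination $\frac{1+r}{2}\left(Ax+b\right)+\frac{1-r}{2}\left(Ax-b\right)$ of $\Phi(x)$ and $-\Phi(-x)$, both of which lie in $\mathcal{K}$ by central symmetry. No substantive differences.
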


\begin{proof}
Embed $\mathcal{K}$ in a vector space $V$ with origin in the center of $\mathcal{K}$. Then $\Phi$ is given
by $\Phi\left(\vec{v}\right)=A\vec{v}+\vec{b}$ and $\Psi_{r}\left(\vec{v}\right)=r\cdot\vec{v}$.
Then 
\begin{align}
\left(\Psi_{r}\circ\Phi\circ\Psi_{r}^{-1}\right)\left(\vec{v}\right) & =r\cdot\left(\boldsymbol{A}\left(\frac{1}{r}\cdot\vec{v}\right)+\vec{b}\right)\\
 & =\boldsymbol{A}\vec{v}+r\cdot\vec{b}\, .
\end{align}
Assume that $\vec{v}\in\mathcal{K}.$ Then $\Phi\left(\vec{v}\right)\in\mathcal{K}$
and $-\Phi\left(-\vec{v}\right)\in\mathcal{K}$. Hence for $\left(1-t\right)\cdot\Phi\left(\vec{v}\right)+t\cdot\left(-\Phi\left(-\vec{v}\right)\right)\in\mathcal{K}.$
Now 
\begin{align}
\left(1-t\right)\cdot\Phi\left(\vec{v}\right)+t\cdot\left(-\Phi\left(-\vec{v}\right)\right) & =\left(1-t\right)\cdot\left(\boldsymbol{A}\vec{v}+\vec{b}\right)+t\cdot\left(-\left(\boldsymbol{A}\left(-\vec{v}\right)+\vec{b}\right)\right)\\
 & =\boldsymbol{A}\vec{v}+\left(1-2t\right)\cdot\vec{b}\, .
\end{align}
For $t=\frac{1-r}{2}$ we get 
\begin{equation}
\left(\Psi_{r}\circ\Phi\circ\Psi_{r}^{-1}\right)\left(\vec{v}\right)=\left(1-t\right)\cdot\Phi\left(\vec{v}\right)+t\cdot\left(-\Phi\left(-\vec{v}\right)\right)\in\mathcal{K}\, ,
\end{equation}
which completes the proof.
\end{proof}

\begin{thm}
If $D_{F}$ is a monotone Bregman divergence on a spin factor and
$F_{r}\left(x\right)=F\left(\left(1-r\right)\cdot \boldsymbol{\frac{1}{2}} +r\cdot x\right)$
then the Bregman divergence $D_{F_{t}}$ is also monotone.
\end{thm}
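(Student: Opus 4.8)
The plan is to recognise $F_t$ as $F$ precomposed with the dilation around the centre and then to push monotonicity of $D_F$ through this reparametrisation using Lemma~\ref{lem:dilation}. Write $\Psi_t$ for the dilation $\Psi_t\left(x\right)=\left(1-t\right)\cdot\boldsymbol{\frac{1}{2}}+t\cdot x$, so that $F_t=F\circ\Psi_t$; we may assume $0<t\leq 1$, the case $t=0$ being trivial since then $D_{F_t}=0$. Because $\Psi_t$ is a dilation towards the centre by a factor at most $1$, it maps the state space $\mathcal{K}$ into itself, so $F_t$ is a well-defined convex function on $\mathcal{K}$, and it is differentiable because $F$ is, hence $D_{F_t}$ is again a Bregman divergence.

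The first step is the elementary observation that a regret function is covariant under affine reparametrisation: for any affinity $\Psi$ and all $\rho,\sigma$,
\begin{equation}
D_{F\circ\Psi}\left(\rho,\sigma\right)=D_{F}\left(\Psi\left(\rho\right),\Psi\left(\sigma\right)\right).
\end{equation}
This is immediate from the definition (\ref{eq:Regret}): an affine map sends the convex combination $\left(1-s\right)\cdot\sigma+s\cdot\rho$ to $\left(1-s\right)\cdot\Psi\left(\sigma\right)+s\cdot\Psi\left(\rho\right)$, so the one-sided directional derivative appearing in (\ref{eq:Regret}) transforms correctly. Applying this with $\Psi=\Psi_t$ gives $D_{F_t}\left(\rho,\sigma\right)=D_{F}\left(\Psi_t\left(\rho\right),\Psi_t\left(\sigma\right)\right)$ for all $\rho,\sigma\in\mathcal{K}$.

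The second step uses Lemma~\ref{lem:dilation}. Let $\Phi:\mathcal{K}\to\mathcal{K}$ be an arbitrary affinity. The state space of a spin factor is a Hilbert ball, hence centrally symmetric, so Lemma~\ref{lem:dilation} guarantees that $\Phi'=\Psi_t\circ\Phi\circ\Psi_t^{-1}$ again maps $\mathcal{K}$ into itself. Since $\Psi_t\circ\Phi=\Phi'\circ\Psi_t$ we then compute
\begin{align}
D_{F_t}\left(\Phi\left(\rho\right),\Phi\left(\sigma\right)\right)&=D_{F}\left(\Psi_t\Phi\left(\rho\right),\Psi_t\Phi\left(\sigma\right)\right)\\
&=D_{F}\left(\Phi'\Psi_t\left(\rho\right),\Phi'\Psi_t\left(\sigma\right)\right)\\
&\leq D_{F}\left(\Psi_t\left(\rho\right),\Psi_t\left(\sigma\right)\right)\\
&=D_{F_t}\left(\rho,\sigma\right),
\end{align}
where the inequality is monotonicity of $D_F$ applied to the affinity $\Phi'$ and the points $\Psi_t\left(\rho\right),\Psi_t\left(\sigma\right)$. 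This is exactly the asserted monotonicity of $D_{F_t}$.

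I do not expect a genuine obstacle here; the whole argument rests on Lemma~\ref{lem:dilation}, and the only points needing care are that the dilation factor must satisfy $t\leq 1$ for $\Psi_t$ to be an endomorphism of $\mathcal{K}$ and for Lemma~\ref{lem:dilation} to apply, and that the covariance identity for the regret function is verified directly from (\ref{eq:Regret}) rather than assumed only for smooth $F$. The conceptual content is that conjugating an arbitrary channel by a central dilation again produces a channel, which is precisely what transports the contraction property from $D_F$ to $D_{F_t}$.
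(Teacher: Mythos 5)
Your proof is correct and follows essentially the same route as the paper: both identify $D_{F_r}\left(\rho,\sigma\right)=D_{F}\left(\Psi_r\left(\rho\right),\Psi_r\left(\sigma\right)\right)$ and then apply monotonicity of $D_F$ to the conjugated affinity $\Psi_r\circ\Phi\circ\Psi_r^{-1}$, which Lemma~\ref{lem:dilation} guarantees maps $\mathcal{K}$ into itself. Your explicit verification of the covariance identity from the definition of the regret function is a small addition the paper leaves implicit, but the argument is otherwise identical.
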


\begin{proof}
We have 
\begin{align}
D_{F_{r}}\left(\rho,\sigma\right) & =D_{F}\left(\left(1-r\right)\cdot \boldsymbol{\frac{1}{2}}+r\cdot\rho,\left(1-r\right)\cdot \boldsymbol{\frac{1}{2}}+r\cdot\sigma\right)\\
 & =D_{F}\left(\Psi_{r}\left(\rho\right),\Psi_{r}\left(\sigma\right)\right)
\end{align}
where $\Psi_{r}$ denotes a dilation around $\boldsymbol{\frac{1}{2}}$ by a factor $r\in\left]0,1\right]$.
Let $\Phi$ denote an affinity of the state space into itself. Then
according to Lemma \ref{lem:dilation}
\begin{align}
D_{F_{r}} (\Phi\left(\rho\right)  &   ,\Phi\left(\sigma\right) )  =D_{F}\left(\Psi_{r}\left(\Phi\left(\rho\right)\right),\Psi_{r}\left(\Phi\left(\sigma\right)\right)\right)\\
 & =D_{F}\left(
\left(\Psi_{r}\circ\Phi\right)\left(\Psi_{r}^{-1}\circ\Psi_{r}\left(\rho\right)\right),
\left(\Psi_{r}\circ \Phi\right)
\left(\Psi_{r}^{-1}\circ\Psi_{r}\left(\sigma\right)\right)
\right)\\
 & =D_{F}\left(\left(\Psi_{r}\circ\Phi\circ\Psi_{r}^{-1}\right)\left(\Psi_{r}\left(\rho\right)\right),\left(\Psi_{r}\circ\Phi\circ\Psi_{r}^{-1}\right)\left(\Psi_{r}\left(\sigma\right)\right)\right)\\
 & \leq D_{F}\left(\Psi_{r}\left(\rho\right),\Psi_{r}\left(\sigma\right)\right)\\
 & =D_{F_{r}}\left(\rho,\sigma\right)\, ,
\end{align}
which proves the theorem.
\end{proof}

In \cite{Pitrik2015} joint convexity of Bregman divergences on complex density matrices was studied (see also \cite{Virosztek2017}).
\begin{thm}
The Bregman divergence $D_F$ given by $F(x)=\rm{tr}[f(x)]$ is jointly convex
if and only if $f$ has the form 
\begin{equation}
f\left(x\right)=a\left(x\right)+\frac{\gamma}{2}q\left(x\right)+\int_{0}^{\infty}e_{\lambda}\,\mathrm{d}\mu\left(\lambda\right)\label{eq:MatrixEntropyClass}
\end{equation}
 where $a$ is affine and 
\begin{equation}
q\left(x\right)=x^{2}
\end{equation}
 and 
\begin{equation}
e_{\lambda}\left(x\right)=\left(\lambda+x\right)\ln\left(\lambda+x\right)\,.
\end{equation}
\end{thm}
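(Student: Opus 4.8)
The plan is to deduce the statement from the known classification of jointly convex Bregman trace functions on complex matrices (\cite{Pitrik2015}, see also \cite{Virosztek2017}), first reducing joint convexity of $D_{F}$ to joint convexity of its second-order form. Combining Taylor's formula with Proposition \ref{prop:lokal}, using $\rho-\left(\left(1-s\right)\cdot\rho+s\cdot\sigma\right)=s\cdot\left(\rho-\sigma\right)$ together with (\ref{eq:Hesse}), one obtains
\begin{equation}
D_{F}\left(\rho,\sigma\right)=\int_{0}^{1}s\cdot\left\langle \rho-\sigma\left|\mathbf{H}\left(\left(1-s\right)\cdot\rho+s\cdot\sigma\right)\right|\rho-\sigma\right\rangle \,\mathrm{d}s\, ,
\end{equation}
while conversely $\left\langle h\left|\mathbf{H}\left(\omega\right)\right|h\right\rangle =\lim_{\varepsilon\to0}2\varepsilon^{-2}D_{F}\left(\omega+\varepsilon\cdot h,\omega\right)$. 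Since joint convexity is stable under pointwise limits and under precomposition with the linear substitution $\left(\rho,\sigma\right)\mapsto\left(\left(1-s\right)\cdot\rho+s\cdot\sigma,\rho-\sigma\right)$, it follows that $D_{F}$ is jointly convex if and only if the quadratic form $\left(\omega,h\right)\mapsto\left\langle h\left|\mathbf{H}\left(\omega\right)\right|h\right\rangle $ is jointly convex. For $F=\mathrm{tr}\left[f\left(\cdot\right)\right]$ the Daleckii--Krein formula for derivatives of operator functions identifies this form explicitly: in an eigenbasis of $\omega$ with eigenvalues $\omega_{i}$,
\begin{equation}
\left\langle h\left|\mathbf{H}\left(\omega\right)\right|h\right\rangle =\sum_{i,j}\left(f'\right)^{[1]}\left(\omega_{i},\omega_{j}\right)\left|h_{ij}\right|^{2}\, ,
\end{equation}
where $\left(f'\right)^{[1]}$ is the first divided difference of $f'$, equal to $\frac{f'\left(a\right)-f'\left(b\right)}{a-b}$ for $a\neq b$ and to $f''\left(a\right)$ for $a=b$.

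For the ``if'' direction I would check the three building blocks in (\ref{eq:MatrixEntropyClass}) separately, using that joint convexity survives nonnegative linear combinations and, by dominated convergence, the integral $\int_{0}^{\infty}\cdot\,\mathrm{d}\mu\left(\lambda\right)$. The affine part contributes $D_{a}\equiv0$. The quadratic part contributes $D_{\frac{\gamma}{2}q}\left(\rho,\sigma\right)=\frac{\gamma}{2}\,\mathrm{tr}\left[\left(\rho-\sigma\right)^{2}\right]$, which is jointly convex because $\rho-\sigma$ is linear and the Hilbert--Schmidt norm squared is convex. For the logarithmic block, write $E_{\lambda}\left(\rho\right)=\mathrm{tr}\left[e_{\lambda}\left(\rho\right)\right]=G\left(\lambda\cdot\boldsymbol{1}+\rho\right)$ with $G\left(\tau\right)=\mathrm{tr}\left[\tau\ln\tau\right]$; since $\rho\mapsto\lambda\cdot\boldsymbol{1}+\rho$ is a translation one gets $D_{E_{\lambda}}\left(\rho,\sigma\right)=D_{G}\left(\lambda\cdot\boldsymbol{1}+\rho,\lambda\cdot\boldsymbol{1}+\sigma\right)$, which is the unnormalised Umegaki relative entropy of $\lambda\cdot\boldsymbol{1}+\rho$ with respect to $\lambda\cdot\boldsymbol{1}+\sigma$ and hence jointly convex by Lieb's concavity theorem. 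As $f\mapsto D_{F}$ is linear, $D_{F}=\frac{\gamma}{2}D_{q}+\int_{0}^{\infty}D_{e_{\lambda}}\,\mathrm{d}\mu\left(\lambda\right)$ is a positive combination of jointly convex functions, hence jointly convex.

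For the ``only if'' direction one runs the reduction backwards. Joint convexity of $D_{F}$ forces joint convexity of $\left(\omega,h\right)\mapsto\sum_{i,j}\left(f'\right)^{[1]}\left(\omega_{i},\omega_{j}\right)\left|h_{ij}\right|^{2}$ in every matrix dimension. Restricting to $2\times2$ off-diagonal blocks one reads off, by a Schur complement computation, that $1/\left(f'\right)^{[1]}$ is jointly concave on $]0,\infty[^{2}$, and restricting to diagonal blocks that $1/f''$ is concave; the full strength of joint convexity in all dimensions is precisely the requirement that $1/\left(f'\right)^{[1]}$ be an operator-concave noncommutative perspective, and L\"owner--Nevanlinna representation theory for such divided-difference kernels forces
\begin{equation}
\left(f'\right)^{[1]}\left(s,t\right)=\gamma+\int_{0}^{\infty}\frac{\mathrm{d}\tilde{\mu}\left(\nu\right)}{\left(s+\nu\right)\left(t+\nu\right)}\, ,
\end{equation}
equivalently $f''\left(x\right)=\gamma+\int_{0}^{\infty}\frac{\mathrm{d}\mu\left(\lambda\right)}{x+\lambda}$, i.e. $f''$ is operator monotone decreasing on $]0,\infty[$. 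Integrating twice, using $\int\frac{\mathrm{d}x}{x+\lambda}=\ln\left(x+\lambda\right)$ and $\int\ln\left(x+\lambda\right)\,\mathrm{d}x=\left(x+\lambda\right)\ln\left(x+\lambda\right)-\left(x+\lambda\right)$, recovers (\ref{eq:MatrixEntropyClass}); the two constants of integration, together with the $\lambda$-dependent affine pieces produced when the integral over $\mu$ is pulled through, are absorbed into $a$, which is exactly the kernel of the map $f\mapsto D_{F}$.

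The main obstacle is this last direction, and within it the step from ``jointly convex in every dimension'' to the integral representation of the kernel $\left(f'\right)^{[1]}$: the two-dimensional tests yield only scalar joint concavity of $1/\left(f'\right)^{[1]}$, which is strictly weaker than what is needed, so one must exploit the higher-dimensional instances to upgrade this to operator concavity of the perspective and then invoke the L\"owner structure theorem --- this is the technical heart of \cite{Pitrik2015,Virosztek2017}. A secondary difficulty is the bookkeeping of the boundary behaviour of $\mu$ and $\tilde{\mu}$ at $0$ and $\infty$ (matching a point mass at $\infty$ with the coefficient $\gamma$, and mass near $0$ with an $e_{0}$-type contribution) together with the tracking of the affine ambiguity, so that the representation comes out in exactly the normalised form (\ref{eq:MatrixEntropyClass}).
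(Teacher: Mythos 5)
The paper does not actually prove this theorem: it is imported verbatim from the cited work of Pitrik and Virosztek (with the related characterisation of the matrix entropy class due to Hansen and Zhang), so there is no internal proof to compare against. Your sketch is a reconstruction of the argument of those references, and its architecture is correct. The reduction of joint convexity of $D_{F}$ to joint convexity of the quadratic form $\left(\omega,h\right)\mapsto\left\langle h\left|\mathbf{H}\left(\omega\right)\right|h\right\rangle$ via the integral formula of Proposition \ref{prop:lokal} in one direction and the second-order limit in the other is sound, as is the Daleckii--Krein identification of that form with the divided-difference kernel $\left(f'\right)^{[1]}$. The ``if'' direction is essentially complete: the three building blocks are handled correctly, with the $e_{\lambda}$ term reducing to joint convexity of the (unnormalised) relative entropy under the translation $\rho\mapsto\rho+\lambda\cdot\boldsymbol{1}$, exactly as the paper itself exploits in Equation (\ref{eq:Mix}); only the integrability bookkeeping for $\int\,\mathrm{d}\mu\left(\lambda\right)$ needs to be made explicit.

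The ``only if'' direction, however, is an outline rather than a proof, and you say so yourself. The step from ``the kernel form $\sum_{i,j}\left(f'\right)^{[1]}\left(\omega_{i},\omega_{j}\right)\left|h_{ij}\right|^{2}$ is jointly convex in every matrix dimension'' to the L{\"o}wner-type integral representation of $\left(f'\right)^{[1]}$ is precisely the content of the cited theorem; the $2\times2$ Schur-complement test only yields scalar joint concavity of $1/\left(f'\right)^{[1]}$, which is strictly weaker, and the upgrade to operator concavity of the associated perspective plus the representation theorem is asserted, not derived. So judged as a self-contained proof the proposal has a genuine gap at exactly the point you flag; judged as a reconstruction of the proof strategy behind the citation the paper relies on, it is accurate and somewhat more informative than the paper itself.
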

This result is related to the \emph{matrix entropy class} introduced in \cite{Chen2014}
and further studied in \cite{Hansen2015}. 
The function $q$ generates the Bregman divergence $D_{q}\left(\rho,\sigma\right)=\mathrm{tr}\left[\left(\rho-\sigma\right)^{2}\right]$
and the function $e_{\lambda}$ generates the Bregman divergence 
\begin{equation}
D_{e_{\lambda}}\left(\rho,\sigma\right)=D\left(\rho+\lambda\left\Vert \sigma+\lambda\right.\right).
\end{equation}
We note that
\begin{multline}
D\left(\rho+\lambda\left\Vert \sigma+\lambda\right.\right)=\\
\left(1+2\lambda\right)\cdot D\left(\frac{1}{1+2\lambda}\cdot\rho+\frac{2\lambda}{1+2\lambda}\cdot c\left\Vert \frac{1}{1+2\lambda}\cdot\sigma+\frac{2\lambda}{1+2\lambda}\cdot c\right.\right),\label{eq:Mix}
\end{multline}
which implies that $2\lambda\left(1+2\lambda\right)\cdot D_{e_{\lambda}}\left(\rho,\sigma\right)\to\mathrm{tr}\left[\left(\rho-\sigma\right)^{2}\right]$
so the Bregman divergence $D_{2}$ may be considered as a limiting
case. Now
\begin{equation}
D_{f}\left(\rho,\sigma\right)=\frac{\gamma}{2}\mathrm{tr}\left[\left(\rho-\sigma\right)^{2}\right]+\int_{0}^{\infty}D\left(\rho+\lambda\left\Vert \sigma+\lambda\right.\right)\,\mathrm{d}\mu\left(\lambda\right).\label{eq:integral}
\end{equation}
Note that the Bregman divergence of order $\alpha$ can be written in this way for $\alpha\in \left[1,2\right]$ .

\begin{thm}
Any Bregman divergence based on a function of the form (\ref{eq:MatrixEntropyClass}) is monotone on
spin factors.
\end{thm}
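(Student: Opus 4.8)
The plan is to reduce everything to the integral representation (\ref{eq:integral}),
\[
D_f(\rho,\sigma)=\frac{\gamma}{2}\,\mathrm{tr}[(\rho-\sigma)^2]+\int_0^\infty D(\rho+\lambda\,\|\,\sigma+\lambda)\,\mathrm{d}\mu(\lambda),
\]
which is legitimate because the affine summand $a$ in (\ref{eq:MatrixEntropyClass}) contributes nothing to a Bregman divergence. Since a nonnegative combination of monotone divergences is again monotone, and integration against the positive measure $\mu$ is such a combination, it suffices to show that $D_q(\rho,\sigma)=\mathrm{tr}[(\rho-\sigma)^2]$ is monotone and that $D_{e_\lambda}(\rho,\sigma)=D(\rho+\lambda\,\|\,\sigma+\lambda)$ is monotone for each fixed $\lambda\ge 0$.

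For the quadratic part, note that for states of a spin factor the difference $\rho-\sigma$ lies in the Hilbert-space part $\mathcal H$ and $\mathrm{tr}[(\rho-\sigma)^2]=2\|\rho-\sigma\|_2^2$, twice the squared Euclidean distance, which is monotone on a Hilbert ball: any affinity $\Phi$ of the ball into itself has the form $\vec v\mapsto A\vec v+\vec b$ with $\|A\|\le 1$ — apply $\Phi$ to $\vec v$ and to $-\vec v$ and use that the ball has diameter $2$ — so $\|\Phi(\rho)-\Phi(\sigma)\|_2=\|A(\rho-\sigma)\|_2\le\|\rho-\sigma\|_2$.

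The main step is the monotonicity of $D_{e_\lambda}$. By (\ref{eq:Mix}), $D_{e_\lambda}(\rho,\sigma)=(1+2\lambda)\,D(\Psi_r\rho\,\|\,\Psi_r\sigma)$, where $\Psi_r$ is the dilation around the center $\boldsymbol{\frac{1}{2}}$ with factor $r=\tfrac{1}{1+2\lambda}\in(0,1]$. Given an affinity $\Phi$ of the state space into itself, write $\Psi_r\circ\Phi=\Phi'\circ\Psi_r$ with $\Phi'=\Psi_r\circ\Phi\circ\Psi_r^{-1}$, which again maps the state space into itself by Lemma \ref{lem:dilation}. Since information divergence is monotone on spin factors (Theorem \ref{thm:InfDivMonoton}),
\[
D_{e_\lambda}(\Phi\rho,\Phi\sigma)=(1+2\lambda)\,D(\Phi'\Psi_r\rho\,\|\,\Phi'\Psi_r\sigma)\le(1+2\lambda)\,D(\Psi_r\rho\,\|\,\Psi_r\sigma)=D_{e_\lambda}(\rho,\sigma).
\]
Equivalently, one may invoke the theorem proved above stating that the class of monotone Bregman divergences on a spin factor is stable under the substitution $F\mapsto F_r$, applied to minus the von Neumann entropy.

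Putting the pieces together, for every affinity $\Phi$ of the state space into itself the inequality $D_f(\Phi\rho,\Phi\sigma)\le D_f(\rho,\sigma)$ follows term by term from (\ref{eq:integral}), using $\gamma\ge 0$ and positivity of $\mu$. The only substantial input is Theorem \ref{thm:InfDivMonoton}, i.e. monotonicity of quantum relative entropy on qubits (M{\"u}ller-Hermes and Reeb) transported to all spin factors via Lemma \ref{ReduktionTil2}; everything else is bookkeeping with the dilation identity (\ref{eq:Mix}) and Lemma \ref{lem:dilation}. I expect no real obstacle beyond keeping the affine and quadratic parts of $f$ cleanly separated from the entropic parts.
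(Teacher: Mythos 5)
Your argument is correct and follows essentially the same route as the paper, which simply cites the integral representation (\ref{eq:integral}), the dilation identity (\ref{eq:Mix}), and Theorem \ref{thm:InfDivMonoton}; you have merely filled in the bookkeeping (discarding the affine part, handling the quadratic term via the Euclidean metric on the Hilbert ball, and conjugating the dilation past $\Phi$ with Lemma \ref{lem:dilation}, equivalently the $F\mapsto F_r$ stability theorem). No gaps.
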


\begin{proof}
The result follows from Equation (\ref{eq:Mix}) and Equation (\ref{eq:integral})
in combination with Theorem \ref{thm:InfDivMonoton}.
\end{proof}

\section{Strict monotonicity}
\begin{defn}
 We say that a regret function is \emph{strictly monotone} if 
\begin{equation}
D_{F}\left(\Phi\left(\rho\right),\Phi\left(\sigma\right)\right)=D_{F}\left(\rho,\sigma\right)
\end{equation}
implies that $\Phi$ is sufficient for $\rho ,\sigma$ . 
\end{defn}
 In \cite{Harremoes2017a} it was proved that strict monotonicity implies monotonicity.  As we shall see in Theorem \ref{thm:strict_monoton} on convex bodies of rank 2 strictness and monotonicity is equivalent to strict monotonicity as long as the Bregman divergence is based on an analytic function.
\begin{lem}
\label{lem:identitet}
Let $\sigma$ denote a point in a convex body
$\mathcal{K}$ with a monotone Bregman divergence $D_{F}.$ If $\Phi:\mathcal{K}\to\mathcal{K}$
is an affinity then the set 
\begin{equation}
C=\left\{ \rho\in C\mid D_{F}\left(\Phi\left(\rho\right),\Phi\left(\sigma\right)\right)=D_{F}\left(\rho,\sigma\right)\right\} 
\end{equation}
 is a convex body that contains $\sigma.$ 
\end{lem}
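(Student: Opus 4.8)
The plan is to verify the three features required of a convex body: $\sigma\in C$, convexity of $C$, and compactness of $C$. The membership $\sigma\in C$ is immediate, since $D_{F}\left(\Phi\left(\sigma\right),\Phi\left(\sigma\right)\right)=0=D_{F}\left(\sigma,\sigma\right)$. Compactness reduces to closedness, as $C$ is a subset of the compact body $\mathcal{K}$; and closedness holds because, by monotonicity, $C$ is exactly the zero set of the continuous nonnegative function $\rho\mapsto D_{F}\left(\rho,\sigma\right)-D_{F}\left(\Phi\left(\rho\right),\Phi\left(\sigma\right)\right)$ (recall that a monotone regret function is a Bregman divergence, hence is generated by a differentiable convex $F$ and is continuous). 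So the real content of the lemma is convexity.

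For convexity I would combine the Bregman equation with the monotonicity inequality, in the same spirit as the known proof that a monotone Bregman divergence satisfies sufficiency. Fix $\rho_{0},\rho_{1}\in C$, let $t\in\left[0,1\right]$, and put $\rho_{t}=\left(1-t\right)\cdot\rho_{0}+t\cdot\rho_{1}$; since $\Phi$ is affine, $\Phi\left(\rho_{t}\right)=\left(1-t\right)\cdot\Phi\left(\rho_{0}\right)+t\cdot\Phi\left(\rho_{1}\right)$. Now apply the Bregman equation with the probability vector $\left(1-t,t\right)$ twice: once to $\rho_{0},\rho_{1}$ with barycentre $\rho_{t}$ inside $\mathcal{K}$, and once to $\Phi\left(\rho_{0}\right),\Phi\left(\rho_{1}\right)$ with barycentre $\Phi\left(\rho_{t}\right)$. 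Since $\rho_{0},\rho_{1}\in C$, the two resulting identities have the same left-hand side, hence the same right-hand side; rewriting that equality of right-hand sides gives
\[
\left(1-t\right)\cdot\delta_{0}+t\cdot\delta_{1}+\delta_{t}=0,
\]
where $\delta_{i}=D_{F}\left(\rho_{i},\rho_{t}\right)-D_{F}\left(\Phi\left(\rho_{i}\right),\Phi\left(\rho_{t}\right)\right)$ for $i=0,1$ and $\delta_{t}=D_{F}\left(\rho_{t},\sigma\right)-D_{F}\left(\Phi\left(\rho_{t}\right),\Phi\left(\sigma\right)\right)$. By monotonicity each of $\delta_{0},\delta_{1},\delta_{t}$ is nonnegative, so all three vanish; in particular $\delta_{t}=0$, which says precisely that $\rho_{t}\in C$. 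Thus $C$ is convex, and together with the first paragraph this shows that $C$ is a convex body containing $\sigma$.

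I do not anticipate a real obstacle. The only step with any content is writing the Bregman equation twice and comparing, and since that equation holds for arbitrary finite mixtures the two-point case needs nothing extra. The one mildly delicate point is the closedness of $C$: if one prefers not to invoke continuity of $D_{F}$ up to the boundary of $\mathcal{K}$, one can instead deduce closedness from lower semicontinuity of $D_{F}$ together with the reverse bound furnished by the Bregman equation, just as in the projection lemma proved earlier in the paper, but this is a technicality rather than the heart of the argument.
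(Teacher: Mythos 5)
Your convexity argument is essentially identical to the paper's: both apply the Bregman equation to the mixture $\bar\rho=(1-t)\rho_0+t\rho_1$ before and after $\Phi$, use $\rho_0,\rho_1\in C$ to equate the two sides, and then invoke monotonicity to force each of the three nonnegative gaps (in particular $D_F(\rho_t,\sigma)-D_F(\Phi(\rho_t),\Phi(\sigma))$) to vanish. Your additional remarks on $\sigma\in C$ and on closedness address the ``convex body'' wording more completely than the paper, whose proof establishes only convexity; this is a welcome but minor supplement rather than a different route.
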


\begin{proof}
Assume that $\rho_{0},\rho_{1}\in C$ and $t\in\left[0,1\right]$
and $\bar{\rho}=\left(1-t\right)\cdot\rho_{0}+t\cdot\rho_{1}.$ Then
according to the Bregman identity
\begin{multline}
\left(1-t\right)\cdot D_{F}\left(\Phi\left(\rho_{0}\right),\Phi\left(\sigma\right)\right)+t\cdot D_{F}\left(\Phi\left(\rho_{1}\right),\Phi\left(\sigma\right)\right)\\
=\left(1-t\right)\cdot D_{F}\left(\Phi\left(\rho_{0}\right),\Phi\left(\bar{\rho}\right)\right)+t\cdot D_{F}\left(\Phi\left(\rho_{1}\right),\Phi\left(\bar{\rho}\right)\right)+D_{F}\left(\Phi\left(\bar{\rho}\right),\Phi\left(\sigma\right)\right)\\
\leq\left(1-t\right)\cdot D_{F}\left(\rho_{0},\bar{\rho}\right)+t\cdot D_{F}\left(\rho_{1},\bar{\rho}\right)+D_{F}\left(\bar{\rho},\sigma\right)\\
=\left(1-t\right)\cdot D_{F}\left(\rho_{0},\sigma\right)+t\cdot D_{F}\left(\rho_{1},\sigma\right).
\end{multline}
Therefore the inequality must hold with equality and 
\begin{equation}
D_{F}\left(\Phi\left(\bar{\rho}\right),\Phi\left(\sigma\right)\right)=D_{F}\left(\bar{\rho},\sigma\right),
\end{equation}which proves the lemma.
\end{proof}

\begin{thm}\label{thm:strict_monoton}
Let $D_{F}$  denote a monotone Bregman divergence that is strict on a spin factor
based on an analytic function $f$. Then $D_{F}$ is strictly monotone.
\end{thm}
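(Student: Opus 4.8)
The plan is to show that the equality $D_{F}\left(\Phi(\rho),\Phi(\sigma)\right)=D_{F}\left(\rho,\sigma\right)$ forces $\Phi$ to behave, along the whole chord of the Hilbert ball through $\rho$ and $\sigma$, like a length‑preserving affinity that carries pure states to pure states, and then to read off a recovery map from the orthogonal symmetry of the ball. A trivial case first: if $\Phi(\rho)=\Phi(\sigma)$ then $D_{F}(\rho,\sigma)=D_{F}(\Phi(\rho),\Phi(\sigma))=0$, so strictness gives $\rho=\sigma$ and a constant affinity recovers the pair; hence assume $\Phi(\rho)\neq\Phi(\sigma)$, in particular $\rho\neq\sigma$, and after an approximation assume $\rho,\sigma$ are interior points.

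The first real step is to upgrade the single equality to a pointwise identity of second‑order divergences along the segment. By Lemma~\ref{lem:identitet} the set $\{\rho'\mid D_{F}(\Phi(\rho'),\Phi(\sigma))=D_{F}(\rho',\sigma)\}$ is a convex body containing $\sigma$ and $\rho$, hence contains the whole segment $\gamma_{r}=(1-r)\cdot\sigma+r\cdot\rho$. Applying this equality on every subsegment $[\sigma,\gamma_{r}]$, writing $D_{F}$ as the integral of $D^{F}/s$ from Proposition~\ref{prop:lokal} and using monotonicity of $D^{F}$ (Lemma~\ref{lem:lokal}, applicable because $f$ analytic makes $F$ twice differentiable), the integrand inequalities must all be equalities; dividing by the square of the parameter difference this yields
\[
\langle Av\mid\mathbf{H}(\Phi(\gamma_{r}))\mid Av\rangle=\langle v\mid\mathbf{H}(\gamma_{r})\mid v\rangle\qquad\text{for all }r\in[0,1],
\]
where $v=\rho-\sigma$, $A$ is the linear part of $\Phi$, and $\mathbf{H}$ is the Hessian of $F$.

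The second step uses analyticity. On a spin factor $F(\sigma)=\mathrm{tr}\left[f(\sigma)\right]$ is a symmetric function of the two eigenvalues $s\pm\|\vec v\|$, so analyticity of $f$ makes $F$, hence $\mathbf{H}$, real‑analytic on the interior of the state space, and both sides of the displayed identity are real‑analytic in $r$. An analytic identity that holds on $[0,1]$ holds on the whole interval $(a,b)$ of parameters for which $\gamma_{r}$ stays in the interior of the ball, and $\pi_{0}=\gamma_{a}$, $\pi_{1}=\gamma_{b}$ are pure states. Comparing the two sides as $r\to a^{+}$: the chord meets the boundary sphere transversally at $\pi_{0}$, so the radial eigenvalue $\phi''$ of $\mathbf{H}(\gamma_{r})$ controls the behaviour of the right–hand side near $a$; matching this behaviour on the left forces $\Phi(\gamma_{r})$ to reach the boundary at the same parameter value, so $\Phi(\pi_{0})$ is again a pure state, and comparison of the leading coefficients there gives $\|Av\|=\|v\|$. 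The same analysis at $r=b$ gives that $\Phi(\pi_{1})$ is a pure state, so $\Phi$ carries the chord $[\pi_{0},\pi_{1}]$ onto the chord $[\Phi(\pi_{0}),\Phi(\pi_{1})]$, preserving the affine parameter and, by $\|Av\|=\|v\|$, the length.

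Finally, two chords of a Hilbert ball with endpoints on the sphere and equal length have equal distance to the center, so the configurations $\{\boldsymbol{\tfrac{1}{2}},\Phi(\pi_{0}),\Phi(\pi_{1})\}$ and $\{\boldsymbol{\tfrac{1}{2}},\pi_{0},\pi_{1}\}$ are congruent; choose an orthogonal transformation $O$ of the Hilbert space (a bijective affinity of the ball fixing the center) with $O\Phi(\pi_{0})=\pi_{0}$ and $O\Phi(\pi_{1})=\pi_{1}$. Since $O\circ\Phi$ is affine and agrees with the identity at both ends of the chord, it fixes the whole chord, in particular $O\Phi(\rho)=\rho$ and $O\Phi(\sigma)=\sigma$; thus $\Psi:=O$ is an affinity $\mathcal{K}\to\mathcal{K}$ whose composition with $\Phi$ fixes $\rho$ and $\sigma$, i.e. $\Phi$ is sufficient for $(\rho,\sigma)$, which is strict monotonicity. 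I expect the middle step—turning the segment‑level Hessian identity into the rigidity statement that $\Phi$ is an isometry of the chord sending its ends to pure states—to be the main obstacle, both because it demands a careful analysis of how $\mathbf{H}$ degenerates towards the boundary sphere and because it is precisely here that analyticity of $f$ is indispensable, ruling out equality that holds along the segment by accident without propagating off it.
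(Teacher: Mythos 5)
Your overall route is the same as the paper's: use Lemma \ref{lem:identitet} together with the integral representation of Proposition \ref{prop:lokal} to upgrade the single equality to a pointwise identity for $D^{F}$ along the chord, invoke analyticity to push that identity out to the chord's intersection with the boundary, conclude that $\Phi$ sends the two pure endpoints to pure states at the correct distance from the center, and finish with a rotation as the recovery map. Your handling of the first step (pointwise equality of the integrands from monotonicity of $D^{F}$ plus equality of the integrals) and of the last step (observing that $O\circ\Phi$ is affine and agrees with the identity at both chord endpoints, hence on the whole chord) is if anything more explicit than the paper's.

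The genuine gap is exactly the step you flag as the main obstacle, and it is the same step the paper passes over with the bare assertion that the set of $s$ with $\left(1-s\right)\rho+s\sigma\geq0$ coincides with the set of $s$ with $\left(1-s\right)\Phi\left(\rho\right)+s\Phi\left(\sigma\right)\geq0$. Your proposed mechanism --- that the radial eigenvalue of $\mathbf{H}$ degenerates at the boundary sphere, so the blow-up loci of the two sides of the analytic identity must match --- works only when $f''$ is unbounded near $0$ (as for $x\ln x$ or $x^{\alpha}$ with $1<\alpha<2$). It fails for $f\left(x\right)=x^{2}$, which is analytic, convex, symmetric, and generates a strict and monotone Bregman divergence, yet has constant Hessian: there the identity $\left\langle Av\mid\mathbf{H}\mid Av\right\rangle =\left\langle v\mid\mathbf{H}\mid v\right\rangle$ says only $\left\Vert Av\right\Vert =\left\Vert v\right\Vert$ and carries no information about where the image chord meets the boundary. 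Concretely, for $F\left(\sigma\right)=\mathrm{tr}\left[\sigma^{2}\right]$ on the disc, the projection $\Phi\left(x,y\right)=\left(0,y\right)$ onto a diameter preserves $D_{F}\left(\rho,\sigma\right)$ for $\rho=\left(a,h\right)$ and $\sigma=\left(a,-h\right)$, but no affinity of the disc into itself can send $\left(0,\pm h\right)$ back to $\left(a,\pm h\right)$ when $a\neq0$, since any such affinity would have to send $\left(0,1\right)$ to a point of norm $\left(a^{2}+1\right)^{\nicefrac{1}{2}}>1$. So the boundary-matching step cannot be derived from the stated hypotheses alone; it requires an additional assumption on $f$ (for instance $f''\left(0_{+}\right)=\infty$), under which your degeneration argument can be completed. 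Apart from this point --- which is a defect of the theorem as stated, shared by the paper's own proof --- your argument and the paper's coincide.
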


\begin{proof}
Assume that $D_F$ is monotone and that
\begin{equation}\label{eq:lighed}
D_{F}\left(\Phi\left(\rho\right),\Phi\left(\sigma\right)\right)=D_{F}\left(\rho,\sigma\right).
\end{equation}
Let $\rho_{0}$ and $\rho_{1}$ denote extreme points 
such that $\rho$ and $\sigma$ lie on the line segment between $\rho_{0}$
and $\rho_{1}.$ 
Lemma \ref{lem:lokal} implies that 
\begin{equation}\label{eq:LokaltStreng}
D^{F}\left(\Phi\left(\rho\right),\Phi\left(\sigma_{t}\right)\right)=D^{F}\left(\rho,\sigma_{t}\right)
\end{equation} 
for all $s\in\left[0,1\right]$ where $\sigma_{s}=\left(1-s\right)\cdot\rho+s\cdot\sigma$ . Since $f$ is assumed to be analytic
the identity (\ref{eq:LokaltStreng}) must hold for all $t$ for which $\left(1-s\right)\cdot\rho+s\cdot\sigma\geq0$
and this set of values of $s$ coincides with set  of values for which
$\left(1-s\right)\cdot\Phi\left(\rho\right)+s\cdot\Phi\left(\sigma\right)\geq 0\, .$ 
The identity (\ref{eq:LokaltStreng}) also holds if $\rho$ is replaced by any point $\rho '$ on the line segment between $\rho_{0}$ 
and $\rho_{1}$ because both sides of Equation (\ref{eq:LokaltStreng}) are quadratic functions in the first variable. 
Using Proposition \ref{prop:lokal} we see that Equation (\ref{eq:lighed}) can be extended to any pair of points on the line segment between $\rho_{0}$ and $\rho_{1}\, .$ In particular
\begin{equation}
D_{F}\left(\Phi\left(\rho_{i}\right),\Phi\left(\bar{\rho}\right)\right)=D_{F}\left(\rho_{i},\bar{\rho}\right)
\end{equation}
for $i=0,1$ and $\bar{\rho}=\frac{1}{2}\cdot\rho_{0}+\frac{1}{2}\cdot\rho_{1}\, .$ Since both $\rho_i$ and $\Phi\left(\rho_i \right)$ are extreme points we have 
\begin{equation}
D_{F}\left(\Phi\left(\rho_{i}\right),\boldsymbol{\frac{1}{2}}\right)=D_{F}\left(\rho_{i},\boldsymbol{\frac{1}{2}}\right)
\end{equation}
we have $D_{F}\left(\bar{\rho},\boldsymbol{\frac{1}{2}}\right)=D_{F}\left(\Phi\left(\bar{\rho}\right),\boldsymbol{\frac{1}{2}}\right)\, .$
Therefore the points $\bar{\rho}$ and $\Phi\left(\bar{\rho}\right)$ have the
same distance to the center $\boldsymbol{\frac{1}{2}}.$ Therefore there exists a rotation $\Psi$ that
maps $\Phi\left(\rho_{i}\right)$ into $\rho_{i}$. Since $\Psi$
is a recovery map of the states $\rho_{i}$ it is also a recovery
map of $\rho$ and $\sigma\, .$
\end{proof}

An affinity in a Hilbert ball has a unique extension to a positive
trace preserving map in the corresponding spin factor. Here we shall
study such maps with respect to existence of recovery maps and with
respect to monotonicity of Bregman divergences. Let $\Phi$ denote
a positive trace preserving map of $JSpin_{d}$ into itself. Then
the adjoint map $\Phi^{*}$ is defined by 
\begin{equation}
\left\langle \Phi^{*}\left(x\right),y\right\rangle =\left\langle x,\Phi\left(y\right)\right\rangle .
\end{equation}
If $\Phi\left(\sigma\right)$ is not singular then we may define 
\begin{equation}
\Psi\left(\rho\right)=\sigma^{\nicefrac{1}{2}}\Phi^{*}\left(\left(\Phi\left(\sigma\right)\right)^{-\nicefrac{1}{2}}\rho\left(\Phi\left(\sigma\right)\right)^{-\nicefrac{1}{2}}\right)\sigma^{\nicefrac{1}{2}}.
\end{equation}
We observe that $\Psi\left(\Phi\left(\sigma\right)\right)=\sigma.$
If $\Phi$ is an isomorphism then $\Phi\left(y\right)=O^{*}yO$ where
$O$ is an orthogonal map on $JSpin_{d}$ as a Hilbert space. Then
\begin{align}
\left\langle \Phi^{*}\left(x\right),y\right\rangle  & =\left\langle x,\Phi\left(y\right)\right\rangle \\
 & =\mathrm{tr}\left[xO^{*}yO\right]\\
 & =\mathrm{tr}\left[OxO^{*}y\right]\\
 & =\left\langle OxO^{*},y\right\rangle 
\end{align}
so that $\Phi^{*}\left(x\right)=OxO^{*}.$ Then
\begin{align}
\Psi\left(\Phi\left(\rho\right)\right) 
& =\sigma^{\nicefrac{1}{2}}\Phi^{*}\left(\Phi\left(\sigma^{-\nicefrac{1}{2}}\right)\Phi\left(\rho\right)\Phi\left(\sigma^{-\nicefrac{1}{2}}\right)\right)\sigma^{\nicefrac{1}{2}}\\
 & =\sigma^{\nicefrac{1}{2}}O\left(\left(O^{*}\left(\sigma^{-\nicefrac{1}{2}}\right)O\right)O^{*}\rho O\left(O^{*}\left(\sigma^{-\nicefrac{1}{2}}\right)O\right)\right)O^{*}\sigma^{\nicefrac{1}{2}}\\
 & =\rho\, .
\end{align}
Therefore $\Psi$ is a recovery map. This formula extends to any $\rho$
for which there exists a recovery map because $\Phi$ is an isomorphism
between two sections of the state space that contain $\rho$ and $\sigma\, .$


\subsubsection{Acknowledgement} I would like to thank Howard Barnum for pointing my attention to the
notion of pairs of sections and retractions that proved to be very useful in
stating and proving results on this topic. I would also like to thank two anonymous reviewers for their careful reading and their useful comments.


\end{document}